\newif\iffullversion
\newif\ifdraft
\newif\ifanonymous
\newcommand\qrhlautoref[1]{\cite[\autoref{qrhlarxiv:#1}]{qrhl-arxiv-v2}}
\let\norulelemmalink1
\newcommand\crossWire[2]{
  \draw[quantumWire] (\getWireCoord{#1}) to (\getWireCoord{#2} -| \currentXPos);
}
\tikzset{killWire/.style={%
    gate={#1},
    draw=none,
    fill,
    inner sep=0pt,
    minimum width=.5mm, minimum height=3mm
}}
\tikzset{boxAround/.style={
    draw, dotted,
  }}
\tikzset{boxAroundLabeled/.code={
      \pgfkeysalso{boxAround}
      \pgfkeysalso{
        append after command={node[anchor=south west, inner xsep=0pt, inner ysep=1pt,
                                   name={\@previous@nodeName-label},
                                   at={(\@previous@nodeName.north west)}] {\tiny #1}}}
      \appto\tikz@atend@node{\xdef\@previous@nodeName{\tikz@fig@name}}%
    }
  }
\newcommand{\mathcolorbox}[2]{\mathchoice%
  {\colorbox{#1}{$\displaystyle#2$}}%
  {\colorbox{#1}{$\textstyle#2$}}%
  {\colorbox{#1}{$\scriptstyle#2$}}%
  {\colorbox{#1}{$\scriptscriptstyle#2$}}}%
\def\symbolindexmarkhighlight#1{%
  \setlength{\fboxsep}{2pt}%
  {\TextOrMath{\colorbox{gray!20}{#1}}{\mathcolorbox{gray!20}{#1}}}}
\newcommand\fullshort[2]{#1}
\newcommand\fullshort[2]{#2}
\newcommand\notanonymous{\ifanonymous \TODOQ{This is not anonymous}\fi}
\newcommand\fullonly[1]{\fullshort{#1}{}}
\newcounter{claimstep}
\newtheorem{claim}{Claim}[claimstep]
\begin{document}

\title{Local Variables and Quantum Relational Hoare Logic}
\author{Dominique Unruh\\\small University of Tartu}

\maketitle

\ifdraft
\begin{center}
  \bfseries\Huge\fboxsep=10pt
  \framebox{THIS IS A DRAFT}
\end{center}
\fi

\begin{abstract}
  We add local
  variables to quantum relational Hoare logic (Unruh, POPL 2019). We
  derive reasoning rules for supporting local variables (including an
  improved ``adversary rule''). We extended the \texttt{qrhl-tool} for
  computer-aided verification of qRHL to support local variables and
  our new reasoning rules.
\end{abstract}

\tableofcontents

\section{Introduction}

In this work, we add local variables to the programming language
underlying the quantum relational Hoare logic (qRHL) from \cite{qrhl},
develop some reasoning rules related to this change, and added support
for our extensions to the \texttt{qrhl-tool} \cite{qrhl-tool} that
enables computer-verified reasoning in qRHL.

qRHL is a logic that
allows us to establish pre- and postconditions of pairs of quantum
programs, thereby reasoning about the relationship between those two
programs. (E.g., in the simplest case, establish that they do the same
thing.) qRHL was designed with security proofs for quantum
cryptography in mind, following the example of probabilistic
relational Hoare logic (pRHL) \cite{certicrypt} using in the EasyCrypt tool
for classical security proofs.

To understand the motivation for and challenges in adding local
variables, we first explain a bit of the background and motivation behind qRHL:

\paragraph{Post-quantum security.} %
Quantum computers have long been known to be a potential threat to
cryptographic protocols, in particular public key encryption.
Shor's algorithm \cite{Shor:1994:Algorithms} allows us to efficiently
solve the integer factorization and discrete logarithm problems, thus
breaking RSA and ElGamal and variants thereof. 
This breaks all commonly
used public key encryption and signature schemes.
Of course, as of today, there are no quantum computers that even come
close to being able to execute Shor's algorithm on reasonable problem
sizes.
Yet, there is constant progress towards larger and more powerful
quantum computers (see, e.g., the recent breakthrough by Google
\cite{arute19supremacy}). 
In light of this, it is likely that quantum computers will be able to
break today's public key encryption and signature schemes\fullonly{ (and possibly other
kinds of cryptosystems)} in the foreseeable future.
Since the development, standardization, and industrial deployment of a
cryptosystem can take many years, we need to develop and analyze
future post-quantum secure protocols already today.
One important step in this direction is the NIST post-quantum
competition \cite{nist-pqc} that will select a few post-quantum public-key
encryption and signature schemes for industrial standardization.

\paragraph{Verifying classical cryptography using pRHL.}
Cryptographic security proofs tend to be complex, and, due to their
complexity, error prone.
Small mistakes in a proof can be difficult to notice and may
invalidate the whole proof.
For example, the proof of the
OAEP construction \cite{BeRo_94} went through a number of
fixes \cite{JC:Shoup02,C:FOPS01,JC:FOPS04} until it was finally
formally proven in \cite{RSA:BGLZ11} after years of industrial use.
The
PRF/PRP switching lemma was a standard textbook example for many
years before it was shown that the standard proof is flawed
\cite{EC:BelRog06}.
And more recently, an attack on the ISO standardized blockcipher mode
OCB2 \cite{ocb2-iso} was found \cite{inoue19ocb2}, even though OCB2 was
believed to be proven secure by \cite{rogaway04tweakable}.

While a rigorous and well-structured proof style
(e.g., using sequences of games as advocated in \cite{EC:BelRog06,EPRINT:Shoup04})
can reduce the potential for hidden errors and imprecisions,
it is still very hard to write a proof that is 100\% correct.
And especially if a mistake in a proof happens in a step that seems very intuitive,
it is quite likely that the mistake will also not be spotted by a reader.

To avoid this, formal (computer-aided) verification can be
employed. Typically, a formal version of the sequences-of-games
approach is used. In this approach, roughly speaking, the security of
a cryptographic scheme is represented by the probability that a
certain event happens in a certain program (encoding both the
adversary and the scheme), and then this game is rewritten
step-by-step, and on each step, it is shown that the old and new game
stand in some relationship, until a final game is reached for which
determining the probability of the event of interest is trivial to
bound.

A number of frameworks/tools use this approach for verifying classical
cryptography\fullshort{:
CryptoVerif \cite{cryptoverif},
CertiCrypt \cite{certicrypt}, EasyCrypt \cite{easycrypt}, FCF
\cite{FCF}, CryptHOL \cite{crypthol}, and Verypto
\cite{verypto}}{, e.g., EasyCrypt \cite{easycrypt}}. 
\fullshort{CryptoVerif tries to automatically determine a sequence of games by
using a set of fixed rewriting rules for games.
This has the advantage of reducing user effort, but it also means that
the framework is more limited in terms of what game transformations
are possible.
In contrast, the other frameworks require}{EasyCrypt requires} the user to explicitly
specify the games that constitute the security proof (as is done in a
pen-and-paper proof),
and to additionally provide justification for the fact that two
consecutive games are indeed related as claimed.
This justification will often be considerably more detailed than in a
pen-and-paper proof where the fact that two slightly different games
are equivalent will often be declared to be obvious.

\fullshort{One}{Their} approach for proving the relationship of consecutive games is to
give a proof in relational Hoare logic. 
Relational Hoare logic is a logic that allows us to express the
relationship between two programs by specifying a relational
precondition and a relational postcondition. 
A relational Hoare judgment of the form $\rhl\PA\bc\bd\PB$ intuitively
means that
if the variables of the programs $\bc$ and $\bd$ are related as
described by the precondition $\PA$ before execution, and we execute
$\bc$ and $\bd$,
then afterwards their variables will be related as described by $\PB$.
A very simple example would be $\rhl{x_1\leq x_2}{\assign
  x{x+1}}{\assign x{x+1}}{x_1\leq x_2}$.
This means that is the variable $x$ in the left program is
smaller-equal than in right one,
and both programs increase $x$, then $x$ in the left program will
still be smaller-equal than in the right one.
As this example shows, relational Hoare logic can express more
complex relationships than simple equivalence of two games.
This makes the approach very powerful.
To reason about cryptography, one needs a variant of relational Hoare
logic that supports probabilistic programs. Such a probabilistic
relational Hoare logic (pRHL) was developed for this purpose by
Barthe, Grégoire, and Zanella Béguelin \cite{certicrypt}.
\fullshort{Both CertiCrypt \cite{certicrypt} and its popular successor EasyCrypt
use}{EasyCrypt uses} pRHL for proving the relationship between cryptographic games.

\paragraph{Verifying quantum cryptography using qRHL.}
If we wish to follow the EasyCrypt approach to verify security proofs
of quantum cryptographic schemes (be it actual quantum protocols, or
merely post-quantum secure schemes that withstand quantum attacks), we
cannot use pRHL but need a logic that allows us to reason about
quantum programs, i.e., programs that can operate on quantum
data. Such a logic was proposed in \cite{qrhl}, namely quantum
relational Hoare logic (qRHL). Inspired by qPRHL, this logic allows us
to write judgments of the form $\rhl\PA\bc\bd\PB$ which mean,
informally, that if the predicate $\PA$ is satisfied by a pair of
quantum memories $M_1,M_2$, and we execute the quantum programs
$\bc,\bd$ on those memories, then $M_1,M_2$ satisfy $\PB$ afterwards.
Since the quantum memories do not contain classical values, the
predicates $\PA,\PB$ are not predicates in the classical sense. We
will make this more formal later, for now it is sufficient to
understand that those predicate can express conditions both about the
classical and quantum variables in the memories $M_1,M_2$, analogously
to what is done in qRHL. (E.g., state that they are equal.)

It was argued in \cite{qrhl} that qRHL is suitable for reasoning about
cryptography in the quantum setting. To demonstrate this, they
developed a tool for computer-aided verification of proofs in qRHL
(called the \texttt{qrhl-tool} henceforth), and did several example
verifications, such as a verification of quantum teleportation and one
of the post-quantum security of a very simple encryption scheme.
However, those were toy examples only, and did not shed light on the
scalability of the approach.  To resolve this issue,
\cite{pqfo-verify} attempted formal computer-aided verification of a
non-trivial post-quantum cryptographic proof of a state-of-the-art
construction of an encryption scheme (a variant of the
Fujisaki-Okamoto transform \cite{IEICE2000:FujisakiO} analyzed in
\cite{hovelmanns20generic}).  This verification was performed using
the \texttt{qrhl-tool}.  The upshot of that case study was that, in
principle, qRHL is suitable for analyzing more complex cryptographic
schemes, but several limitations were identified. One of them
concerned the absence of support for local variables in qRHL.  It
turned out that without support for local variables, formalizing the
whole proof was at least very difficult.\footnote{%
  Changes in one subproof tended to need a refactoring of most other
  subproofs (affecting the variables the other subproofs talked
  about).  And this refactoring then required new changes in other
  subproofs.  It was not clear whether this process would end
  eventually (without the additions to the logic introduced in the
  present work). On the other hand, seen separately, each subproof
  seemed easy to finish. (Which is why the toy examples from
  \cite{qrhl} did not uncover this difficulty.)}  These difficulties
prompted the developments in the present paper, both theoretical ones
and extensions of \texttt{qrhl-tool} (which in turn were used to
finish the proofs in \cite{pqfo-verify}).

\paragraph{Local variables.} We first explain the reasons why
\cite{qrhl} did not include local variables in the definition of the
languages. By a local variable we mean a variable such that any
changes of their values are limited to a specific scope. That is, a
read/write operation on a local variable has no observable effect
outside the scope of the local variable. At a first glance, it may
seem that it is easily possible to simulate local variables even if
the language does not have builtin support for them. Namely, any
procedure or program fragment that is supposed to use a local variable
can use a global variable instead, as long as we ensure that no such
``pseudo-local'' variable is used in more than one program
fragment.\footnote{For example, we could adopt a disciplined naming
  strategy that prefixes local variables with the names of the
  procedures they are used in, and to initialize all local variables
  before use. This would make sure that local variables are never
  accessed outside their intended scopes. (In the presence of
  recursion this would not work because recursive invocations would
  access the same variables at the same time. But the language of qRHL
  does not support recursion anyway.)} While this approach is less
convenient than having built-in support for local variables, it would
seem sufficient at least for handling small programs. (In the case
study \cite{pqfo-verify}, the programs tend to be a few dozen lines at
most, even including the subprocedures they invoke.) For this reason,
qRHL as defined in \cite{qrhl} (and thus also \texttt{qrhl-tool}) did
not include support for local variables in order to keep the language
and logic simple.

However, as the case study \cite{pqfo-verify} revealed, this argument
is not correct when quantum variables are involved. (It probably works
well for classical programs, i.e., for pRHL.) To understand why, we
first look at an extremely simple example how reasoning with
pseudo-local variables would work with classical programs (i.e., in
pRHL). Say $\xx$ is a global variable and $\yy$ is a pseudo-local
variable (i.e., that is never accessed by other programs).  Let
$\bc_1:=\bd_1:=\paren{\assign\yy{f(\xx)}}$ be a program that assigned
a function of $\xx$ to $\yy$..  We want to express in pRHL that the
two programs have the same observable behavior.  We express this as
$\rhl{\xx_1=\xx_2}{\bc_1}{\bd_1}{\xx_1=\xx_2}$, i.e., if $\xx$ is the
same before execution of $\bc_1$ or $\bd_1$, $\xx$ is the same afterwards.
(This is very trivial since $\bc_1=\bd_1$, but as we will see, in the
quantum setting, even this seemingly trivial case leads to problems.)
One way to prove this is to show that
$\rhl{\xx_1=\xx_2}{\bc_1}{\bd_1}{(\xx_1,\yy_1)=(\xx_2,\yy_2)}$ holds (we do
not need to include $\yy_1,\yy_2$ in the precondition since $\yy$ is
overwritten in both programs), and then use the fact that
$(\xx_1,\yy_1)=(\xx_2,\yy_2)\implies \xx_1=\xx_2$ to conclude
$\rhl{\xx_1=\xx_2}{\bc_1}{\bd_1}{\xx_1=\xx_2}$.

Now consider an analogous
example involving quantum variables.  Say $\qq$ is a global quantum
variable and $\rr$ is pseudo-local.  Let
$\bc_2:=\bd_2:=\paren{\Qinit\rr{\basis{}0};\ \Qapply U{\qq\rr}}$. (That
is, $\rr$ is initialized with a fixed state $\basis{}0$, and then the
unitary $U$ is jointly applied to $\qq,\rr$. E.g., $U$ could be a
CNOT.)  Again, we want to show that $\bc$ and $\bd$ have the same
observable behavior (given access only to $\qq$).  This can be
expressed in qRHL
$\rhl{\qq_1\quanteq\qq_2}{\bc_2}{\bd_2}{\qq_1\quanteq\qq_2}$.  Here $\quanteq$
is the quantum equality introduced in \cite{qrhl}, intuitively it
expresses that two variables (or two tuples of variables) have the
same value. We try to follow the same approach as in the classical
case.  Using the reasoning rules from \cite{qrhl}, it is easy to show
that $\rhl{\qq_1\quanteq\qq_2}{\bc_2}{\bd_2}{\qq_1\rr_1\quanteq\qq_2\rr_2}$
holds.  Then, if $\qq_1\rr_1\quanteq\qq_2\rr_2$ (meaning that $\qq\rr$
in memory $M_1$ jointly are equal in content to $\qq\rr$ in $M_2$)
would imply $\qq_1\quanteq\qq_2$, we could conclude
$\rhl{\qq_1\quanteq\qq_2}{\bc_2}{\bd_2}{\qq_1\quanteq\qq_2}$ Unfortunately,
the quantum equality is more peculiar than the classical one.
$\qq_1\rr_1\quanteq\qq_2\rr_2$ is not the same as
$\qq_1\quanteq\qq_2 \cap \rr_1\quanteq\rr_2$ (intersection $\cap$ is
the analogue of conjuction $\land$ for quantum predicates) and does
not imply $\qq_1\quanteq\qq_2$.\footnote{The converse holds:
  $\qq_1\quanteq\qq_2 \cap \rr_1\quanteq\rr_2$ implies
  $\qq_1\rr_1\quanteq\qq_2\rr_2$.}  This is because a quantum equality
$\QQ_1\quanteq\QQ_2$ not only implies that $\QQ_1$ and $\QQ_2$ have
the same content but that also that $\QQ_1$ and $\QQ_2$ are not
entangled with any other variables. Thus $\qq_1\quanteq\qq_2$ would
imply that $\qq_1,\qq_2$ are not entangled with anything else but that
is not implied by $\qq_1\rr_1\quanteq\qq_2\rr_2$ (and in fact does not
even hold after running $\bc,\bd$).

This issue means that even though $\rr$ has no relevance outside of
$\bc,\bd$, we have to carry information about $\rr$ in our
postconditions. The effect of this is that local variables ``spread''
through the invariants used in other parts of the proof as described
previously, making it very hard to find consistent invariants and
breaking the modularity of proofs.

Can this problem be resolved? Instead of $\bc,\bd$ as defined above,
we could define them as:
$\bc_3:=\bd_3:=\paren{\Qinit\rr{\basis{}0};\ \Qapply U{\qq\rr};\
  \Qinit\rr{\basis{}0}}$. (Or stated more generally, initialize any
pseudo-local variable initialized before use, overwrite it at the end
of its scope.) We have that
\begin{equation}\label{eq:remove.r}
\rhl{\qq_1\rr_1\quanteq\qq_2\rr_2}{\Qinit\rr{\basis{}0}}{\Qinit\rr{\basis{}0}}{\qq_1\quanteq\qq_2},
\end{equation}
in other words, if we overwrite a quantum variables occurring in a
quantum equality, that variable can be removed from the quantum
equality. Then
$\rhl{\qq_1\quanteq\qq_2}{\bc_3}{\bd_3}{\rr_1\quanteq\rr_2}$ follows
immediate from
$\rhl{\qq_1\quanteq\qq_2}{\bc_2}{\bd_2}{\qq_1\rr_1\quanteq\qq_2\rr_2}$
(from the previous paragraph) and \eqref{eq:remove.r} by the
\rulerefx{Seq} rule from \cite{qrhl}. Judgment \eqref{eq:remove.r} cannot be
proven using the rules from \cite{qrhl}.\footnote{%
  At least it is not obvious which rules to use. We have no formal
  proof that it does not follow from a nontrivial combination of the
  rules there.}. One of the results of the present work is a reasoning
rule \rulerefx{JointQInitEq} from which \eqref{eq:remove.r} is an
immediate consequence.

Proving \ruleref{JointQInitEq} would probably be enough to have
rudimentary support for pseudo-local variables (when following all the
guidelines mentioned above about keeping names separate, and
initializing and overwriting). However, it seems quite inconvenient to
do so in a larger project. Furthermore, if local variables are not
explicitly declared as such, they will show up in, e.g., the set of
free variables of a program. For example, the \rulerefx{Adversary}
rule from \cite{qrhl} allows us to reason about program fragments as a
black box (i.e., without needing to look at their concrete
implementation, more about that later) but it depends on the set of
free variables of a program. Since the rule would not recognize that
some of the free variables are pseudo-local, the pseudo-local
variables would creep back into the pre-/postconditions produced by
the adversary rule.

In light of those challenges, it seems that for making qRHL and
\texttt{qrhl-tool} usable for larger projects, built-in support for
local variables is a high priority. This is what we set out to do in
the present work.

\subsection{Our contribution}
\label{sec:isabelle-proofs}

We add local variables to the programming language underlying qRHL and
prove sound reasoning rules to work with local variables. Furthermore,
we extend the \texttt{qrhl-tool} to support reasoning with local
variables. In more detail:
\begin{itemize}
\item We extend the language by a construct for declaring local
  variables (Sections \ref{sec:syn.prog} and~\ref{sec:sem.progs}). If $\vv$ is a variable (classical or quantum) and $\bc$
  is a program containing variable $\vv$, then $\local\vv\bc$ is the
  program where $\vv$ is local. That is, the value of $\vv$ is saved
  before executing $\bc$ and restored afterwards. (Of course, if $\vv$
  is quantum, storing does not mean making a copy.)

  Based on this, we derive a number of laws for denotational
  equivalence of programs involving $\localkw$ (such as invariance
  under $\alpha$-renaming, commutativity of nested
  $\localkw$-statements, moving of $\localkw$-statements,
  adding/removing initializations of local variables, etc.).  Closely
  related, we also introduce some laws concerning the initialization
  of variables (e.g., when an initialization has no effect because the
  variable is overwritten). The latter laws are not directly related
  to local variables but turn out to come up over and over while
  deriving our theory of local variables. (\autoref{sec:sem.lemmas})

\item Basic reasoning rules for qRHL statements: We provide sound
  reasoning rules for qRHL for remove local variable declarations and
  to rename variables. We need to remove local variable declarations
  to be able to break down a qRHL judgment into judgments about more
  elementary programs. E.g., we show judgments of the form
  $\rhl{\PA}{\bc_1;\bc_2}{\bd_1;\bd_2}{\PC}$ by by showing judgments
  $\rhl{\PA}{\bc_1}{\bd_1}\PB$ and $\rhl{\PB}{\bc_1}{\bd_1}\PC$ and
  then using the \rulerefx{Seq} rule. To do the same with a goal of
  the form
  $\rhl{\PA}{\local\vv{\bc_1;\bc_2}}{\local\vv{\bd_1;\bd_2}}{\PC}$ we
  first need to remove the $\localkw$-declaration, last but not least
  because we may want to refer to $\vv$ in $\PB$. Very roughly
  speaking, the rule says that to prove
  $\rhl{\PA}{\local\vv{\bc_1;\bc_2}}{\local\vv{\bd_1;\bd_2}}{\PC}$ it
  is sufficient to prove
  $\rhl{\PA}{\bc_1;\bc_2}{\bd_1;\bd_2}{\PC}$. (Interestingly, the
  converse does not hold.)

  These rules are given in \autoref{sec:basic.var.rules}.
  
\item As explained above, the rules of qRHL from \cite{qrhl} do not
  allow us to derive that
  \begin{equation}
    \rhl{\qq_1\rr_1\quanteq\qq_2\rr_2}{\Qinit{\rr}{\basis{}{0}}}
    {\Qinit{\rr}{\basis{}{0}}}{\qq_1\quanteq\qq_2}.
    \label{eq:qqrr.eq}
  \end{equation}
  That is, we cannot get rid of variables that occur in a quantum
  equality, even if these variables are overwritten (which is
  essentially the same as erasing them). This is because the rules for
  quantum initialization in \cite{qrhl} (\rulerefx{QInit1}\textsc{/2})
  are one-sided rules. That means they consider only an initialization
  (e.g., $\Qinit\rr{\basis{}0}$) in the left or the right program but
  not both simultaneously.  To derive \eqref{eq:qqrr.eq}, though, we need a
  rule that operates on both initializations simultaneously
  (intuitively, to make sure the entanglement between $\qq$ and $\rr$
  is handled in a synchronized fashion on the left and right
  side.\footnote{%
    The need for two-sided rules is not a new observation. Even in the
    classical pRHL \cite{certicrypt}, we have a two-sided rule for
    probabilistic sampling that ``synchronizes'' the random choices on
    the left and right side. This rule cannot be emulated using two
    applications of the one-sided rule for samplings. Similarly, qRHL
    \cite{qrhl} has a two-sided rule for measurements, synchronizing
    the measurement outcomes. However, for assignments, there is no
    two-sided rule in pRHL because there seems to be nothing that this
    rule could achieve that cannot be achieved with two consecutive
    applications of the one-sided rule. Thus it comes as a bit of a
    surprise that the quantum analogue to an assignment does need a
    two-sided rule.}
  We prove such a rule (\rulerefx{JointQInitEq}).

  As a consequence, we also prove a two-sided rule for removing local
  variables from qRHL judgments (\rulerefx{JointRemoveLocal}). Put
  simply, we show that to show a judgment such as
  $\rhl{\qq_1\quanteq\qq_2} {\local\rr\bc}{\local\rr\bd}
  {\qq_1\quanteq\qq_2}$, it is sufficient to show
  $\rhl{\qq_1\rr_1\quanteq\qq_2\rr_2} {\bc}{\bd}
  {\qq_1\rr_1\quanteq\qq_2\rr_2}$. The fact that $\rr$ is included in
  the quantum equality (with the one-sided rules
  \rulerefx{RemoveLocal1}\textsc{/2} it would not be) makes this
  judgment easier to prove.
  $\rhl{\qq_1\quanteq\qq_2} {\bc}{\bd} {\qq_1\quanteq\qq_2}$ would
  only be provable if $\bc,\bd$ do not create any entanglement between
  $\qq$ and $\rr$.

  The \ruleref{JointRemoveLocal} in turn is crucial in the derivation
  of the \rulerefx{Adversary} rule (see below).

  As a simple corollary of \rulerefx{JointQInitEq}, we also get a
  strengthening of the \rulerefx{QrhlElimEq} rule from \cite{qrhl}
  that allows us to relate qRHL judgments and (in)equalities of
  probabilities involving programs, we call the new rule
  \rulerefx{QrhlElimEqNew}.
  
  (The three new rules are presented in \autoref{sec:two-sided.init}.)

\item Variable changing: The contributions described above already go
  a long way towards making it possible to work with local variables
  in qRHL proofs. However, we still cannot have modular proofs (in the
  sense that one part of the proof does not have to depend on which
  local variables occur in another part of the proof). Consider the
  following example: Say, we want to prove a qRHL judgment of the form
  \begin{equation}
    \rhl{\qq_1\quanteq\qq_2}{\local\rr{\bc_0;\bc}}{\local\rr{\bc_0;\bd}}{\qq_1\quanteq\qq_2}.
    \label{eq:bc0.bc}
  \end{equation}
  Say the programs $\bc,\bd$ are complex subroutines that we wish to
  handle in a different subproof. Since $\bc_0$ might entangle $\qq$
  and $\rr$, proving \eqref{eq:bc0.bc}, we might end up having to
  prove the subgoal
  $X :=
  \rhl{\qq_1\rr_1\quanteq\qq_2\rr_2}{\bc}{\bd}{\qq_1\rr_1\quanteq\qq_2\rr_2}$. This
  breaks the modularity of the overall proof because now our analysis
  of $\bc,\bd$ needs to know which local variables (namely, $\rr$) are
  used in a different part of the overall proof (namely, the analysis
  of ${\local\rr{\bc_0;\bc}}$ and ${\local\rr{\bc_0;\bd}} $). Even
  worse, if $\bc,\bd$ appear in different place where different local
  variables are used, we may have to prove several different variants
  of $X$, all differing only in which local variable(s) are included
  in the quantum equality. What we want to do it to prove a single
  theorem about $\bc,\bd$ not mentioning $\rr$, say
  $X_0 := \rhl{\qq_1\quanteq\qq_2}{\bc}{\bd}{\qq_1\quanteq\qq_2}$, and
  to be able to derive $X$ from it whenever needed. Unfortunately, we
  do not know whether $X_0$ implies $X$. However, we do prove a rule
  (\rulerefx{EqVarChange}, \autoref{sec:varchange}) that allows us to derive $X$ from a theorem
  of the form
  $X_1 :=
  \rhl{\qq_1\qq_{\mathit{aux},1}\quanteq\qq_2\qq_{\mathit{aux},2}}{\bc}{\bd}{\qq_1\qq_{\mathit{aux},1}\quanteq\qq_2\qq_{\mathit{aux},2}}$
  where $\qq_\mathit{aux}$ is an auxiliary variable that is never used
  anywhere. (Basically, $X_1$ says that equality is preserved even in
  a larger context.) It might seem as if we can derive $X$ from $X_1$
  simply by renaming $\qq_{\mathit{aux}}$ into $\rr$ (using our rules
  for renaming variables), but that is not possible because
  $\qq_{\mathit{aux}}$ and $\rr$ might not have the same
  type. Requiring $\qq_\mathtt{aux}$ and $\rr$ to have the same type
  would break the modularity of the proof again, and furthermore there
  might be more than just one local variable, while our theorem $X_1$
  always uses the same single auxiliary variable $\qq_\mathit{aux}$.

  The \ruleref{EqVarChange} is also crucial in the derivation of the
  \rulerefx{Adversary} rule (see below).

\item Adversary rule: Proofs in qRHL (and in other Hoare logics) are
  often performed by deriving a judgment about the whole program from
  judgements about the individual statements in that program. However,
  in a cryptographic context, this is not always possible. We often
  need to reason about unknown fragments of code, namely whenever we
  reason about the behavior of an adversary attacking the
  cryptographic scheme. (From a logical perspective, an adversary is
  simply a program whose precise code is not known.)  Of course, if we
  do not know the code of a program $\bc$, we cannot say much about
  the pre- and postconditions. However, what we do now is, informally,
  that if the same program $\bc$ is used on the left and right side,
  and the variables of both instances of $\bc$ have the same value,
  then both instances will behave the same. That is,
  $\rhl{\QQ_1\quanteq\QQ_2}\bc\bc{\QQ_1\quanteq\QQ_2}$ if $\QQ$
  contains all free variables of $\bc$. Or, in a more general
  situation, we have
  $\rhl{\QQ_1\quanteq\QQ_2}{C[s]}{C[s']}{\QQ_1\quanteq\QQ_2}$ if
  $\rhl{\QQ_1\quanteq\QQ_2}{s}{s'}{\QQ_1\quanteq\QQ_2}$.  This would
  be used in a situation where the adversary is represented by an
  unknown context $C$, and that invokes some known procedure $s$ (or
  $s'$), e.g., $s,s'$ might be some real/fake encryption oracle.
  And since $s,s'$ are known, we can manually prove 
  $\rhl{\QQ_1\quanteq\QQ_2}{s}{s'}{\QQ_1\quanteq\QQ_2}$.

  Situations like the examples above (where unknown but identical code
  occurs on both sides) are handled by an adversary rule. In the
  classical setting, an adversary rule was already introduced in pRHL
  \cite{certicrypt}. Also in qRHL \cite{qrhl}, we have an adversary rule
  \texttt{Adversary}. However, the rule presented there has several
  drawbacks in our setting:
  \begin{compactitem}
  \item In the presence of local variables, its proof does not apply
    any more. This is because the proof is by induction over the
    structure of the adversary/context $C$. But the introduction of
    local variables means that there is another case that would need
    to be covered in the induction (namely,
    $C=\local\vv{C'}$). Dealing with local variables makes the rule
    and the induction more complex because we need to make sure the
    rule correctly handles cases where a variable of $s$ is local in
    $C$ (and thus also local in $C[s]$).
  \item The adversary rule from \cite{qrhl} requires the quantum
    equality $\QQ_1\quanteq\QQ_2$ to be the same in the precondition
    and postcondition of
    $\rhl{\QQ_1\quanteq\QQ_2}{C[s]}{C[s']}{\QQ_1\quanteq\QQ_2}$, and
    in the subgoal
    $\rhl{\QQ_1\quanteq\QQ_2}{s}{s'}{\QQ_1\quanteq\QQ_2}$. However,
    this is unnecessarily restrictive. E.g., if $C$ has local
    variables, those might occur in the subgoal but not in the
    pre-/postcondition. Or if $C$ initializes certain variables before
    use, then they can be omitted from the precondition but not from
    the postcondition.
  \end{compactitem}

  We present a new \ruleref{Adversary} that solves the these
  problems. Our rule is considerably more fine-grained than the
  original rule in that allows us to include different variable sets
  in pre-/postconditions and subgoals, and that it takes into account
  various kinds of overwritten, local, and read-only variables.

  The proof of the adversary rule relies in particular on the rules
  \rulerefx{JointRemoveLocal} and \rulerefx{EqVarChange} to maintain
  the induction hypothesis even below $\localkw$-statements.
\item New/rewritten tactics: In theory, all we need in order to do
  proofs in qRHL are the rules introduced above and in \cite{qrhl}. In
  practice, however, manually doing proofs is too cumbersome and
  error-prone. Instead, \cite{qrhl} introduced the \texttt{qrhl-tool}
  that allows to develop and check qRHL proofs interactively on the
  computer. To use the new rules we introduce in this work, we
  implemented a number of new tactics: \texttt{rename} for renaming
  variables (\autopageref{page:tactic:rename}), \texttt{local remove} for removing local
  variables (\autopageref{page:tactic:remove}),\footnote{We have not implemented the two-sided removal
    via \ruleref{JointRemoveLocal}, but that rule is implicitly
    present in the adversary rule.}
  \texttt{local up} for moving local variables to the top of a program (\autopageref{page:tactic:local-up}),
  \texttt{conseq qrhl} for changing
  variables in a quantum equality using \ruleref{EqVarChange} (\autopageref{page:tactic:conseq}),
  \texttt{equal} implementing the adversary rule (\autopageref{page:tactic:equal}, this tactic existed
  before but we completely rewrote it based on our new
  \rulerefx{Adversary} rule). We also strengthened the tactic \texttt{byqrhl} that introduces qRHL subgoals in
  the first place, using the new \ruleref{QrhlElimEqNew} (\autopageref{page:tactic:byqrhl}).

  In this paper, we only briefly sketch what those tactics do. For
  details, see the user manual of \texttt{qrhl-tool}, version 0.5.
\end{itemize}

Some of the results in this paper are shown in Isabelle/HOL
\cite{isabelle}. This concerns especially results which involve
inductions with many side conditions (such proofs are particular error
prone when done by hand).  Those proofs are not proofs from first
princples and/or based on the semantics of the language. For this, we
would need developments in operator theory that are not yet available
in Isabelle/HOL. Instead, we axiomatize the language and semantics,
and base all proofs on an explicit list of axioms in the file
\texttt{Assumptions.thy}. Those are either facts shown in \cite{qrhl},
in manual proofs in this paper, or that are elementary.  This approach
gives us a good trade-off -- avoiding errors in proofs that involve
many technical conditions, but at the same time avoiding the extreme
effort of formalizing everything in Isabelle/HOL.   The Isabelle/HOL formalization consists of 4315
lines of code.
The Isabelle theory
files for Isabelle/HOL (version Isabelle-2020) are available here
\cite{isabelle-thys}. 

\section{Preliminaries}

We introduce the notation used in this work. See also the symbol index at the end of this paper.

\paragraph{Variables.}
A \emph{program variable}%
\index{program variable}%
\index{variable!program} $\xx$
(short: variable) is an identifier annotated with a
\symbolindexmark{\typev}set $\typev \xx\neq\varnothing$,
and with a flag that determined whether the variable is
\emph{quantum}%
\index{quantum variable}%
\index{variable!quantum}%
\index{program variable!quantum} or \emph{classical}.%
\index{classical variable}%
\index{variable!classical}%
\index{program variable!classical} (In our semantics, for classical variables $\xx$ the type
$\typev\xx$
will be the set of all values a classical variable can store. Quantum
variables $\qq$ can store superpositions of values in $\typev\qq$.)

We will usually denote classical variables
with \symbolindexmark\xx{$\xx,\yy$}
and quantum variables with \symbolindexmark\qq{$\qq$}.
Given a set $V$
of variables, we write \symbolindexmark\cl{$\cl V$}
for the classical variables in $V$
and \symbolindexmark\qu{$\qu V$}
for the quantum variables in $V$.

Given a set $V$ of variables, we write \symbolindexmark\types{$\types V$} for the set of all
functions $f$ on $V$ with $f(\xx)\in \typev\xx$ for all $\xx\in V$. (I.e.,
the dependent product $\types V=\prod_{\xx\in V}\typev\xx$.)

Intuitively, $\types V$
is the set of all memories that assign a classical value to each variable in~$V$.

Given a list $V=(\xx_1,\dots,\xx_n)$
of variables, \symbolindexmarkonly\typel$\symbolindexmarkhighlight{\typel V}:=\typev{\xx_1}\times\dots\times\typev{\xx_n}$.
Note that if $V$
is a list with distinct elements, and $V'$
is the set of those elements, then $\typel V$
and $\types{V'}$
are still not the same set, but their elements can be identified
canonically. Roughly speaking, for a list $V$,
the components of $m\in\typel V$
are indexed by natural numbers (and are therefore independent of the
names of the variables in $V$),
while for a set $V$,
the components of $m\in\types V$ are indexed by variable names.

Given disjoint sets $\VV,\WW$
of variables, we write $\VV\WW$ for the union (instead of $\VV\cup \WW$).

Expressions (i.e., formulas that depend on some classical variables
$\XX$) $e$ are always assumed to have finitely many variables $\fv e$.
If $m$ is an assignment of values to classical variables, we write
\symbolindexmark\denotee{$\denotee em$} for $e$ evaluated on $m$. We
write \symbolindexmark\typee{$\typee e$} for the type of $e$, i.e., the set of all possible
values of $e$.

An important concept in the formalization of qRHL are indexed
variables, i.e., for every variable $\vv$ there are two distinct
variables $\vv_1,\vv_2$.  In \cite{qrhl}, there are explicit
operations $\operatorname{idx}_1$, $\operatorname{idx}_2$ that replace
all variables by indexed variables in a list/set of variables or in an
expression. We use a more compact notation and simply index the
list/set/expression. I.e., if $\VV$ is a list/set of variables,
$\VV_1$ refers to $\VV$ with every variable $\vv$ replaced by
$\vv_1$. And $e_1$ is the expression $e$ with every $\vv$ substituted
by $\vv_1$. (In \cite{qrhl} this would be $\operatorname{idx}_1\VV$,
$\operatorname{idx}_1e$.)  Similarly, given a quantum predicate $\PA$
(defined later in \autoref{sec:qrhl}), $\PA_1$ and $\PA_2$ are quantum
predicates with all variables $\vv$ replaced by $\vv_1,\vv_2$,
respectively.

Let \symbolindexmark\VVall{$\VVall$} be the set of all variables (not
including indexed variables).

We make some assumptions about the set $\VVall$ of all
variables. (Those assumptions were not made in \cite{qrhl}.)  Namely,
for any variable $\vv\in\VVall$, there exist infinitely many
$\ww\in\VVall$ that are compatible with $\vv$. (I.e.  $\vv$ and $\ww$
are either both quantum or both classical, and $\typev\vv=\typev\ww$.)
Furthermore, we assume that there is at least one quantum variable
$\qq$ with $\abs{\typev\qq}=\aleph_0$. (Note, we only assume that
those variables exist, not that they are actually used in any given
program.)

\paragraph{Linear algebra.}
We write \symbolindexmark\elltwo{$\elltwo X$} for the Hilbert space with basis
$\{\basis{}x\}_{x\in X}$.  For a set of quantum variables $\QQ$, we
write \symbolindexmark\elltwov{$\elltwov\QQ$} for $\elltwo{\types\QQ}$, i.e., the space of all
states those quantum variables can take.

Given a vector $\psi$, we define
\symbolindexmarkonly\proj$\symbolindexmarkhighlight{\proj\psi}:=\psi\adj\psi$.
Given a bounded operator $A$, we define
\symbolindexmarkonly\toE$\symbolindexmarkhighlight{\toE A}(\rho) :=
A\rho\adj A$.

A \index{cq-operator}\emph{cq-operator} is a positive trace-class
operator over a set $\VV$ of variables of the form
$\sum_m \proj{\basis{\cl\VV}m} \otimes \rho_m$ for positive trace-class operators
$\rho_m$ over $\qu\VV$. I.e., a cq-operator is basically a density operator
that is classical in the classical variables of $\qu\VV$ (except that we do not
require that the trace is $=1$ or $\leq1$).

A \index{superoperator}\emph{superoperator} is a completely positive
map $\calE$ that maps trace-class operators to trace-class operators
such that $\exists B.\forall\rho.\tr\calE(\rho) \leq B\tr\rho$.

Subspaces always mean topologically closed subspaces. For a subspace
$A$, let \symbolindexmark\orth{$\orth A$} be the orthogonal complement.

CPTPM means completely positive trace preserving map, while CPTRM
means completely positive trace reducing map (i.e., for positive
input, the trace of the output is smaller-equal the trace of the
input).

For disjoint $\RR,\SSS\subseteq \QQ$, let \symbolindexmark\swap{$\swap\RR\SSS$} be the
unitary operator on $\elltwov\QQ$ that swap the subsystems $\RR$ and
$\SSS$.

Let \symbolindexmark\suppo{$\suppo A$} denote the support of an operator $A$. (Formally, the
image of the smallest projector $P$ such that $PAP=A$.)

For $\RR\subseteq \QQ$ and a trace-class operator over $\elltwov\QQ$, let
\symbolindexmarkonly\partr$\symbolindexmarkhighlight{\partr{}{\RR}}\rho$
denote the partial trace of $\rho$ that traces out $\RR$. That is, 
$\partr{}{\RR}\rho$ is a trace-class operator over $\elltwov{\QQ\setminus\RR}$.
Sometimes, we annotate $\partr{}{\RR}$ with the set of remaining variables,
i.e., \symbolindexmark\partr{$\partr{\SSS}{\RR}$} if $\SSS=\QQ\setminus\RR$.
If $\QQ$ consists of indexed variables,
we write $\tr{}1$ short for $\tr{}{\QQ^1}$ where $\QQ^1\subseteq\QQ$ consists only of the 1-indexed variables.
Analogously $\tr{}2$.

For a bounded operator $A$, let \symbolindexmark\adj{$\adj A$} denote
the adjoint of $A$. (I.e., the conjugate transpose, often also written
$A^\dagger$.)

\cite{qrhl} also explicitly writes the canonical isomorphisms
$U_{\mathit{vars},\QQ}$ between different isomorphic spaces related to the
variables $\QQ$. (Namely $\elltwov{\QQ}$ and $\elltwo{\types\QQ}$.) We
omit those isomorphisms in our notation. In particular, if $\psi\in\elltwov{\QQ}$,
and $\PA\subseteq\elltwov{\QQ_1}$,
then the expression $\psi\in\PA$ is well-typed and understood to mean
$\adj{U_{\mathit{vars},\QQ_1}}U_{\mathit{vars},\QQ}\psi\in\elltwov{\QQ_2}$.

\paragraph{Distributions.} Probability distributions are always
discrete distributions (i.e., the $\sigma$-algebra of all measurable
spaces is the powerset). A subprobability distribution is like a
probability distribution except that the total probability may be
$\leq1$.  For a (sub)probability distribution $\mu$ over $X$, let
\symbolindexmarkonly\suppd$\symbolindexmarkhighlight\suppd\mu\subseteq X$ be the support of $X$, i.e., the set of values
with nonzero probability.  For a (sub)probability distribution $\mu$
over $X\times Y$, let
\symbolindexmarkonly\marginal$\symbolindexmarkhighlight{\marginal1\mu},
\symbolindexmarkhighlight{\marginal2\mu}$ be the first/second
marginal (i.e., (sub)probability distributions over $X$ and $Y$,
respectively).

\section{Language of programs}

\subsection{Syntax}
\label{sec:syn.prog}

We recap the syntax from \cite{qrhl}, and add one more statement to
it, for declaring local variables. Everything else is unchanged.

We will typically denote programs with
\symbolindexmark\bc{$\bc$} or \symbolindexmarkhighlight{$\bd$}.

Quantum variables are written \symbolindexmark\qq{$\qq,\rr$},
classical variables \symbolindexmark\xx{$\xx,\yy$}, an arbitrary
variables \symbolindexmark\vv{$\vv,\ww$}. Sets/lists of variables are
$\QQ,\RR,\SSS$ or $\XX,\YY$ or $\VV,\WW$.

\symbolindexmarkonly{\Qinit}%
\symbolindexmarkonly{\Qapply}%
\symbolindexmarkonly{\Qmeasure}%
\symbolindexmarkonly{\assign}%
\symbolindexmarkonly{\sample}%
\symbolindexmarkonly{\Skip}%
\symbolindexmarkonly{\while}%
\symbolindexmarkonly{\local}%
\begin{align*}
\bc,\bd := {} &\Skip && \text{(no operation)} \\
& \assign \XX e && \text{(classical assignment)} \\
& \sample \XX e && \text{(classical sampling)}\\
& \langif{e}{\bc}{\bd} && \text{(conditional)}\\
& \while{e}{\bc} && \text{(loop)}\\
& \seq{\bc}{\bd} && \text{(sequential composition)} \\
& \Qinit\QQ{e} && \text{(initialization of quantum registers)}\\
& \Qapply{e}\QQ && \text{(quantum application)} \\
& \Qmeasure{\XX}\QQ{e}
                     && \text{(measurement)} \\
& \local \vv \bc && \text{(local variables)}                    
\end{align*}

In the sampling statement, $e$
evaluates to a distribution.  In the initialization of quantum
registers, $e$
evaluates to a pure quantum state, $\qq_1\dots\qq_n$
are jointly initialized to that state. In the quantum application, $e$
evaluates to an isometry that is applied to
$\qq_1\dots\qq_n$.
In the measurement, $e$
evaluates to a projective measurement, the outcome is
stored in $\xx$.
(Recall that an expression $e$
  can be an arbitrarily complex mathematical formula in the classical
  variables. So, e.g., an expression that describes an isometry could
  be something as simple as just $H$
  (here $H$ denotes the Hadamard transform),
  or something more complex such as, e.g., $H^{\xx}$, meaning $H$ is applied if $\xx=1$.)

  The new statement in this syntax (relative to \cite{qrhl}) is
  $\local\vv\bc$. Intuitively, this means that $\vv$ is a local
  variable in $\bc$. More specifically (but still informally), at the
  beginning of $\local\vv\bc$, the current state of $\vv$ is stored
  (think of a stack), $\vv$ is initialized with a default value, $\bc$
  is executed, and the original state of $\vv$ is restored.

  Note that $\local\vv\bc$ binds weaker than $\bc;\bd$. I.e.,
  $\local\vv{\bc;\bc}$ means $\local\vv{\paren{\bc;\bd}}$, not $\paren{\local\vv\bd};\bc$.
  
A program is \emph{well-typed}%
\index{well-typed (program)}%
\index{program!well-typed} according to the following rules:
\begin{compactitem}
\item $\assign\XX e$ is well-typed iff $\typee e\subseteq\typel\XX$,
\item $\sample \XX e$ is well-typed iff $\typee e$ is a subset of
the subprobability distributions on $\typel\XX$.
\item $\langif{e}{\bc}{\bd}$ is well-typed iff $\typee e\subseteq\{\true,\false\}$ and $\bc,\bd$ are well-typed.
\item $\while{e}{\bc}$ is well-typed iff $\typee e\subseteq\{\true,\false\}$ and $\bc$ is well-typed.
\item $\seq{\bc}{\bd}$ is well-typed iff $\bc$ and $\bd$ are well-typed.
\item $\Qinit{Q}{e}$
  is well-typed iff
  $\typee e\subseteq\elltwo{\typel Q}$, and
   $\norm\psi = 1$
  for all $\psi\in\typee e$. 
\item $\Qapply{e}\QQ$ is well-typed iff $\typee e$ is a subset of
  the set of isometries on $\elltwo{\typel\QQ}$.
\item $\Qmeasure{\XX}\QQ{e}$
  is well-typed iff
  $\typee e$
  is a subset of the set of all projective measurements
  on $\typel\QQ$ with outcomes in $\typel\XX$.
\item $\local\vv\bc$ is well-typed iff $\bc$ is well-typed.
\end{compactitem}
In this paper, we will only consider well-typed programs. That is,
``program'' implicitly means ``well-typed program'', and all
derivation rules hold under the implicit assumption that the programs
in premises and conclusions are well-typed.

We also consider contexts in this work. A context follows the above
grammar, with the additional symbol \symbolindexmark\hole{$\hole i$} where $i$ is a natural
number. A context $C$ can be instantiated as $C[\bc_1,\dots,\bc_n]$,
This means that every occurrence of $\hole i$ is replaced by $\bc_i$.
(With no special treatment of local-variables. E.g., if $C=\local\vv{\hole 1}$,
then $C[\bc]=\local\vv\bc$ even if $\bc$ contains $\vv$.)

\subsection{Semantics of programs}
\label{sec:sem.progs}

First, we recap the semantics of the language as defined in \cite{qrhl}.

Given a program $\bc$
(with $\fv\bc\subseteq\VV$), we define its semantics
\symbolindexmark{\denotc}{$\denotc\bc$}
as a cq-superoperator
that maps trace-class
cq-operators over $\VV$ onto trace-class cq-operators over $\VV$.
In the following, let
$\rho$ be a trace-class cq-operator over $\VV$,
$m\in\types{\cl \VV}$ (i.e., an assignment of values to
classical variables), and
$\rho_m$ a positive trace-class operator over $\qu \VV$. Note that specifying $\denotc\bc$ on
operators of the form $\pointstate{\cl\VV}m\tensor\rho_m$ specifies $\denotc\bc$
on all $\rho$, since $\rho$ can be written as an
infinite sum of $\pointstate{\cl\VV}m\tensor\rho_m$.

Then the semantics of the language were defined as follows in \cite{qrhl}:

\symbolindexmarkonly\langif
\symbolindexmarkonly\while
\symbolindexmarkonly\seq
\symbolindexmarkonly\assign
\symbolindexmarkonly\sample
\symbolindexmarkonly\Qinit
\symbolindexmarkonly\Qapply
\symbolindexmarkonly\Qmeasure
  \begin{align*}
  \denotc\Skip(\rho) &:= \rho \\
  \denotc{\assign\xx e}\pb\paren{\pointstate{\cl\VV}m\tensor\rho_m} &:=
                                                    \pointstate{\cl\VV}{\upd m\xx{ \denotee{e}{m}}}\tensor \rho_m \\
  \denotc{\sample\xx e}\pb\paren{\pointstate{\cl\VV}m\tensor\rho_m} &:=
                                                          \sum_{z\in\typev\xx}{\denotee{e}m}(z) \cdot \pb\pointstate{\cl\VV}{\upd m\xx z}\tensor \rho_m
  \\
    \denotc{\langif{e}{\bc}{\bd}}(\rho) &:=
    \denotc{\bc}( \restricte {e}(\rho) ) +
    \denotc{\bd}( \restricte {\lnot e}(\rho) )
                                           \\
  \denotc{\while e\bc}(\rho)
  &:=\sum_{i=0}^\infty
  \restricte {\lnot e}\bigl((\denotc\bc\circ\restricte {e})^i(\rho)\bigr)
     \\
    \denotc{\seq{\bc_1}{\bc_2}}&:=\denotc{\bc_2}\circ\denotc{\bc_1}
                                 \\
  \denotc{\Qinit Q{e}}\pb\paren{\pointstate{\cl\VV}m\tensor\rho_m}
                     &:= \pb\pointstate{\cl\VV}m \tensor
                       \partr{}{Q}{\rho_m}\otimes\pb\proj{
                       \denotee em}
  \\
  \denotc{\Qapply{e}Q}\pb\paren{\pointstate{\cl\VV}m\tensor\rho_m}
                     &:= \pb\pointstate{\cl\VV}m \tensor \denotee{e}m \rho_m\adj{\paren{\denotee{e}m}}\\
  \denotc{\Qmeasure{\xx}{Q}e}\pb\paren{\pointstate{\cl\VV}m\tensor\rho_m}
                     &:=
                       \sum_{z\in\typev\xx} \!\!\!   \pb
                       \pointstate{\cl\VV}{\upd m\xx z} \tensor
                       (\denotee{e}m(z))\rho_m(\denotee{e}m(z))
\end{align*}

Here \symbolindexmarkonly\restricte$\symbolindexmarkhighlight{\restricte e}(\rho)$ is the cq-density operator $\rho$ restricted to the
parts where the expression $e$ holds. Formally, $\restricte e$ is the cq-superoperator on $V$ such that
\[\restricte e(\pointstate{\cl{V}}{m} \tensor \rho_m) :=
  \begin{cases}
    \pointstate{\cl{V}}m \tensor \rho_m & (\denotee{e}m=\true) \\
    0 & (\text{otherwise})
  \end{cases}
\]

\paragraph{Local variables.} It remains to give semantics to
statements of the form $\local\vv\bc$ as these did not occur in
\cite{qrhl}.

For every variable $\vv$,
we assume a fixed element \symbolindexmarkonly\initial$\symbolindexmarkhighlight{\initial\vv}\in\typev\vv$ (the \emph{default value}%
\index{default value}).
Let \symbolindexmarkonly\rhoinit$\symbolindexmarkhighlight{\rhoinit\vv}:=\pb\proj{\basis{\vv}{\initial\vv}}$.

In the following definition, for any variable $\vv$, let $\vv'$ denote
another (so far unused) variable of the same type, with the same
default value, and $\vv'$ is quantum/classical iff $\vv$ is. Then, for
any superoperator $\calE$,

\begin{equation}
  \label{eq:semantics.local}
  \pb\LOCAL\vv\calE(\rho)
  :=
  \partr{}\vv\
  \toE{\swapcop{\vv}}\circ
  {\pb\paren{\calE\otimes\idv{\vv'}}
      \circ \toE{\swapcop{\vv}}
  {\paren{\rho\otimes\rhoinit\vv}}}
\end{equation}

\begin{equation*}
\begin{tikzpicture}
  \initializeCircuit;
  \newWires{rest,v,v2};
  \stepForward{5mm};
  \labelWire[\tiny $\VVall\setminus\vv$]{rest};
  \labelWire[\tiny $\vv$]{v};
  \stepForward{11.5mm};
  \node[wireInput={v2}] (init-v2) {\footnotesize $\psiinit\vv$};
  \stepForward{2mm};
  \labelWire[\tiny $\vv'$]{v2};
  \stepForward{2mm};
  \drawWires{v,v2};
  \stepForward{2mm};
  \crossWire{v}{v2};
  \crossWire{v2}{v};
  \skipWires{v,v2};
  \stepForward{2mm};
  \node[gate={rest,v}] (c) {$\calE$};
  \stepForward{2mm};
  \drawWires{v,v2};
  \stepForward{2mm};
  \crossWire{v}{v2};
  \crossWire{v2}{v};
  \skipWires{v,v2};
  \stepForward{5mm};
  \node[killWire=v2] (kill-v2) {};
  \drawWire{v2}; \node[boxAroundLabeled={\tiny ${\LOCAL\vv\calE}$},
  fit=(c)(init-v2)(kill-v2)(\getWireCoord{v2})] (dotted) {};
  \stepForward{5mm};
  \drawWires{rest,v};
\end{tikzpicture}
\end{equation*}

Or equivalently:
\begin{equation*}
  {\LOCAL\vv\bc} :=
  \calF
  \otimes
  \id_{\vv}
  \qquad
  \text{where}
  \qquad
  \calF(\rho) := \partr{}\vv \calE(\rho \otimes \rhoinit\vv)
  \quad
  \text{for all trace-class operators $\rho$ over $\VVall\setminus\vv$}
\end{equation*}

\[
\begin{tikzpicture}
  \initializeCircuit;
  \newWires{rest,v,v2}
  \stepForward{5mm};
  \labelWire[\tiny $\VVall\setminus\vv$]{rest};
  \labelWire[\tiny $\vv$]{v2};
  \stepForward{15mm};
  \skipWire{v};
  \node[wireInput={v}] (init-v) {\footnotesize $\psiinit\vv$};
  \stepForward{2.5mm};
  \labelWire[\tiny $\vv$]{v};
  \stepForward{2.5mm};
  \node[gate={rest,v}] (c) {$\calE$};
  %
  \stepForward{3mm};
  \node[killWire=v] (kill-v) {};
  \drawWire{v2}; \node[boxAroundLabeled={\tiny $\LOCAL\vv\calE$},
                       fit=(c)(init-v)(kill-v)(\getWireCoord{v2})] (dotted) {};
  \stepForward{10mm};
  \drawWires{rest,v2};
\end{tikzpicture}
\]

And then we can define $\denotc{\local\vv\bc}=\pb\LOCAL\vv{\denotc\bc}$.

We write
\symbolindexmarkonly\deneq$\bc\symbolindexmarkhighlight\deneq\bd$ to
denote denotational equivalence, i.e., $\denotc\bc=\denotc\bd$.

\medskip

Given the semantics, we can define the probability that a certain
condition holds after execution of a program, using the following
definition from \cite{qrhl}:

\begin{definition}\label{def:prafter}
  Fix a program $\bc$, an expression $e$ with
  $\typee e=\{\true,\false\}$, and some trace-class cq-operator $\rho$
  over $\VVall$.  Then
  \symbolindexmarkonly\prafter$\symbolindexmarkhighlight{\prafter
    e\bc\rho}:=\sum_{m\text{ s.t.\ }\denotee em=\true}\tr\rho_m$ where
  $\denotc\bc(\rho)=:\sum_m\proj{\basis{}m}\otimes\rho_m$ for
  trace-class operators $\rho_m$ over $\qu{\paren\VVall}$.
\end{definition}

\subsection{Variable sets}

Given a context (or program) $C$, we define a number of sets of
variables such as the set of free variables. These will be used
throughout the paper in various rules, most crucially in the
\rulerefx{Adversary} rule. Those sets are:
\begin{compactitem}
\item \symbolindexmark\fv{$\fv C$}: All free variables in $C$.
\item \symbolindexmark\inner{$\inner C$}: All variables $\vv$ such that $C$ contains a hole
  under a $\locala\vv$. (Those are the
  variables that will be shadowed if we substitute a program into a
  hole of $C$.)
\item \symbolindexmark\covered{$\covered C$}: All variables $\vv$ such every hole is under a
  $\locala\vv$. (Those are the variables which, if a program that is
  substituted into a hole of $C$ contains them, will still not be
  visible outside $C$.)
\item \symbolindexmark\overwr{$\overwr C$}: All variables that are overwritten in $C$.  I.e.,
  written before they are used for the first time. (Thus the content
  of those variables before execution of $C$ does not matter.)
\item \symbolindexmark\written{$\written C$}: All variables that are written (i.e., classical
  variables on the lhs of an assignment or sampling, and all free
  quantum variables).
\end{compactitem}

The precise recursive definitions follow.  All those variables sets
are also formally defined in Isabelle/HOL in the theory
\texttt{Basic\_Definitions}.

\begin{alignat*}{2}
  & \fv{\hole i}           &&:= \varnothing \\
  & \fv{\assign\XX e}      &&:= \XX \cup \fv e \\
  & \fv{\sample\XX e}      &&:= \XX \cup \fv e \\
  & \fv{\local\vv C}       &&:= \fv C \setminus \{\vv\} \\
  & \fv{\Qinit\QQ e}       && := \QQ \cup \fv e \\
  & \fv{\Qapply\QQ e}      && := \QQ \cup \fv e \\
  & \fv{\Qmeasure\XX\QQ e} && := \QQ \cup \XX \cup \fv e \\
  & \fv{C;C'}              && := \fv C \cup \fv{C'} \\
  & \fv{\langif e{C}{C'}}  && := \fv e \cup \fv C \cup \fv{C'} \\
  & \fv{\while e{C}}       && := \fv e \cup \fv C \\
  & \fv\Skip               && := \varnothing
\end{alignat*}

\begin{alignat*}3
  & \inner{\hole i}         && := \varnothing \\
  & \inner C                && := \varnothing           && \ (\text{if $C$ is a program}) \\
  & \inner{\local\vv C}     && := \inner C \cup \{\vv\} && \ (\text{if $C$ is not a program}) \\
  & \inner{\langif e C{C'}} && := \inner C \cup \inner {C'} \\
  & \inner{\while e C}      && := \inner C \\
  & \inner{C;C'}            && := \inner C \cup \inner{C'} \\
\end{alignat*}

\begin{alignat*}3
  & \covered{\hole i}        && := \varnothing \\
  & \covered{C;C'}           && := \covered C \cap \covered{C'} \\
  & \covered{\langif eC{C'}} && := \covered C \cap \covered{C'} \\
  & \covered{\while eC}      && := \covered C \\
  & \covered{\local\vv C}    && := \covered C \cup \{\vv\} \\
  & \covered C               && := \VVall && (\text{if $C$ is a program})
\end{alignat*}

\begin{alignat*}2
  & \overwr{\hole i}           &&  := \varnothing \\
  & \overwr{\assign\XX e}      &&  := \XX \setminus \fv e \\
  & \overwr{\sample\XX e}      &&  := \XX \setminus \fv e \\
  & \overwr{\Qinit\QQ e}       &&  := \QQ \\
  & \overwr{\Qapply\QQ e}      &&  := \varnothing \\
  & \overwr{\Qmeasure\XX\QQ e} &&  := \XX \setminus \fv e \\
  & \overwr{\langif eC{C'}}    &&  := \paren{\overwr C \cap \overwr {C'}} \setminus \fv e \\
  & \overwr{\while eC}         &&  := \varnothing \\
  & \overwr{\local\vv C}       &&  := \overwr C \setminus \{\vv\} \\
  & \overwr{C;C'}              &&  := \overwr C \cup \pB\paren{
                                    \pb\paren{\overwr{C'} \setminus \fv C} \cap \covered{C}}  \\
  & \overwr\Skip               &&  := \varnothing
\end{alignat*}

\begin{alignat*}2
  & \written{\hole i}           &&  := \varnothing \\
  & \written{\assign\XX e}      &&  := \XX \\
  & \written{\sample\XX e}      &&  := \XX \\
  & \written{\local\vv C}       &&  := \written C \setminus \{\vv\} \\
  & \written{\Qinit\QQ e}       &&  := \QQ \\
  & \written{\Qapply\QQ e}      &&  := \QQ \\
  & \written{\Qmeasure\XX\QQ e} &&  := \XX \cup \QQ \\
  & \written{\langif eC{C'}}    &&  := \written C \cup \written{C'} \\
  & \written{\while eC}         &&  := \written C \\
  & \written{\Skip}             &&  := \varnothing \\
  & \written{C;C'}              &&  := \written C \cup \written {C'}
\end{alignat*}

\subsection{Substitutions}

A \index{variable substitution}%
\index{substitution!variable}%
\emph{variable substitution} $\sigma$ is a function from variables to
variables such that $\vv$ and $\sigma(\vv)$ are compatible, i.e.
$\vv$ and $\sigma(\vv)$ are either both quantum or both classical,
and $\typev\vv=\typev{\sigma(\vv)}$.

Given a variable substitution $\vv$ and a program/context $\bc$,
\symbolindexmark\subst{$\subst\bc\sigma$} denotes the result of replacing every non-local variable $\vv$ in $\bc$ by $\sigma(\vv)$.
In contrast, \symbolindexmark\fullsubst{$\fullsubst \bc\sigma$} replaces every variable $\vv$ by $\sigma(\vv)$.
(E.g., if $\sigma(\vv)=\ww$, then $(\assign\vv 1;\local\vv{\assign\vv 1})\sigma=
(\assign\ww 1;\local\vv{\assign\vv 1})$
but
 $\fullsubst{(\assign\vv 1;\local\vv{\assign\vv 1})}\sigma=
(\assign\ww 1;\local\ww{\assign\ww 1})$.)

Renaming variables using a substitution may lead to conflicts with
existing local variables. The following inductive predicate
\symbolindexmark\noconflict{$\noconflict\cdot\cdot$} ensures that this
does not happen.

\begin{ruleblock}
  \RULEY{
    \noconflict\sigma{\bc}\\
    \noconflict\sigma{\bd}
  }{
    \noconflict\sigma{\langif e\bc\bd}
  }
  \RULEY{
    \noconflict\sigma{\bc}\\
    \noconflict\sigma{\bd}
  }{
    \noconflict\sigma{\bc;\bd}
  }
  \RULEY{
    \noconflict\sigma{\bc}
  }{
    \noconflict\sigma{\while e\bc}
  }
  \RULEY{~}{\noconflict\sigma{\assign\XX e}}
  \RULEY{~}{\noconflict\sigma{\sample\XX e}}
  \RULEY{~}{\noconflict\sigma{\Qinit\QQ e}}
  \RULEY{~}{\noconflict\sigma{\Qapply\QQ e}}
  \RULEY{~}{\noconflict\sigma{\Qmeasure\XX\QQ e}}
  \RULEY{~}{\noconflict\sigma\Skip}
  \RULEY{
    \noconflict{\sigma(\vv:=\vv)}\bc
    \\
    \vv \notin \sigma(\fv\bc \cap \dom \sigma )
  }{
    \noconflict\sigma{\local\vv\bc}
  }
\end{ruleblock}

Here $\dom\sigma:=\{\vv:\sigma(\vv)\neq\vv\}$.

In the Isabelle theories, the substitution $\subst\bc\sigma$ is formalized
as \verb|Basic_Definitions.subst_vars|,
the substitution $\fullsubst\bc\sigma$ as \verb|Basic_Definitions.full_subst_vars|,
and $\noconflict\sigma\bc$ as
\verb|Basic_Definitions.no_conflict|.

\section{Quantum relational Hoare logic}
\label{sec:qrhl}

In this section, we recap the relevant definitions of qRHL from
\cite{qrhl}. We slightly rewrite the definitions to make them
compatible with our notational conventions.

A \emph{quantum predicate}%
\index{quantum predicate}%
\index{predicate!quantum} $\PA$ over variables $\VV$ is, formally, an
expression with variables in $\cl\VV$ that evaluates to a subspace of
$\elltwov{\qu\VV}$.  Intuitively, a memory (with classical and quantum
variables) statisfies $\PA$ iff the quantum part of the memory lies in
$\PA$, when we instantiate the variables of $\PA$ with the classical
variables of the memory.  For pre-/postconditions in qRHL we use
quantum predicates over $\VVall_1\VVall_2$.  If such a memory is
represented as a density operator $\rho$, we say ``$\rho$
\emph{satisfies} $\PA$''%
\index{satisfy} if this holds. A formal definition is given in
\qrhlautoref{def:satisfy}.  Following \cite{ghosts}, we only consider
quantum predicates that depend on a finite number of variables, then
\symbolindexmark\fv{$\fv\PA$}, the set of free classical and quantum
variables of $\PA$, is well-defined (see \cite{ghosts} for details).

For detailed discussion of quantum predicates, we refer to
\cite{qrhl}. Here we only recall the most important constructions of
quantum predicates:

Intersection $\cap$ of quantum predicates is the analogue of
conjuction $\land$ of classical predicates.  Sum $+$ of spaces is the
analogue to disjunction $\vee$.  $\PA\subseteq\PB$ intuitively means
that $\PA$ implies $\PB$. (Note that $\PA\subseteq\PB$ is not a
quantum predicate, just a mathematical proposition.)

Given a classical predicate $P$ (i.e., a Boolean formula depending
only on classical variables), we can construct a quantum predicate
\symbolindexmark\CL{$\CL P$}. $\CL P$ is defined to be the whole space
is $P$ is true, and to be the $0$-space if $P$ is false. This way, a
state $\rho$ satisfies $\CL P$ if the classical variables of $\rho$
satisfy $P$.

Furthermore, \cite{qrhl} introduces the notation \symbolindexmark\lift{$\lift S\QQ$} to
denote the predicate that encodes the fact that $\QQ$ has a value in
$\QQ$ ($S$ must be a subspace of $\elltwov\QQ$). $\lift A\QQ$ can also
be used for operators $A$ to emphasize that $A$ operates on
$\elltwov\QQ$. And \symbolindexmark\spaceat{$\spaceat\PA\psi$} is an operation specific to the
\rulerefx{QInit1} rule, we omit the definition here. See \cite{qrhl}
for details. We write \symbolindexmark\quanteq{$\QQ\eqstate\psi$} to mean
$\lift{\SPAN{\{\psi\}}}\QQ$, i.e., the quantum predicate that says
that $\QQ$ is in state $\psi$.

\paragraph{Quantum equality.} One very important quantum predicate
(that can be combined with other predicates, e.g., using $\cap$ and
$+$) is the quantum equality. If $\QQ,\RR$ are disjoint lists of
quantum variables, then $\QQ\quanteq\RR$ intuitively means that $\QQ$
and $\RR$ have the same content. Formally,
\symbolindexmark{\quanteq}{$\QQ\quanteq\RR$} is the space of all
vectors that are invariant under $\swap\QQ\RR$, i.e., the unitary that
swaps registers $\QQ$ and $\RR$ in a quantum state. Intuitively, this
makes sense: two variables have the same content if exchanging them
does not change the overall state of the system. Though not formally
required, $\QQ$ will always contain $1$-indexed variables, and $\RR$
will contain $2$-indexed variables (or vice versa). That way, we can
use a quantum equality in a pre-/postcondition in a qRHL judgment to
state that the quantum variables of two programs are ``equal''.  For
example $\qq_1\rr_1\quanteq \qq_2\rr_2$ means that $\qq,\rr$ jointly
have the same content in the left and right memory.

There is an extended form of the quantum equality,
\symbolindexmark\quanteq{$U\QQ\quanteq V\RR$} where $U,V$ are
unitaries (or more generally, bounded operators, but then the
intuitive meaning of the quantum equality gets lots). Intuitively,
$U\QQ\quanteq V\RR$ means that the variables in $\QQ$, when we apply
$U$, have the same content as the variables in $\RR$, when we apply
$V$. For example
$\id\, \qq_1\rr_1 \quanteq \mathsf{CNOT} \, \qq_2\rr_2$ means that
$\qq\rr$ on the left is what you get from $\qq\rr$ on the right after
a CNOT. We refer to \qrhlautoref{def:quanteq} for the formal definition.

An important fact about the quantum equality is that
$\QQ\QQ'\quanteq\RR\RR'$ is not equivalent to
$\QQ\quanteq\RR \cap \QQ'\quanteq\RR'$, we merely have
$\QQ\quanteq\RR \cap \QQ'\quanteq\RR' \subseteq \QQ\quanteq\RR \cap
\QQ'\quanteq\RR'$.  This makes it harder to work with the quantum
equality than the classical equality. For useful laws about the
quantum equality, see \cite{qrhl}.

\paragraph{qRHL judgments.}
In qRHL, we want to express that given a precondition $\PA$ (on a pair
of memories, i.e., a quantum predicate on $\VVall_1\VVall_2$), when
executing the programs $\bc,\bd$, the postcondition $\PB$ holds, in
short $\rhl\PA\bc\bd\PB$. However, this simplified description is
somewhat misleading. We do not simply execute $\bc,\bd$ in parallel on
an initial state consisting of two memories satisfying $\PA$ and look
whether the final state satisfies $\PB$. The reason is that if we did
that, even simple fact such as
$\rhl{\CL\true}{\sample\xx{\mathcal U}}{\sample\xx{\mathcal
    U}}{\CL{\xx_1=\xx_2}}$ would not hold (where $\mathcal U$ is the uniform
distribution on $\{0,1\}$). This is because executing the left and
right program in parallel would only with probability 1/2 result in
the same bit $\xx$. This phenomenon already occurred in the classical
setting (pRHL, \cite{certicrypt}). Thus we use a more complex
definition that ``synchronizes'' probabilistic choices between the
left and right program. For a detailed justification of the definition of qRHL
see \cite{qrhl}. We simply state it here:

\begin{definition}[Quantum relational Hoare judgments]\label{def:rhl}%
  Let $\bc,\bd$ be programs.  Let $A,B$ be quantum predicates over
  $\VVall_1\VVall_2$.
  
  Then \symbolindexmark\rhl{$\rhl A{\bc}{\bd}B$} holds iff for all
  separable $\rho$ that satisfy $A$, we have that there exists a
  separable $\rho'$ that satisfies $B$ such that
  $\partr{}2\rho' = \denotc{\bc}\pb\paren{\partr{}2\rho}$ and
  $\partr{}1\rho' = \denotc{\bd}\pb\paren{\partr{}1\rho}$.
\end{definition}

\subsection{Rules of qRHL}

Most rules proven in \cite{qrhl} still hold (with the same proof) in
our setting (even though the definition of the language has
changed). This is because the proof of these rules are
``semantic''. By this, we mean that, if a program $\bc$ is
all-quantified in a rule, the proof makes no assumptions about the
code of $\bc$, and instead only refers to its semantics
$\denotc\bc$. Thus the exactly same proofs work when more statements
are added to the language. (But not if the definition of existing
statements is changed.) A notable exception is the \texttt{Adversary}
rule from \cite{qrhl} which does not apply any more since it is proven
by induction of the structure of programs.

In Figures \ref{fig:rules.general}, \ref{fig:rules.stmts},
\ref{fig:rules.quantum}, we state the rules from \cite{qrhl} that
still hold, using our more compact notation (in particular, we omit
the types of the various variables and expressions, and we omit
explicitly stated canonical isomorphisms between various
spaces).\footnote{We also use \autoref{lemma:fv} to justify replacing
assumptions of the form ``$\bc$ is $\VV$-local'' by
``$\fv\bc\subseteq\VV$''.} We omit the rather lengthy rules
\texttt{Trans} and \texttt{JointMeasure} that also still hold for
brevity, see \cite{qrhl}.

\begin{figure}[tp]
  \begin{ruleblock}
    \RULE{Sym}{
      \pb\rhl{\subst \PA\sigma}\bd\bc
      {\subst \PB\sigma}\\
      \forall\vv. \sigma(\vv_1):=\vv_2, \sigma(\vv_2):=\vv_1
    }{
      \rhl \PA\bc\bd \PB
    }
    \RULE{Conseq}{\PA\subseteq \PA'\\\PB'\subseteq \PB
      \\\rhl{\PA'}\bc\bd {\PB'}}{\rhl{\PA}\bc\bd \PB}
    \RULE{Seq}{
      \rhl{\PA}{\bc_1}{\bc_2}{\PB}
      \\
      \rhl{\PB}{\bd_1}{\bd_2}{\PC}
    }{
      \rhl{\PA}{\seq{\bc_1}{\bd_1}}{\seq{\bc_2}{\bd_2}}{\PC}
    }
    \RULE{Case}{
      \forall z.\
      \pb\rhl{\CL{e=z}\cap \PA}\bc\bd{\PB}
    }{
      \rhl{\PA}\bc\bd{\PB}
    }
    \RULE{Equal}{
      \fv\bc\subseteq\XX\QQ
    }{
      \pb\rhl
      {\CL{\XX_1=\XX_2}\cap (\QQ_1\quanteq \QQ_2)}
      \bc\bc
      {\CL{\XX_1=\XX_2}\cap (\QQ_1\quanteq \QQ_2)}
    }
    \RULE{Frame}{
      \fv\PR\subseteq  \VV_1\VV'_2\\
      \text{$\fv\bc\cap \VV$ and $\fv\bd\cap \VV'$ are classical}\\
      \text{$\bc$ is $(\fv\bc\cap \VV)$-readonly}
      \\
      \text{$\bd$ is $(\fv\bd\cap \VV')$-readonly}\\
      \rhl{\PA}\bc\bd \PB
    }{
      \rhl{\PA\cap \PR}\bc\bd{\PB\cap \PR} }
    \RULE{QrhlElim}{
      \rho\text{ is separable}\\
      \rho\text{ satisfies }\PA \\
      \rho_1:=\partr{V_1}{V_2}\rho \\
      \rho_2:=\partr{V_2}{V_1}\rho \\
      \pb\rhl \PA\bc\bd {\CL{e_1 \Rightarrow f_2}}
    }{
      \pb\prafter e\bc{\rho_1}
      \leq
      \pb\prafter f\bd{\rho_2}
      \rlap{\hskip1.5cm\parbox{4cm}{\footnotesize (also holds for
          $=,\Leftrightarrow$
          and\\\strut\,
          $\geq,\Leftarrow$
          instead of $\leq,\Rightarrow$)}}
    }
    \RULE{QrhlElimEq}{
      \rho\text{ satisfies }A\\
      \fv\bc,\fv\bd \subseteq \XX\QQ \\
      \qu{\fv\PA}\subseteq \QQ \\
      \pb\rhl {\CL{\XX_1=\XX_2}\cap (\QQ_1\quanteq \QQ_2)\cap \PA_1\cap \PA_2}{\bc}{\bd} {\CL{e_1 \Rightarrow f_2}}
    }{
      \pb\prafter e\bc{\rho}
      \leq
      \prafter f\bd{\rho}
      \newcommand\alsoholds{\footnotesize (also holds for
      $=,\Leftrightarrow$
      and\\\strut\,
      $\geq,\Leftarrow$
      instead of $\leq,\Rightarrow$)}
    \rlap{\hskip23mm\parbox{4cm}{\alsoholds}}%
  }
  \RULE{TransSimple}{
    \XX_{p}:=\cl{\fv p},\
    \QQ_{p}:=\qu{\fv p}\text{ for }p=\bc,\bd,\be
    \\
    \rhl{\CL{\XX_{c1}=\XX_{d2}}\cap (\QQ_{c1}\quanteq \QQ_{d2})}\bc\bd
    {\CL{\XX_{c1}=\XX_{d2}}\cap (\QQ_{c1}\quanteq \QQ_{d2})}
    \\
    \rhl{\CL{\XX_{d1}=\XX_{e2}}\cap (\QQ_{d1}\quanteq \QQ_{e2})}\bd\be
    {\CL{\XX_{d1}=\XX_{e2}}\cap (\QQ_{d1}\quanteq \QQ_{e2})}
  }{
    \rhl{\CL{\XX_{c1}=\XX_{e2}}\cap (\QQ_{c1}\quanteq \QQ_{e2})}\bc\be
    {\CL{\XX_{c1}=\XX_{e2}}\cap (\QQ_{c1}\quanteq \QQ_{e2})}
  }
\end{ruleblock}
\caption{Rules for qRHL (general rules).  
  }
  \label{fig:rules.general}
\end{figure}

\begin{figure*}[t]
  \begin{ruleblock}
    \RULE{Skip}{ }{\rhl{\PA}\Skip\Skip{\PA}}
    \RULE{Assign1}{ }{
      \pb\rhl{
        \substi \PB{e_1/\xx_1}
      }{\assign{\xx} e}{\Skip}{\PB}
    }
    \RULE{Sample1}{
      A:=\pB\paren{\CL{e_1\text{ is total}} \cap 
        \bigcap\nolimits_{z\in\suppd e_1}\substi B{z/\xx_1}}
    }{
      \rhl{
        A
      }{
        \sample{\xx} e
      }{\Skip} B
    }
    \RULE{JointSample}{
      A
      :=
      \pB\paren{\pb\CL{\marginal1f = e_1
          \land
          \marginal2f = e'_2}
        \cap
        \bigcap_{(z,z')\in\suppd f}
        \substi \PB{z/\xx_1,z'/\yy_2}}
    }{
      \rhl{\PA}
      {\sample{\xx}{e}}
      {\sample{\yy}{e'}}
      {\PB}
    }
    \RULE{If1}{
      \pb\rhl{\CL{e_1}\cap \PA}\bc\Skip \PB
      \\
      \pb\rhl{\CL{\lnot e_1}\cap \PA}\bd\Skip \PB
    }{
      \rhl{\PA}{\langif e\bc\bd}{\Skip}\PB
    }
    \RULE{JointIf}{
      \PA \subseteq \CL{e_1=e'_2}\\
      \pb\rhl{\CL{e_1\land e'_2}\cap \PA}{\bc}{\bc'}\PB
      \\
      \pb\rhl{\CL{\lnot e_1\land\lnot e'_2}\cap \PA}{\bd}{\bd'}\PB\\
    }{
      \rhl{\PA}{\langif{e}{\bc}{\bd}}{\langif{e'}{\bc'}{\bd'}}\PB
    }
    \RULE{While1}{
      \pb\rhl{\CL{e_1}\cap \PA}\bc\Skip \PA
      \\
      \PA\subseteq \PB_1
      \\
      (\while{e}{\bc})\text{ is total on }\PB
    }{
      \pb \rhl{\PA}{\while e\bc}\Skip {\CL{\lnot e_1}\cap \PA}
    }
    \RULE{JointWhile}{
      \PA\subseteq \CL{e_1=e'_2}\\
      \rhl{\CL{e_1\land e'_2}\cap A}\bc\bd \PA
    }{
      \rhl{\PA}{\while{e}\bc}{\while{e'}{\bd}}{\CL{\lnot e_1\land\lnot e'_2}\cap \PA}
    }
  \end{ruleblock}
  \caption{Rules for qRHL (related to individual classical statements).  
    For the rules \rulerefx{Assign1}, \rulerefx{Sample1},
    \rulerefx{If1}, and \rulerefx{While1}, there is also an analogous
    symmetric rule that we do not list explicitly.}
  \label{fig:rules.stmts}
\end{figure*}

\begin{figure*}[t]
  \begin{ruleblock}
    \RULE{QInit1}{}{
      \pb\rhl{\spaceat \PA{\lift{e_1}{\QQ_1}}}
      {\Qinit{\QQ}{e}}\Skip \PA
    }
    \RULE{QApply1}{}{
      \rhl{\adj{(\lift{e_1}{\QQ_1})}\cdot(\PB\cap \im (\lift{e_1}{\QQ_1}))}{\Qapply{e}{\QQ}}\Skip{\PB}
    }
    \RULE{Measure1}{
      \PA:=\Bigl(\CL{e_1\text{ is a total measurement}}
      \cap
      \bigcap\nolimits_{z} \bigl((\substi B{z/\xx_1}\cap \im (\lift{e_1(z)}{\QQ_1})) + \orth{ (\lift{e_1(z)}{\QQ_1})} \bigr)\Bigr)
    }{
      \rhl{A}{\Qmeasure{\xx}{Q}{e}}\Skip B
    }
    \RULE{JointMeasureSimple}{
      f_{1z} := \lift{e_1(z)}{\QQ_1}\\
      f'_{2z} := \lift{e'_2(z)}{\QQ_2'}\\ 
      \\
      \PA := \CL{e_1=e'_2}\cap(\QQ_1\quanteq \QQ_2')\cap \bigcap_{z}
      (\substi B{z/\xx_1,z/\yy_2}
      \cap \im f_{1z} \cap \im f'_{2z})
      + \orth{(\im f_{1z})} + \orth{(\im f'_{2z})}
    }{
      \rhl{\PA}
      {\Qmeasure{\xx}{\QQ}{e}}{\Qmeasure{\yy}{\QQ'}{e'}}{\PB}
    }
    %
    %
  \end{ruleblock}
  \caption{Rules for qRHL (related to individual quantum statements).
    For the rules \rulerefx{Measure1}, \rulerefx{QApply1},
    and \rulerefx{QInit1}, there is also an analogous
    symmetric rule that we do not list explicitly.}
  \label{fig:rules.quantum}
\end{figure*}

\section{Semantics-related lemmas}
\label{sec:sem.lemmas}

\begin{lemma}\label{lemma:fv}
  If $\fv\bc\subseteq \VV$, then there is a $\calE$ on
  $\VV$ such that $\denotc\bc=\calE\otimes\id_{\VVall\setminus\VV}$.
\end{lemma}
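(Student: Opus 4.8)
The plan is to prove the claim by structural induction on $\bc$, producing the superoperator $\calE$ explicitly in each case and checking the factorization by hand. The workhorses are three closure facts about \emph{factored} superoperators, i.e.\ those of the form $\calG\otimes\id_{\VVall\setminus\VV}$: they are closed under composition and under (possibly infinite) sums, and $\restricte e$ is factored whenever $\fv e\subseteq\VV$, because $\denotee em$ depends only on the restriction of $m$ to $\VV$. I will record these once and reuse them throughout.

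For the atomic statements ($\Skip$, $\assign\XX e$, $\sample\XX e$, $\Qinit\QQ e$, $\Qapply e\QQ$, $\Qmeasure\XX\QQ e$) the semantics from \autoref{sec:sem.progs} only read the variables of $\fv e$ and write $\XX$ resp.\ act on the registers $\QQ$, all of which lie in $\VV$ since $\fv\bc\subseteq\VV$; the required $\calE$ is then just the same operation read as a superoperator over $\VV$, and the identity factor passes through. For $\seq{\bc_1}{\bc_2}$ and $\langif e{\bc_1}{\bc_2}$ the definition of $\fv$ gives $\fv{\bc_1}\subseteq\VV$ and $\fv{\bc_2}\subseteq\VV$, so the induction hypothesis applies to the subprograms, and the closure facts assemble the factored form out of $\denotc{\bc_2}\circ\denotc{\bc_1}$ resp.\ $\denotc{\bc_1}\circ\restricte e + \denotc{\bc_2}\circ\restricte{\lnot e}$. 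For $\while e\bc$ each summand $\restricte{\lnot e}\circ(\denotc\bc\circ\restricte e)^i$ is factored, say as $\calF_i\otimes\id_{\VVall\setminus\VV}$; I set $\calE:=\sum_i\calF_i$ and use that tensoring with the identity commutes with the limit of the partial sums, which is legitimate because $\denotc{\while e\bc}$ is by definition a well-defined superoperator.

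The genuinely new case is $\local\vv\bc$. Here $\fv{\local\vv\bc}=\fv\bc\setminus\{\vv\}\subseteq\VV$ gives $\fv\bc\subseteq\VV\cup\{\vv\}$, so the induction hypothesis (applied with $\VV\cup\{\vv\}$ in place of $\VV$) yields $\denotc\bc=\calE_0\otimes\id_W$ with $W:=\VVall\setminus(\VV\cup\{\vv\})$ and $\calE_0$ over $\VV\cup\{\vv\}$. Using the second (equivalent) form of the $\LOCAL$-construction from \autoref{sec:sem.progs}, I evaluate $\calF$ on product inputs $\sigma\otimes\omega$ ($\sigma$ over $\VV$, $\omega$ over $W$): since the map $\sigma\mapsto\sigma\otimes\rhoinit\vv$ and the partial trace $\partr{}\vv$ touch only the $\VV\cup\{\vv\}$ block, they commute with the factor $\omega$, and one reads off $\calF(\sigma\otimes\omega)=\pb\paren{\partr{}\vv\,\calE_0(\sigma\otimes\rhoinit\vv)}\otimes\omega$. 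Hence $\denotc{\local\vv\bc}=\LOCAL\vv{\denotc\bc}=\calF\otimes\id_\vv$ is again factored over $\VV$, with $\calE(\sigma):=\partr{}\vv\,\calE_0(\sigma\otimes\rhoinit\vv)$.

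The main obstacle I expect is the tensor-factor bookkeeping in the $\local$ case: one has to track precisely which block each swap, initialization, and partial trace acts on, and invoke the tensor-linearity of superoperators on product operators to pull the identity factor out past $\LOCAL\vv{-}$. The while-loop limit is a routine analytic point, needing only the remark that convergence is inherited from the well-definedness of the total map; everything else is mechanical once the three closure facts are in place.
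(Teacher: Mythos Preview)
Your proposal is correct and follows essentially the same route as the paper: structural induction on $\bc$, with the $\localkw$ case being the only nontrivial one, handled by applying the induction hypothesis to the body over $\VV\cup\{\vv\}$ and then reading off the factorization from the alternative form of $\LOCAL\vv{-}$. One small bookkeeping point the paper makes explicit and you should too: distinguish whether $\vv\in\VV$ or $\vv\notin\VV$, since in the former case your $\sigma$ ranges over $\VV\setminus\{\vv\}$ (not $\VV$) and the resulting $\calE$ is obtained by tensoring with $\id_\vv$ to land on $\VV$.
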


(This lemma was already stated in \cite{qrhl} but the proof was not ``semantic''.)

\begin{proof}
  We show this by induction on $\bc$.
  Each case is elementary to check, we only show the case $\bc=\local\vv{\bc'}$ here:

  Since $\fv\bc=\fv{\bc'}\setminus\{\vv\}$, we have $\fv{\bc'}\subseteq \VV\cup\{\vv\}$.
  By induction hypothesis, there is an $\calE'$ on $\VV\cup\{\vv\}$ with $\denotc{\bc'}=\calE'\otimes\id_{\VVall\setminus\VV\setminus\{\vv\}}$.

  We have (expressing the various superoperators as circuits for readability):
  \[
    \denotc\bc
    \quad\starrel=\quad
\begin{tikzpicture}[baseline={(0,-.78)}]
  \initializeCircuit;
  \newWires{rest,V,v,v2}
  \stepForward{10mm};
  \labelWire[\tiny $\VVall\setminus\VV\setminus\{\vv\}$]{rest};
  \labelWire[\tiny $\VV\setminus\{\vv\}$]{V};
  \labelWire[\tiny $\vv$]{v2};
  \stepForward{5mm};
  \skipWire{v};
  \node[wireInput={v}] (init-v) {\footnotesize $\psiinit\vv$};
  \stepForward{2.5mm};
  \labelWire[\tiny $\vv$]{v};
  \stepForward{2.5mm};
  \node[gate={rest,V,v}] (c) {$\denotc{\bc'}$};
  %
  \stepForward{4mm};
  \node[killWire=v] (kill-v) {};
  \stepForward{4mm};
  \drawWires{rest,V,v2};
\end{tikzpicture}
\quad\starstarrel=\quad
\begin{tikzpicture}[baseline={(0,-.78)}]
  \initializeCircuit;
  \newWires{rest,V,v,v2}
  \stepForward{10mm};
  \labelWire[\tiny $\VVall\setminus\VV\setminus\{\vv\}$]{rest};
  \labelWire[\tiny $\VV\setminus\{\vv\}$]{V};
  \labelWire[\tiny $\vv$]{v2};
  \stepForward{13mm};
  \skipWire{v};
  \node[wireInput={v}] (init-v) {\footnotesize $\psiinit\vv$};
  \stepForward{2.5mm};
  \labelWire[\tiny $\vv$]{v};
  \stepForward{2.5mm};
  \node[gate={V,v}] (E) {$\calE'$};
  \stepForward{4mm};
  \node[killWire=v] (kill-v) {};
  \drawWire{v2}; \node[boxAround, fit=(E)(init-v)(kill-v)] (dotted) {};
  \stepForward{4mm};
  \drawWires{rest,V,v2};
\end{tikzpicture}
\]
Here $(*)$ is by the semantics of the language, and $(**)$ since
$\denotc{\bc'}=\calE'\otimes\id_{\VVall\setminus\VV\setminus\{\vv\}}$.
If $\vv\notin\VV$, let $\calE$ be the dotted box in the rhs. If
$\vv\in\VV$, let $\calE$ be the dotted box together with the $\vv$-wire.
Then $\calE$ is a superoperator on $\VV$, and $\denotc\bc=\calE\otimes\id_{\VVall\setminus\VV}$.
\end{proof}

\begin{lemma}\label{lemma:swap}
  If $\fv\bc\cap\fv\bd=\varnothing$,
  then ${\bc;\bd} \deneq {\bd;\bc}$.
\end{lemma}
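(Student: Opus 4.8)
The plan is to reduce everything to Lemma~\ref{lemma:fv} and the fact that superoperators acting on disjoint tensor factors commute. Set $\VV := \fv\bc$ and $\WW := \fv\bd$, so that $\VV \cap \WW = \varnothing$ by hypothesis. By Lemma~\ref{lemma:fv} applied to $\bc$ there is a superoperator $\calE$ on $\VV$ with $\denotc\bc = \calE \otimes \idv{\VVall \setminus \VV}$, and applied to $\bd$ there is a $\calF$ on $\WW$ with $\denotc\bd = \calF \otimes \idv{\VVall \setminus \WW}$. Recall from the semantics of sequential composition that $\denotc{\seq\bc\bd} = \denotc\bd \circ \denotc\bc$ and $\denotc{\seq\bd\bc} = \denotc\bc \circ \denotc\bd$, so it suffices to show that these two compositions coincide.

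Next I would introduce the common three-way factorization $\VVall = \VV \mathbin{\sqcup} \WW \mathbin{\sqcup} \RR$, where $\RR := \VVall \setminus (\VV \cup \WW)$. Because $\VV \cap \WW = \varnothing$, we have $\WW \subseteq \VVall \setminus \VV$ and $\VV \subseteq \VVall \setminus \WW$, so the two denotations can be rewritten with respect to this factorization as
\[
  \denotc\bc = \calE \otimes \idv\WW \otimes \idv\RR,
  \qquad
  \denotc\bd = \idv\VV \otimes \calF \otimes \idv\RR.
\]
Applying the interchange law $(A \otimes B)\circ(C \otimes D) = (A\circ C)\otimes(B\circ D)$ for superoperators on separate tensor factors, both orders of composition evaluate to the same map:
\[
  \denotc\bd \circ \denotc\bc
  = (\idv\VV \circ \calE) \otimes (\calF \circ \idv\WW) \otimes \idv\RR
  = \calE \otimes \calF \otimes \idv\RR
  = (\calE \circ \idv\VV) \otimes (\idv\WW \circ \calF) \otimes \idv\RR
  = \denotc\bc \circ \denotc\bd.
\]
Hence $\denotc{\seq\bc\bd} = \denotc{\seq\bd\bc}$, which is exactly $\seq\bc\bd \deneq \seq\bd\bc$.

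The only genuinely delicate point is the bookkeeping that lets me write both denotations over the \emph{same} factorization $\VV \mathbin{\sqcup} \WW \mathbin{\sqcup} \RR$: I must argue that the identity block $\idv{\VVall\setminus\VV}$ of $\denotc\bc$ covers all of $\WW$, and that the identity block $\idv{\VVall\setminus\WW}$ of $\denotc\bd$ covers all of $\VV$, so that $\denotc\bc$ truly acts trivially on the $\WW$-factor and $\denotc\bd$ truly acts trivially on the $\VV$-factor. This is precisely where disjointness of $\fv\bc$ and $\fv\bd$ enters, and it is what makes the interchange law applicable. Once the two maps are expressed over the common factorization, the commutation is the standard property that operations on disjoint subsystems commute, so no further analysis of the program structure is needed.
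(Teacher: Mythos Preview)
Your proof is correct and follows essentially the same approach as the paper: invoke \autoref{lemma:fv} to write each denotation as a local superoperator tensored with an identity, then use the interchange law for superoperators on disjoint tensor factors to conclude that the two compositions agree. Your explicit three-way factorization $\VV\sqcup\WW\sqcup\RR$ makes the bookkeeping slightly more transparent than the paper's two-way split, but the argument is the same.
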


\begin{proof}
  Let $\VV:=\fv\bc$ and $\WW:=\VVall\setminus\fv\bd$.  By
  \autoref{lemma:fv}, there exists $\calE_{\bc}$ on $\VV$ such that
  $\denotc\bc=\calE_{\bc}\otimes\id_{\WW}$. And there exists
  $\calE_{\bd}$ on $\WW$ such that
  $\denotc\bd=\id_{\VV}\otimes\calE_{\bd}$.
  Thus
  \[
    \denotc{\bc;\bd} =
    \denotc\bd \circ \denotc\bc =
    (\id_{\VV} \otimes \calE_{\bd}) \circ     (\calE_{\bc} \otimes \id_{\WW}) =
    \calE_{\bc} \otimes \calE_{\bd} =
    (\calE_{\bc} \otimes \id_{\WW}) \circ (\id_{\VV} \otimes \calE_{\bd}) =
    \denotc\bc \circ \denotc\bd =
    \denotc{\bd;\bc}.
  \]
\end{proof}

\begin{lemma}\lemmalabel{hgdfaysdgfyasdgfasdfh}
  \begin{compactenum}[(i)]
  \item\itlabel{lemma:local.idem} ${\pb\local \vv{\local \vv\bc}} \deneq {\local \vv\bc}$.
    (And $\pb\LOCAL\vv{\LOCAL\vv\calE}=\LOCAL\vv\calE$.)
  \item\itlabel{lemma:local.swap} ${\pb\local \vv{\local {\ww}\bc}} \deneq {\pb\local {\ww}{\local {\vv}\bc}}$.
    (And $\pb\LOCAL\vv{\LOCAL\ww\calE}=\pb\LOCAL\ww{\LOCAL\vv\calE}$.)
  \end{compactenum}
\end{lemma}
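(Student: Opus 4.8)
The plan is to prove both items at the level of superoperators and then read off the denotational statements for free, using $\denotc{\local\vv\bc}=\pb\LOCAL\vv{\denotc\bc}$. For instance, once $\pb\LOCAL\vv{\LOCAL\vv\calE}=\LOCAL\vv\calE$ is established, setting $\calE:=\denotc\bc$ gives $\denotc{\local\vv{\local\vv\bc}}=\LOCAL\vv{\LOCAL\vv{\denotc\bc}}=\LOCAL\vv{\denotc\bc}=\denotc{\local\vv\bc}$, and similarly for (ii). Throughout I would work with the second, ``factored'' characterization given just after \eqref{eq:semantics.local}, namely $\LOCAL\vv\calE=\calF_\vv\tensor\idv\vv$ where $\calF_\vv(\rho):=\partr{}\vv\calE\pb\paren{\rho\tensor\rhoinit\vv}$ for $\rho$ over $\VVall\setminus\vv$. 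This form isolates exactly the operations that matter—inserting the default state $\rhoinit\vv$ on $\vv$ before $\calE$ and tracing $\vv$ out afterwards—while the factor $\idv\vv$ merely passes the original content of $\vv$ through untouched.

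For (i) I would compute the inner map of $\LOCAL\vv{\LOCAL\vv\calE}$. Writing $\calG:=\LOCAL\vv\calE=\calF_\vv\tensor\idv\vv$, its associated map is $\rho\mapsto\partr{}\vv\calG\pb\paren{\rho\tensor\rhoinit\vv}=\partr{}\vv\pb\paren{\calF_\vv(\rho)\tensor\rhoinit\vv}=\calF_\vv(\rho)\cdot\tr\rhoinit\vv=\calF_\vv(\rho)$, using that $\calF_\vv(\rho)$ lives on $\VVall\setminus\vv$ and that $\tr\rhoinit\vv=1$ (it is a rank-one projector onto a unit vector). Hence $\LOCAL\vv{\LOCAL\vv\calE}=\calF_\vv\tensor\idv\vv=\LOCAL\vv\calE$.

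For (ii) I would first dispatch the degenerate case $\vv=\ww$ (both sides equal $\LOCAL\vv{\LOCAL\vv\calE}$, so (i) applies) and then assume $\vv\neq\ww$. The strategy is to show that both nestings collapse to one explicitly symmetric superoperator. Define $\mathcal K(\alpha):=\partr{}{\{\vv,\ww\}}\calE\pb\paren{\alpha\tensor\rhoinit\vv\tensor\rhoinit\ww}$ for $\alpha$ over $R:=\VVall\setminus\{\vv,\ww\}$. Unfolding $\LOCAL\vv{\LOCAL\ww\calE}=\calF_\vv^{\LOCAL\ww\calE}\tensor\idv\vv$ and decomposing an input over $\VVall\setminus\vv$ as $\sum_j\alpha_j\tensor\beta_j$ with $\alpha_j$ over $R$ and $\beta_j$ over $\ww$, the $\ww$-factor $\beta_j$ is untouched both by the $\idv\ww$ inside $\LOCAL\ww\calE$ and by the outer $\partr{}\vv$, so the computation reduces to $\sum_j\pb\paren{\partr{}\vv\partr{}\ww\calE(\alpha_j\tensor\rhoinit\vv\tensor\rhoinit\ww)}\tensor\beta_j=(\mathcal K\tensor\idv\ww)\pb\paren{\sum_j\alpha_j\tensor\beta_j}$. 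Thus $\LOCAL\vv{\LOCAL\ww\calE}=\mathcal K\tensor\idv\ww\tensor\idv\vv$, and the same computation with $\vv,\ww$ interchanged yields $\LOCAL\ww{\LOCAL\vv\calE}=\mathcal K'\tensor\idv\vv\tensor\idv\ww$ with $\mathcal K'(\alpha)=\partr{}{\{\vv,\ww\}}\calE(\alpha\tensor\rhoinit\ww\tensor\rhoinit\vv)$. Since $\rhoinit\vv\tensor\rhoinit\ww=\rhoinit\ww\tensor\rhoinit\vv$ on the disjoint variables $\vv,\ww$, we get $\mathcal K=\mathcal K'$ and the two sides coincide.

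The main obstacle is purely bookkeeping: because $\calE$ acts on all of $\VVall$, the default-insertions and partial traces for $\vv$ and for $\ww$ are not literally independent but interact through $\calE$. The observation that makes (ii) go through is that, for $\vv\neq\ww$, each of ``tensor in $\rhoinit\vv$'' and ``$\partr{}\vv$'' commutes with the corresponding operation for $\ww$ (partial traces over disjoint variable sets commute, and $\partr{}\vv\circ\partr{}\ww=\partr{}{\{\vv,\ww\}}$), so the order in which the two local scopes are opened and closed is immaterial. Care is also needed to justify extending the product-state computations to arbitrary $\rho$ by linearity and continuity, exactly as the paper already does when decomposing cq-operators as sums of $\pointstate{\cl\VV}m\tensor\rho_m$.
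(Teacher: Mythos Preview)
Your argument is correct and follows the same approach as the paper: both proofs work at the superoperator level via the factored characterization $\LOCAL\vv\calE=\calF_\vv\tensor\idv\vv$, observe for (i) that the extra inner $\rhoinit\vv$-insertion and $\partr{}\vv$ cancel to the identity, and for (ii) that the two nestings yield the symmetric map $\alpha\mapsto\partr{}{\{\vv,\ww\}}\calE(\alpha\tensor\rhoinit\vv\tensor\rhoinit\ww)$ tensored with $\idv{\{\vv,\ww\}}$. The only difference is presentational---the paper draws the two sides as circuit diagrams and reads off the equalities visually, whereas you carry out the same computation symbolically.
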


\begin{proof}
  In this proof, we show the claims in parentheses, involving
  $\LOCAL\vv\calE$ etc. The claims involving $\locala\vv$ etc.~are an immediate consequence.
  
  (We show only the claims in terms of $\local\vv\dots$. The claims in terms of $\LOCAL\vv\dots$ are shown analogously.)
  
  By definition of $\LOCAL\vv\dots,\LOCAL\ww\dots$, the lhs and rhs of \eqref{lemma:local.idem} are
  described by the following circuits:
  \[
    \newcommand\bl{-.78}
  \begin{tikzpicture}[baseline={(0,\bl)}]
  \initializeCircuit;
  \newWires{rest,v0,v,v2}
  \stepForward{6mm};
  \labelWire[\tiny $\VVall\setminus\{\vv\}$]{rest};
  \labelWire[\tiny $\vv$]{v2};
  \stepForward{15mm};
  \skipWire{v};
  \node[wireInput={v}] (init-v) {\footnotesize $\psiinit\vv$};
  \stepForward{2mm};
  \labelWire[\tiny $\vv$]{v};
  \stepForward{9mm};
  \skipWire{v0};
  \node[wireInput={v0}] (init-v0) {\footnotesize $\psiinit\vv$};
  \stepForward{2.5mm};
  \labelWire[\tiny $\vv$]{v0};
  \stepForward{2.5mm};
  \node[gate={rest,v0}] (c) {$\calE$};
  \stepForward{4mm};
  \node[killWire=v0] (kill-v0) {};
  \drawWire{v}; \node[fit=(c)(init-v0)(kill-v0)(\getWireCoord{v}),
                      boxAroundLabeled=\tiny $\denotc{\LOCAL\vv{\LOCAL\vv\calE}}$] (inner-dotted) {};
  \stepForward{5mm};
  \node[killWire=v] (kill-v) {};
  \drawWire{v2}; \node[fit=(c)(init-v)(kill-v)(\getWireCoord{v2})(inner-dotted-label)(inner-dotted),
                       boxAroundLabeled={\tiny $\denotc{\LOCAL\vv\calE}$}] (dotted) {};
  \stepForward{4mm};
  \drawWires{rest,v2};
\end{tikzpicture}
\quad\text{and}\quad
\begin{tikzpicture}[baseline={(0,\bl)}]
  \initializeCircuit;
  \newWires{rest,v,v2}
  \stepForward{6mm};
  \labelWire[\tiny $\VVall\setminus\{\vv\}$]{rest};
  \labelWire[\tiny $\vv$]{v2};
  \stepForward{13mm};
  \skipWire{v};
  \node[wireInput={v}] (init-v) {\footnotesize $\psiinit\vv$};
  \stepForward{2.5mm};
  \labelWire[\tiny $\vv$]{v};
  \stepForward{2.5mm};
  \node[gate={rest,v}] (c) {$\calE$};
  \stepForward{3mm};
  \node[killWire=v] (kill-v) {};
  \drawWire{v2}; \node[boxAroundLabeled={\tiny $\denotc{\LOCAL\vv\calE}$},
                       fit=(c)(init-v)(kill-v)(\getWireCoord{v2})] (dotted) {};
  \stepForward{4mm};
  \drawWires{rest,v2};
\end{tikzpicture}
\]
The only difference is the third wire in the lhs which is initialized
with $\psiinit\vv$ and then discarded again, is the same as the
identity. Thus the lhs and rhs are equal, \eqref{lemma:local.idem}
follows.

\medskip

By definition of the semantics of the language, the lhs and rhs of \eqref{lemma:local.swap} are
described by the following circuits:
\[
  \def\bl{-.78}
  \begin{tikzpicture}[baseline={(0,\bl)}]
    \initializeCircuit;
    \newWires{rest,v,w,w2,v2}
    \stepForward{7mm};
    \labelWire[\tiny $\VVall\setminus\{\vv,\ww\}$]{rest};
    \labelWire[\tiny $\vv$]{v2};
    \labelWire[\tiny $\ww$]{w2};
    \stepForward{15mm};
    \skipWire{v};
    \node[wireInput={v}] (init-v) {\footnotesize $\psiinit\vv$};
    \stepForward{2mm};
    \labelWire[\tiny $\vv$]{v};
    \stepForward{9mm};
    \skipWire{w};
    \node[wireInput={w}] (init-w) {\footnotesize $\psiinit\ww$};
    \stepForward{2mm};
    \labelWire[\tiny $\ww$]{w};
    \stepForward{2.5mm};
    \node[gate={rest,v,w}] (c) {$\calE$};
    \stepForward{3mm};
    \node[killWire=w] (kill-w) {};
    \drawWire{w2}; \node[boxAroundLabeled={\tiny $\denotc{\LOCAL\ww\calE}$},
                         fit=(c)(init-w)(kill-w)(\getWireCoord{w2})] (dotted-w) {};
    \stepForward{3mm};
    \node[killWire=v] (kill-v) {};
    \drawWire{v2}; \node[boxAroundLabeled={\tiny $\denotc{\LOCAL\vv{\LOCAL\ww\calE}}$},
                         fit=(c)(init-v)(kill-v)(\getWireCoord{v2})(dotted-w-label)] (dotted-v) {};
    \stepForward{4mm};
    \drawWires{rest,v2,w2};
  \end{tikzpicture}
  \quad\text{and}\quad
  \begin{tikzpicture}[baseline={(0,\bl)}]
    \initializeCircuit;
    \newWires{rest,v,w,v2,w2}
    \stepForward{7mm};
    \labelWire[\tiny $\VVall\setminus\{\vv,\ww\}$]{rest};
    \labelWire[\tiny $\vv$]{v2};
    \labelWire[\tiny $\ww$]{w2};
    \stepForward{15mm};
    \skipWire{w};
    \node[wireInput={w}] (init-w) {\footnotesize $\psiinit\ww$};
    \stepForward{2mm};
    \labelWire[\tiny $\ww$]{w};
    \stepForward{9mm};
    \skipWire{v};
    \node[wireInput={v}] (init-v) {\footnotesize $\psiinit\vv$};
    \stepForward{2mm};
    \labelWire[\tiny $\vv$]{v};
    \stepForward{2.5mm};
    \node[gate={rest,v,w}] (c) {$\calE$};
    \stepForward{3mm};
    \node[killWire=v] (kill-v) {};
    \drawWire{v2}; \node[boxAroundLabeled={\tiny $\denotc{\LOCAL\vv\calE}$},
                         fit=(c)(init-v)(kill-v)(\getWireCoord{v2})] (dotted-v) {};
    \stepForward{3mm};
    \node[killWire=w] (kill-w) {};
    \drawWire{w2}; \node[boxAroundLabeled={\tiny $\denotc{\LOCAL\ww{\LOCAL\vv\calE}}$},
                         fit=(c)(init-w)(kill-w)(\getWireCoord{w2})(dotted-v-label)] (dotted-w) {};
    \stepForward{4mm};
    \drawWires{rest,v2,w2};
  \end{tikzpicture}
\]
The only difference is the order in which the last two wires are
drawn which has no semantic meaning.  Thus the lhs and rhs are
equal, \eqref{lemma:local.swap} follows.
\end{proof}

This lemma implies that the order in which variables are declared
local does not matter.  This motivates the following shorthand: For a
finite $\VV$
we introduce the following shorthand:
$\localp\VV\bc := \localp{\vv_1}{\dots; \ \local{\vv_n}\bc}$
where $\vv_1,\dots,\vv_n$
are the elements of $\VV$ in arbitrary order.

Similarly, we define $\LOCAL\VV\calE := \LOCAL{\vv_1}{\LOCAL{\vv_2}{\dots\LOCAL{\vv_n}\calE}}$.

As an immediate consequence of the definition and \lemmaref{lemma:local.swap}, we get
\begin{lemma}\lemmalabel{locals.simplefacts}
  \begin{compactenum}[(i)]
  \item $\local\varnothing\bc = \bc$. (Also $\LOCAL\varnothing\calE=\calE$.)
  \item\itlabel{lemma:local.merge} ${\pb\local \VV{\local {\VV'}\bc}} \deneq {\pb\local {\VV\cup\VV'}\bc}$. (Also $\LOCAL\VV{\LOCAL{\VV'}\calE} = \LOCAL{\VV\cup\VV'}\calE$.)
  \item\itlabel{lemma:local.LOCAL} $\denotc{\local\VV\bc} = \pb\LOCAL\VV{\denotc\bc}$.
  \end{compactenum}
\end{lemma}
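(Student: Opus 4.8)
The plan is to reduce all three parts to facts that are already available, namely the single-variable identity $\denotc{\local\vv\bc}=\LOCAL\vv{\denotc\bc}$ and the two items of the preceding lemma: idempotence \lemmaref{lemma:local.idem} and commutativity of nested $\localkw$-declarations \lemmaref{lemma:local.swap}. Part~(i) is immediate from the shorthand $\local\VV\bc := \local{\vv_1}{\dots; \local{\vv_n}\bc}$: for $\VV=\varnothing$ the list of declarations is empty, so $\local\varnothing\bc$ is literally $\bc$, and the empty composition of $\LOCAL{\vv_i}$-operators is the identity, giving $\LOCAL\varnothing\calE=\calE$. Part~(iii) I would prove by induction on $n=\abs\VV$: writing $\VV=\{\vv_1,\dots,\vv_n\}$, I peel off the outermost declaration using the single-variable definition $\denotc{\local{\vv_1}\bd}=\LOCAL{\vv_1}{\denotc\bd}$ with $\bd:=\local{\vv_2}{\dots\local{\vv_n}\bc}$, then apply the induction hypothesis to $\bd$, obtaining $\denotc{\local\VV\bc}=\LOCAL{\vv_1}{\dots\LOCAL{\vv_n}{\denotc\bc}}=\LOCAL\VV{\denotc\bc}$.

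The substance is in part~(ii). First I would record that, by repeated use of the commutativity part \lemmaref{lemma:local.swap}, the denotation of a nesting of single $\localkw$-declarations is invariant under permuting the declared variables; this is precisely what makes $\local\VV\bc$ well-defined independently of the chosen ordering, and I would use it freely. Now $\local\VV{\local{\VV'}\bc}$ unfolds to a nesting of single declarations over the list $\vv_1,\dots,\vv_n,\vv'_1,\dots,\vv'_m$ (the elements of $\VV$ followed by those of $\VV'$). When $\VV\cap\VV'=\varnothing$ these variables are pairwise distinct and form exactly $\VV\cup\VV'$, so commutativity alone yields $\local\VV{\local{\VV'}\bc}\deneq\local{\VV\cup\VV'}\bc$. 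For the general case, each $\ww\in\VV\cap\VV'$ occurs twice in the list; I would use commutativity to bring its two occurrences adjacent and then idempotence $\local\ww{\local\ww\bc}\deneq\local\ww\bc$ of \lemmaref{lemma:local.idem} to merge them into one. Repeating this for every variable in $\VV\cap\VV'$ leaves a nesting with exactly one declaration per element of $\VV\cup\VV'$, which, again by commutativity, is $\local{\VV\cup\VV'}\bc$.

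The parenthetical superoperator claims $\LOCAL\VV{\LOCAL{\VV'}\calE}=\LOCAL{\VV\cup\VV'}\calE$ and $\LOCAL\varnothing\calE=\calE$ are established by the identical arguments, using instead the $\LOCAL$-versions of idempotence and commutativity stated in the same lemma. I do not expect a genuine obstacle here: the only care needed is the bookkeeping in the overlap case of~(ii), i.e.\ checking that finitely many applications of swap and idempotence really collapse the doubled declarations to single ones. Since $\VV$ and $\VV'$ are finite this process terminates, and once the two single-variable laws are in hand the whole argument is purely combinatorial, which is why the statement can be advertised as an immediate consequence of the definition together with \lemmaref{lemma:local.swap}.
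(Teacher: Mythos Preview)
Your proposal is correct and matches the paper's approach: the paper states the lemma as ``an immediate consequence of the definition and \lemmaref{lemma:local.swap}'' without further elaboration, and your argument is precisely the natural unpacking of that claim. You are in fact slightly more careful than the paper, since you explicitly invoke idempotence \lemmaref{lemma:local.idem} to handle the case $\VV\cap\VV'\neq\varnothing$ in part~(ii), which the paper's one-line justification does not mention but which is indeed needed for the statement as written.
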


\begin{lemma}\label{lemma:local.sum}
  Let $I$ be a set and $\calE_i$ ($i\in I$) be superoperators.
  Assume that $\sum_{i\in I}\calE_i$ converges.
  Then
  $\sum_{i\in I}\LOCAL\VV{\calE_i} = \LOCAL\VV{\sum_{i\in I}\calE_i}$
  (and the lhs converges).\footnote{%
    Note that the converse does not hold: If
    $\sum_{i\in I}\LOCAL\VV{\calE_i}$ converges,
    $\sum_{i\in I}\calE_i$ does not necessarily converge.  For
    example, let $\calE_i(\rho):=\proj\psi \rho\proj\psi$ where $\psi$
    is a normalized vector orthogonal to $\psiinit\VV$. Then
    $\LOCAL\VV{\calE_i} = 0$ and thus
    $\sum_{i\in I}\LOCAL\VV{\calE_i}$ converges trivially.  But
    $\sum_{i\in I}\calE_i(\proj\psi)= \sum_{i\in I}\proj\psi$ diverges
    (assuming $I$ is infinite), and thus $\sum_{i\in I}\calE_i$ does
    not converge pointwise.  } Here convergence is pointwise
  convergence with respect to the trace-norm.
\end{lemma}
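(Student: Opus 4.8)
The plan is to reduce to a single local variable and then isolate, as the one genuinely analytic step, the interchange of the infinite sum with the operation ``tensor with the identity''.

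First I would reduce to $\abs\VV=1$ by induction on $\abs\VV$, using the definition $\LOCAL\VV\calE=\LOCAL\vv{\LOCAL{\VV'}\calE}$ (with $\VV'=\VV\setminus\{\vv\}$) together with $\LOCAL\varnothing\calE=\calE$ from \lemmaref{locals.simplefacts}. Writing $\calF:=\sum_{i}\calE_i$, the induction hypothesis supplies both the convergence of $\sum_i\LOCAL{\VV'}{\calE_i}$ and the identity $\sum_i\LOCAL{\VV'}{\calE_i}=\LOCAL{\VV'}\calF$. Applying the single-variable statement to the family $\LOCAL{\VV'}{\calE_i}$ then yields convergence of $\sum_i\LOCAL\vv{\LOCAL{\VV'}{\calE_i}}=\sum_i\LOCAL\VV{\calE_i}$ and the value $\LOCAL\vv{\LOCAL{\VV'}\calF}=\LOCAL\VV\calF$, as required. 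So everything comes down to the case of one variable.

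For that case I would read off from \eqref{eq:semantics.local} that $\LOCAL\vv\calE=A\circ(\calE\otimes\idv{\vv'})\circ B$, where $B(\rho):=\toE{\swapcop\vv}(\rho\otimes\rhoinit\vv)$ and $A(X):=\partr{}\vv\toE{\swapcop\vv}(X)$ are \emph{fixed} linear maps, both contractive for the trace norm (tensoring with the state $\rhoinit\vv$ is a trace-norm isometry, unitary conjugation preserves the trace norm, and the partial trace is trace-norm contractive). In this form the linearity of $\calE\mapsto\LOCAL\vv\calE$ for finite sums is immediate. For an infinite family, fix a trace-class $\rho$ and set $\tau:=B(\rho)$; since $A$ is bounded it is enough to establish the key sub-claim that $\sum_i(\calE_i\otimes\idv{\vv'})(\tau)$ converges in trace norm to $(\calF\otimes\idv{\vv'})(\tau)$, and then to pull the continuous map $A$ through the sum.

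The sub-claim is the main obstacle, because pointwise trace-norm convergence of $\sum_i\calE_i$ is not obviously preserved by $\otimes\,\idv{\vv'}$ when the $\vv'$-system is infinite-dimensional. Here I would exploit complete positivity. Splitting any trace-class $\tau$ into its (four) positive parts, it suffices to treat $\tau\ge0$. Then the partial sums $\sum_{i\in F}(\calE_i\otimes\idv{\vv'})(\tau)$ over finite $F$ form an increasing net of positive trace-class operators, and
\[
  \tr\pb\paren{(\calE_i\otimes\idv{\vv'})(\tau)}=\tr\calE_i\pb\paren{\partr{}{\vv'}\tau},
\]
so their total trace equals $\sum_i\tr\calE_i(\sigma)$ with $\sigma:=\partr{}{\vv'}\tau\ge0$. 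The hypothesis that $\sum_i\calE_i$ converges forces $\sum_i\tr\calE_i(\sigma)=\tr\calF(\sigma)<\infty$; hence (using $\norm{X-Y}=\tr(X-Y)$ for $X\ge Y\ge0$) the increasing net is trace-norm Cauchy and converges. To identify its limit, I would note that the map $\tau\mapsto\sum_i(\calE_i\otimes\idv{\vv'})(\tau)$ and the superoperator $\calF\otimes\idv{\vv'}$ are both trace-norm bounded and agree on every product operator $\sigma\otimes\kappa$ (there the sum commutes with $\otimes\kappa$ by continuity of $X\mapsto X\otimes\kappa$); since finite combinations of such products are trace-norm dense in the trace-class operators, the two maps coincide. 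This proves the sub-claim, and applying the continuous map $A$ and then unwinding the induction of the first paragraph completes the argument.
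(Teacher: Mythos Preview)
Your proposal is correct and follows essentially the same route as the paper: reduce to a single variable by induction, isolate the key analytic step as the interchange $\sum_i(\calE_i\otimes\id)=(\sum_i\calE_i)\otimes\id$, prove convergence of that sum via positivity and the trace bound on $\calF$, and then pull the surrounding trace-norm bounded maps through the limit. The only cosmetic difference is that the paper identifies the limit via an explicit $\delta/3$ contradiction on a single product $\sigma\otimes\tau$, whereas you invoke the equivalent ``two bounded maps agreeing on a trace-norm dense set'' principle.
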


\begin{proof}
  In this proof, unless mentioned otherwise, convergence of trace-class operators is with respect to trace-norm,
  and convergence of superoperators is pointwise  with respect to trace-norm.
  Whenever we write an equality, we mean that equality holds whenever the sums in lhs and rhs converge, and
  that the lhs converges if the rhs does.

  \begin{claim}\label{claim:sum.id}
    $\sum_i (\calE_i\otimes\id) = (\sum_i\calE_i) \otimes \id$.
  \end{claim}

  \begin{claimproof}
    Let $\calL':=\sum_{i\in I}\calE_i$. By assumption, $\calL'$ exists
    and is trace bounded. Let $B$ such that
    $\tr\calL'(\rho)\leq B\tr\rho$ for all positive $\rho$.  For
    finite $F$ and positive $\rho$, we have
    $\sum_{i\in F}\tr(\calE_i\otimes\id)(\rho) =
    \sum_{i\in F}\calE_i(\partr{}2\rho)\leq
    \calL'(\partr{}2\rho)$. Thus the sum
    $\sum_{i\in F}\tr(\calE_i\otimes\id)(\rho)$ is bounded (as a function
    of finite $F$). Furthermore, since $\calE_i$ is completely
    positive, $(\calE_i\otimes\id)(\rho)$ is positive.  Thus
    $\sum_{i\in F}(\calE_i\otimes\id)(\rho)$ is bounded and
    increasing, hence it converges. Thus the limit
    $\calL''(\rho) := \sum_{i\in I}(\calE_i\otimes\id)(\rho)$ exists
    for positive $\rho$.  Since every trace class $\rho$ is a linear
    combination of four positive $\rho$, the limit also exists for
    arbitrary~$\rho$.

    We are left to show that $(\calL'\otimes\id)=\calL''$. Assume this is not the
    case. Since the set of all trace class operators is spanned by
    operators $\sigma\otimes\tau$ with unit trace, this implies that
    there are $\sigma,\tau$ with unit trace such that
    $(\calL'\otimes\id)(\sigma\otimes\tau) \neq \calL''(\sigma\otimes\tau)$. Let
    $\delta := \norm{ (\calL'\otimes\id)(\sigma\otimes\tau) -
      \calL''(\sigma\otimes\tau) }_{\tr}$.  Since $\calL'(\sigma)$ is
    the limit of $\sum_i\calE_i(\sigma)$, for sufficiently large
    finite $F$,
    \begin{equation}
      \label{eq:largeF1}
      \pB\norm{\sum_{i\in F}\calE_i(\sigma) - \calL'(\sigma)}_{\tr} \leq \delta/3.
    \end{equation}
    And since $\calL''(\sigma\otimes\tau)$ is the limit of
    $\sum_i(\calE_i\otimes\id)(\sigma\otimes\tau)$, for sufficiently
    large finite $F$,
    \begin{equation}
      \label{eq:largeF2}
      \pB\norm{\sum_{i\in F}(\calE_i\otimes\id)(\sigma\otimes\tau) - \calL''(\sigma\otimes\tau)}_{\tr} \leq \delta/3.
    \end{equation}
    Fix an $F$ such that both \eqref{eq:largeF1} and
    \eqref{eq:largeF2} hold.

    Furthermore,
    \begin{multline*}
      \pB\norm{\sum_{i\in F}\calE_i(\sigma) - \calL'(\sigma)}_{\tr}
      =
      \pB\norm{
        \pB\paren{\sum_{i\in F}\calE_i(\sigma) - \calL'(\sigma)}\otimes\tau}_{\tr}
      \\=
      \pB\norm{\sum_{i\in F}(\calE_i\otimes\id)(\sigma\otimes\tau) - (\calL'\otimes\id)(\sigma\otimes\tau)}_{\tr}.
    \end{multline*}
    With \eqref{eq:largeF1}, this implies
    \[
      \pB\norm{\sum_{i\in F}(\calE_i\otimes\id)(\sigma\otimes\tau) - (\calL'\otimes\id)(\sigma\otimes\tau)}_{\tr}
      \leq \delta/3.
    \]
    With \eqref{eq:largeF2} and the triangle inequality,
    we get $\norm{(\calL'\otimes\id)(\sigma\otimes\tau)
      - \calL''(\sigma\otimes\tau)} \leq 2\delta/3$,
    in contradiction to the definition of $\delta$.
    Thus  $(\calL'\otimes\id)=\calL''$.
  \end{claimproof}

  \begin{claim}\label{claim:local.sum1}
    $\sum_{i}\LOCAL\vv{\calE_i}
    = \LOCAL\vv{\sum_{i}\calE_i}$.
  \end{claim}

  (Note: the index of $\LOCAL\vv{\dots}$ is a single variable $\vv$, not $\VV$.)
  
  \begin{claimproof}
    We have:
  \begin{align*}
    \sum_{i}\LOCAL\vv{\calE_i}(\rho)
    &\eqrefrel{eq:semantics.local}=
      \sum_i
      \partr{}\vv
      \pB\paren{
      \toE{\swapcop{\vv}}
      \pB\paren{
      {\pb\paren{\calE_i\otimes\idv{\vv'}}
      \pb\paren{
      \toE{\swapcop{\vv}}
      {\paren{\rho\otimes\rhoinit\vv}}}}}}
      \\
    &\starrel=
      \partr{}\vv
      \pB\paren{
      \toE{\swapcop{\vv}}
      \pB\paren{
      \sum_i
      {\pb\paren{\calE_i\otimes\idv{\vv'}}
      \pb\paren{
      \toE{\swapcop{\vv}}
      {\paren{\rho\otimes\rhoinit\vv}}}}}}
      \\
    &\starstarrel=
      \partr{}\vv
      \pB\paren{
      \toE{\swapcop{\vv}}
      \pB\paren{
      {\pb\paren{      \sum_i \paren{\calE_i\otimes\idv{\vv'}}}
      \pb\paren{
      \toE{\swapcop{\vv}}
      {\paren{\rho\otimes\rhoinit\vv}}}}}}
      \\
      &\txtrel{Cl.~\ref{claim:sum.id}}=\,
      \partr{}\vv
      \pB\paren{
      \toE{\swapcop{\vv}}
      \pB\paren{
      {{      \pb\paren{\paren{\textstyle\sum_i \calE_i}\otimes\idv{\vv'}}}
      \pb\paren{
      \toE{\swapcop{\vv}}
      {\paren{\rho\otimes\rhoinit\vv}}}}}}
    \\
    &\eqrefrel{eq:semantics.local}=
      \pb\LOCAL\vv{\textstyle\sum_i\calE_i}
  \end{align*}
  Here each equality means that the lhs converges if the rhs converges.
  And $(*)$ follows because $\partr{}\vv$ and $\toE{\swapcop{\vv}}$ are trace-preserving.
  And $(**)$ follows by definition of pointwise convergence.
\end{claimproof}

The lemma follows by induction over $\VV$ with \autoref{claim:local.sum1}.
\end{proof}

For simpler notation, we write \symbolindexmark\init{$\init\vv$}
for $\Qinit\vv{\basis{}{\initial\vv}}$
or $\assign\vv{\initial\vv}$,
depending on whether $\vv$ is quantum or classical.
For a finite set $\VV$, let $\init\VV$ denote $\init{\vv_1};\dots;\init{\vv_n}$
where $\vv_1,\dots,\vv_n$ are the elements of $\VV$ in some arbitrary order.
(The order does not matter due to \autoref{lemma:swap}.)

\begin{lemma}\label{lemma:init.sem}
  For finite $\VV=\{\vv_1,\dots,\vv_n\}$, let
  \symbolindexmark\Einit{$\Einit\VV$} be the superoperator
  $\rho\mapsto \proj{\psiinit\VV}\otimes\tr\rho$, where
  $\psiinit\VV := \psiinit{\vv_1}\otimes\dots\otimes \psiinit{\vv_n}$.

  Then
  $\denotc{\init\VV} = \Einit\VV \otimes \id_{\VVall\setminus\VV}$.
\end{lemma}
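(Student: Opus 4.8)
The plan is to reduce everything to a single-variable computation and then induct on $n=\abs\VV$. Because $\denotc{\init\VV}$ is independent of the order of the $\vv_i$ by \autoref{lemma:swap}, and $\Einit\VV\otimes\id_{\VVall\setminus\VV}$ is manifestly symmetric in the $\vv_i$, I may order the variables freely and peel them off one at a time. Throughout I will check the claimed operator identity on generating cq-operators $\pointstate{\cl\VVall}m\otimes\rho_m$, which suffices since both sides are linear and continuous and every $\rho$ is a (possibly infinite) combination of such generators.

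First I would prove the single-variable case $\denotc{\init\vv}=\Einit{\{\vv\}}\otimes\id_{\VVall\setminus\{\vv\}}$ by distinguishing quantum and classical $\vv$. If $\vv$ is quantum, $\init\vv=\Qinit\vv{\basis{}{\initial\vv}}$, and the semantics of $\Qinit$ send $\pointstate{\cl\VVall}m\otimes\rho_m$ to $\pointstate{\cl\VVall}m\otimes\partr{}\vv{\rho_m}\otimes\proj{\psiinit\vv}$; this is exactly tracing out $\vv$ and re-inserting the default pure state, i.e.\ $(\Einit{\{\vv\}}\otimes\id_{\VVall\setminus\{\vv\}})(\pointstate{\cl\VVall}m\otimes\rho_m)$. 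If $\vv$ is classical, $\init\vv=\assign\vv{\initial\vv}$, and the semantics of $\assign$ replace $m$ by $\upd m\vv{\initial\vv}$; since the basis state $\basis{\cl\VVall}m$ factors over variables, this is precisely the replacement of the $\vv$-factor $\proj{\basis\vv{m(\vv)}}$ by $\proj{\basis\vv{\initial\vv}}=\proj{\psiinit\vv}$, which again is tracing out $\vv$ and inserting the default, matching $\Einit{\{\vv\}}\otimes\id_{\VVall\setminus\{\vv\}}$.

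For the inductive step I would write $\init\VV=\init{\VV'};\init{\vv_n}$ with $\VV':=\VV\setminus\{\vv_n\}$, so $\denotc{\init\VV}=\denotc{\init{\vv_n}}\circ\denotc{\init{\VV'}}$. Applying the induction hypothesis to $\VV'$ and the single-variable case to $\vv_n$ rewrites this as $(\Einit{\{\vv_n\}}\otimes\id_{\VVall\setminus\{\vv_n\}})\circ(\Einit{\VV'}\otimes\id_{\VVall\setminus\VV'})$. The inner map sends $\rho$ to $\proj{\psiinit{\VV'}}\otimes\partr{}{\VV'}\rho$. Because $\vv_n\notin\VV'$, the $\vv_n$-factor lives inside $\partr{}{\VV'}\rho$ and is untouched by $\proj{\psiinit{\VV'}}$, so the outer map traces out $\vv_n$ and inserts $\proj{\psiinit{\vv_n}}$, giving $\proj{\psiinit{\VV'}}\otimes\proj{\psiinit{\vv_n}}\otimes\partr{}\VV\rho=\proj{\psiinit\VV}\otimes\partr{}\VV\rho=(\Einit\VV\otimes\id_{\VVall\setminus\VV})(\rho)$, where I used $\psiinit\VV=\psiinit{\VV'}\otimes\psiinit{\vv_n}$ and that tracing out $\VV'$ and then $\vv_n$ amounts to tracing out $\VV$.

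All the manipulations are routine; the two points that need care are the uniform treatment of the base case across the classical/quantum split --- in particular recognizing that a classical assignment to the default value equals, at the Hilbert-space level, tracing the variable out and re-inserting the default basis state $\proj{\psiinit\vv}$ --- and, in the inductive step, invoking the disjointness $\vv_n\notin\VV'$ so that the two initializations act on independent tensor factors and the partial traces compose to $\partr{}\VV$. I expect the base case to be the main (if modest) obstacle, as it is where the two syntactic forms of $\init\vv$ must both be reconciled with the single semantic map $\Einit{\{\vv\}}$.
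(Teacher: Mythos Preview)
Your proposal is correct and follows essentially the same approach as the paper: establish the single-variable case from the semantics of $\assign{}{}$ and $\Qinit{}{}$, then induct using the fact that composing the single-variable initializations yields the joint initialization. The paper states this more tersely (it simply asserts that $\Einit\VV$ and $\Einit\WW$ compose to $\Einit{\VV\WW}$ and invokes induction), but your expanded treatment of the classical/quantum base cases and the explicit tracking of partial traces in the inductive step are exactly the content behind that assertion.
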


\begin{proof}
  For $\VV=\{\qq\}$ or $\VV=\{\xx\}$, this follows from the definition
  of $\init\qq$ and $\init\xx$, as well as the semantics of assignment
  and quantum initialization. By definition of
  $\Einit\VV$,
  $\Einit\VV \otimes \id_{\VVall\setminus\VV}
  \otimes
  \Einit\WW \otimes \id_{\VVall\setminus\WW}
  =
  \Einit{\VV\WW} \otimes \id_{\VVall\setminus\VV\WW}.
  $
  The lemma then follows by induction.  
\end{proof}

\begin{lemma}\lemmalabel{dfsjdfjsdfhuiyers}
  \begin{compactenum}[(i)]
  \item\itlabel{lemma:unused} ${\local\VV\bc} \deneq \bc$ if $\VV\cap\fv\bc=\varnothing$.
    
  \item\itlabel{lemma:add.init.begin}
    ${\local\VV\bc}
    \deneq {\local\VV{(\init{\VV'};\bc)}}$
    if $\VV'\subseteq\VV$.
  \item\itlabel{lemma:add.init.end}
    ${\local\VV\bc}
    \deneq {\local\VV{(\bc;\init\VV')}}$
    if $\VV'\subseteq\VV$.
  \end{compactenum}
\end{lemma}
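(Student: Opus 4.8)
Since all three claims assert denotational equivalence and $\denotc{\local\vv\bc}=\LOCAL\vv{\denotc\bc}$, the plan is to argue entirely at the level of superoperators, and to reduce each part to the case of a single local variable, lifting to a finite set $\VV$ afterwards by induction on $\abs\VV$. The lifting uses $\local\varnothing\bc=\bc$ (\autoref{locals.simplefacts}) and the merge law $\lemmaref{lemma:local.merge}$ to peel off one local variable at a time, together with $\lemmaref{lemma:local.swap}$ and \autoref{lemma:swap} to reorder the commuting statements $\init{\vv_i}$. Throughout I work with the equivalent form $\LOCAL\vv\calE=\calF\tensor\id_\vv$, where $\calF(\rho):=\partr{}\vv\calE(\rho\tensor\rhoinit\vv)$, and with $\denotc{\init\vv}=\Einit\vv\tensor\id_{\VVall\setminus\vv}$ from \autoref{lemma:init.sem}.

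\textbf{Part (i).}
For a single $\vv\notin\fv\bc$, \autoref{lemma:fv} gives $\denotc\bc=\calE'\tensor\id_\vv$ with $\calE'$ acting on $\VVall\setminus\vv$. Substituting into the equivalent form, $\calF(\rho)=\partr{}\vv\bigl(\calE'(\rho)\tensor\rhoinit\vv\bigr)=\calE'(\rho)$ because $\tr\rhoinit\vv=1$; hence $\LOCAL\vv{\denotc\bc}=\calE'\tensor\id_\vv=\denotc\bc$, i.e.\ $\local\vv\bc\deneq\bc$. For the general statement, write $\VV=\{\vv\}\cup\WW$ with $\VV\cap\fv\bc=\varnothing$; then $\local\VV\bc\deneq\local\WW{\local\vv\bc}\deneq\local\WW\bc\deneq\bc$ by $\lemmaref{lemma:local.merge}$, the single-variable case, and the induction hypothesis (noting $\WW\cap\fv\bc=\varnothing$).

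\textbf{Parts (ii) and (iii).}
The engine here is two identities for $\Einit\vv$. Unfolding its definition gives, for every operator $\sigma$ over $\VVall$, that $(\Einit\vv\tensor\id)(\sigma)=\rhoinit\vv\tensor\partr{}\vv\sigma$ (the $\vv$-register is discarded and the default $\rhoinit\vv$ is written back). From this I read off: first, $(\Einit\vv\tensor\id)(\rho\tensor\rhoinit\vv)=\rho\tensor\rhoinit\vv$ (re-initializing a register already holding its default value is a no-op, using $\tr\rhoinit\vv=1$); second, $\partr{}\vv\circ(\Einit\vv\tensor\id)=\partr{}\vv$ (re-initializing a register that is about to be traced out is a no-op). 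With $\calE:=\denotc\bc$, the single-variable versions $\local\vv{(\init\vv;\bc)}\deneq\local\vv\bc$ and $\local\vv{(\bc;\init\vv)}\deneq\local\vv\bc$ then follow by feeding $\calE\circ(\Einit\vv\tensor\id)$ resp.\ $(\Einit\vv\tensor\id)\circ\calE$ through $\calF$:
\[
\calF_{(ii)}(\rho)=\partr{}\vv\calE\bigl((\Einit\vv\tensor\id)(\rho\tensor\rhoinit\vv)\bigr)=\partr{}\vv\calE(\rho\tensor\rhoinit\vv),
\]
\[
\calF_{(iii)}(\rho)=\partr{}\vv(\Einit\vv\tensor\id)\calE(\rho\tensor\rhoinit\vv)=\partr{}\vv\calE(\rho\tensor\rhoinit\vv),
\]
both equal to the $\calF$ of the unmodified $\local\vv\bc$. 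To reach the set versions I insert the initializations one variable at a time, always at the front (for (ii)) or at the back (for (iii)); in the $j$-th step $\init{\vv_j}$ sits at the front/back of the body, so pulling $\local{\vv_j}$ out via $\lemmaref{lemma:local.merge}$ lets the matching single-variable equivalence delete it, and $\lemmaref{lemma:local.swap}$ finally reassembles $\init{\vv_1};\dots;\init{\vv_k}$ into $\init{\VV'}$.

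\textbf{Main obstacle.}
None of the steps is deep once the form $\LOCAL\vv\calE=\calF\tensor\id_\vv$ and \autoref{lemma:init.sem} are available; the real care goes into the tensor-factor bookkeeping (tracking which superoperator acts on the $\vv$-register versus the remaining registers, and factoring $\rhoinit\VV=\rhoinit{\VV'}\tensor\rhoinit{\VV\setminus\VV'}$ correctly) and into the identity $\partr{}\vv\circ(\Einit\vv\tensor\id)=\partr{}\vv$ behind (iii), which formalizes the intuition that a local variable's final contents are discarded on leaving the scope and may therefore be freely overwritten just beforehand.
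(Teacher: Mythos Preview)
Your proposal is correct and follows essentially the same approach as the paper. The paper presents the single-variable cases via circuit diagrams rather than the algebraic form $\LOCAL\vv\calE=\calF\tensor\id_\vv$, but the underlying identities are identical: for (ii) the paper uses $\Einit\vv(\rhoinit\vv)=\rhoinit\vv$, and for (iii) it uses that $\Einit\vv$ is trace-preserving (your $\partr{}\vv\circ(\Einit\vv\tensor\id)=\partr{}\vv$); the lift to a set $\VV$ by induction is also the same. One small slip: the reordering of the $\init{\vv_i}$ statements uses \autoref{lemma:swap} (commuting programs with disjoint free variables), not $\lemmaref{lemma:local.swap}$, though you cite both in your overview.
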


\begin{proof}
  We first show \eqref{lemma:unused}.  By \autoref{lemma:fv}, there is
  an $\calE$ on $\fv\bc$ such that
  $\denotc\bc=\calE\otimes\id_{\VVall\setminus\fv\bc}$.  Thus we can
  represent $\denotc\bc$ and $\denotc{\local\VV\bc}$ by the following
  circuits:
  \[
    \begin{tikzpicture}[baseline={(0,-.5)}]
      \initializeCircuit;
      \newWires{rest,fvc,V};
      \stepForward{8mm}
      \labelWire[\tiny $\VVall\setminus\fv\bc\setminus\VV$]{rest};
      \labelWire[\tiny $\fv\bc$]{fvc};
      \labelWire[\tiny $\VV$]{V};
      \stepForward{12mm};
      \node[gate=fvc] (c) {$\calE$};
      \drawWires{rest,V};
      \node[boxAroundLabeled={\tiny$\denotc\bc$}, fit=(c)(\getWireCoord{rest})(\getWireCoord{V})] (c-box) {};
      \stepForward{4mm};
      \drawWires{rest,fvc,V};
    \end{tikzpicture}
    \qquad\text{and}\qquad
    \begin{tikzpicture}[baseline={(0,-.7)}]
      \initializeCircuit;
      \newWires{rest,fvc,V,V2};
      \stepForward{8mm}
      \labelWire[\tiny $\VVall\setminus\fv\bc\setminus\VV$]{rest};
      \labelWire[\tiny $\fv\bc$]{fvc};
      \labelWire[\tiny $\VV$]{V2};
      \stepForward{17mm};
      \node[wireInput=V] (init-V) {\small $\psiinit\VV$};
      \stepForward{2.5mm};
      \labelWire[\tiny $\VV$]{V};
      \stepForward{2.5mm};
      \node[gate=fvc] (c) {$\calE$};
      \drawWires{rest,V};
      \node[boxAroundLabeled={\tiny$\denotc\bc$}, fit=(c)(\getWireCoord{rest})(\getWireCoord{V})] (c-box) {};
      \stepForward{5mm};
      \node[killWire=V] (kill-V) {};
      \drawWires{rest,V2};
      \node[boxAroundLabeled={\tiny$\denotc{\local\VV\bc}$}, fit=(c-box-label)(init-V)(kill-V)(\getWireCoord{rest})(\getWireCoord{V2})] (c-box) {};
      \stepForward{4mm};
      \drawWires{rest,fvc,V2};
    \end{tikzpicture}
  \]
  The only difference is the third wire that is created and discarded on the rhs.
  This is equal to the identity, thus the two circuits are identical and we have
  $\bc\deneq\local\VV\bc$. This shows \eqref{lemma:unused}.

  \medskip

  We now show \eqref{lemma:add.init.begin}. We show the special case
  ${\local\vv\bc}
  \deneq {\local\vv{\paren{\init{\vv};\bc}}}$. The general case follows by induction.
  The lhs and rhs, as circuits are, respectively:

  \[
    \def\bl{-.8}
\begin{tikzpicture}[baseline={(0,\bl)}]
  \initializeCircuit;
  \newWires{rest,v,v2}
  \stepForward{4.5mm};
  \labelWire[\tiny $\VVall\setminus\vv$]{rest};
  \labelWire[\tiny $\vv$]{v2};
  \stepForward{12mm};
  \node[wireInput={v}] (init-v) {\footnotesize $\psiinit\vv$};
  \stepForward{2mm};
  \labelWire[\tiny $\vv$]{v};
  \stepForward{2.5mm};
  \node[gate={rest,v}] (c) {$\denotc\bc$};
  \stepForward{3mm};
  \node[killWire=v] (kill-v) {};
  \drawWire{v2}; \node[boxAroundLabeled={\tiny $\denotc{\local\vv\bc}$},
                       fit=(c)(init-v)(kill-v)(\getWireCoord{v2})] (dotted) {};
  \stepForward{4mm};
  \drawWires{rest,v2};
\end{tikzpicture}
\qquad\text{and}\qquad
\begin{tikzpicture}[baseline={(0,\bl)}]
  \initializeCircuit;
  \newWires{rest,v,v2}
  \stepForward{4.5mm};
  \labelWire[\tiny $\VVall\setminus\vv$]{rest};
  \labelWire[\tiny $\vv$]{v2};
  \stepForward{12mm};
  \node[wireInput={v}] (init-v) {\footnotesize $\psiinit\vv$};
  \stepForward{2mm};
  \labelWire[\tiny $\vv$]{v};
  \stepForward{4mm};
  \node[gate=v] (Einit-v) {$\Einit\vv$};
  \drawWire{rest};
  \node[boxAroundLabeled={\tiny $\denotc{\init\vv}$},
        fit=(Einit-v)(\getWireCoord{rest})] (init-box) {};
  \stepForward{3.5mm};
  \node[gate={rest,v}] (c) {$\denotc\bc$};
  \stepForward{3mm};
  \node[killWire=v] (kill-v) {};
  \drawWire{v2}; \node[boxAroundLabeled={\tiny $\denotc{\local\vv{\paren{\init\vv;\bc}}}$},
                       fit=(c)(init-v)(init-box-label)(kill-v)(\getWireCoord{v2})] (dotted) {};
  \stepForward{4mm};
  \drawWires{rest,v2};
\end{tikzpicture}
\]
Here we used
\autoref{lemma:init.sem} for expressing $\denotc{\init\vv}$ in terms of $\Einit\vv$.

It follows immediately from the definition of $\Einit\vv$ that
$\Einit\vv(\proj{\psiinit\vv})=\proj{\psiinit\vv}$. Thus the two
circuits compute the same function. Hence $\local\vv\bc\deneq\local\vv{\paren{\init\vv;\bc}}$.
\eqref{lemma:add.init.begin} follows.

\medskip

  We now show \eqref{lemma:add.init.end}.
  We show the special case
  ${\local\vv\bc}
  \deneq {\local\vv{\paren{\bc;\init\vv}}}$. The general case follows by induction.
  The lhs and rhs, as circuits are, respectively:

  \[
    \def\bl{-.8}
\begin{tikzpicture}[baseline={(0,\bl)}]
  \initializeCircuit;
  \newWires{rest,v,v2}
  \stepForward{4.5mm};
  \labelWire[\tiny $\VVall\setminus\vv$]{rest};
  \labelWire[\tiny $\vv$]{v2};
  \stepForward{12mm};
  \node[wireInput={v}] (init-v) {\footnotesize $\psiinit\vv$};
  \stepForward{2mm};
  \labelWire[\tiny $\vv$]{v};
  \stepForward{2.5mm};
  \node[gate={rest,v}] (c) {$\denotc\bc$};
  \stepForward{3mm};
  \node[killWire=v] (kill-v) {};
  \drawWire{v2}; \node[boxAroundLabeled={\tiny $\denotc{\local\vv\bc}$},
                       fit=(c)(init-v)(kill-v)(\getWireCoord{v2})] (dotted) {};
  \stepForward{4mm};
  \drawWires{rest,v2};
\end{tikzpicture}
\qquad\text{and}\qquad
\begin{tikzpicture}[baseline={(0,\bl)}]
  \initializeCircuit;
  \newWires{rest,v,v2}
  \stepForward{4.5mm};
  \labelWire[\tiny $\VVall\setminus\vv$]{rest};
  \labelWire[\tiny $\vv$]{v2};
  \stepForward{12mm};
  \node[wireInput={v}] (init-v) {\footnotesize $\psiinit\vv$};
  \stepForward{2mm};
  \labelWire[\tiny $\vv$]{v};
  \stepForward{3.5mm};
  \node[gate={rest,v}] (c) {$\denotc\bc$};
  \stepForward{4mm};
  \node[gate=v] (Einit-v) {$\Einit\vv$};
  \drawWire{rest};
  \node[boxAroundLabeled={\tiny $\denotc{\init\vv}$},
        fit=(Einit-v)(\getWireCoord{rest})] (init-box) {};
  \stepForward{3mm};
  \node[killWire=v] (kill-v) {};
  \drawWire{v2}; \node[boxAroundLabeled={\tiny $\denotc{\local\vv{\paren{\bc;\init\vv}}}$},
                       fit=(c)(init-v)(init-box-label)(kill-v)(\getWireCoord{v2})] (dotted) {};
  \stepForward{4mm};
  \drawWires{rest,v2};
\end{tikzpicture}
\]
Here we used
\autoref{lemma:init.sem} for expressing $\denotc{\init\vv}$ in terms of $\Einit\vv$.

  Since $\Einit\vv$ is trace-preserving, ${\tr}\circ{\Einit\vv}=\tr$. I.e.,
  {\footnotesize$
  \begin{tikzpicture}[baseline={(0,-.1)}]
    \initializeCircuit;
    \newWire{v};
    \stepForward{2.5mm};
    \node[gate=v] (E) {$\Einit\vv$};
    \stepForward{2.5mm};
    \node[killWire=v] (k) {};
  \end{tikzpicture}
  $}
and
  {\footnotesize$
  \begin{tikzpicture}[baseline={(0,-.1)}]
    \initializeCircuit;
    \newWire{v};
    \stepForward{2mm};
    \node[killWire=v] (k) {};
  \end{tikzpicture}
  $}
compute the same function. Thus the lhs and rhs compute the same function,
i.e., $\local\bc\deneq\local\vv{\paren{\bc;\init\vv}}$.
\eqref{lemma:add.init.end} follows.
\end{proof}

\begin{lemma}\label{lemma:full_subst_vars}
  Let $\sigma$ be a bijective variable substitution.  Assume
  $\dom\sigma\cap\fv\bc=\varnothing$.  Then
  $\fullsubst\bc\sigma \deneq \bc$.
\end{lemma}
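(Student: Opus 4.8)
The plan is to prove $\fullsubst\bc\sigma\deneq\bc$ by induction on the structure of $\bc$. Throughout I use that the hypothesis $\dom\sigma\cap\fv\bc=\varnothing$ says exactly that $\sigma$ is the identity on every free variable of $\bc$, and that the free variables of each subprogram are contained in $\fv\bc$, so the hypothesis is inherited by every subprogram. For the atomic statements ($\Skip$, $\assign\XX e$, $\sample\XX e$, $\Qinit\QQ e$, $\Qapply\QQ e$, $\Qmeasure\XX\QQ e$) every occurring variable is free, so $\sigma$ fixes all of them and all occurring expressions, and $\fullsubst\bc\sigma$ is \emph{literally} $\bc$; the claim is trivial. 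For the compositional constructs $\seq\bc\bd$, $\langif e\bc\bd$, $\while e\bc$ the free variables of the immediate subprograms lie in $\fv\bc$, the induction hypothesis applies to each subprogram, and the claim follows because $\denotc\cdot$ is assembled compositionally from the denotations of the subprograms (and $\sigma$ fixes every guard/right-hand-side expression).

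The only real work is the case $\bc=\local\vv{\bc'}$, where $\fullsubst\bc\sigma=\local{\sigma(\vv)}{\fullsubst{\bc'}\sigma}$ and $\fv\bc=\fv{\bc'}\setminus\{\vv\}$. Write $\ww:=\sigma(\vv)$. By injectivity of $\sigma$, $\ww\notin\fv\bc$: if $\ww\in\fv\bc$ then $\sigma(\ww)=\ww=\sigma(\vv)$ forces $\vv=\ww$. If $\vv\notin\fv{\bc'}$, then $\sigma$ fixes all of $\fv{\bc'}$, the induction hypothesis gives $\fullsubst{\bc'}\sigma\deneq\bc'$, and \lemmaref{lemma:unused} strips the binder on both sides (legitimately, since $\vv\notin\fv{\bc'}$ and, using the easy fact $\fv{\fullsubst{\bc'}\sigma}=\sigma(\fv{\bc'})=\fv{\bc'}$, also $\ww\notin\fv{\fullsubst{\bc'}\sigma}$), yielding the claim. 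If $\vv\in\fv{\bc'}$ but $\ww=\vv$, then again $\sigma$ fixes $\fv{\bc'}$, the induction hypothesis gives $\fullsubst{\bc'}\sigma\deneq\bc'$, and the claim follows since $\denotc{\local\vv{\bc'}}=\LOCAL\vv{\denotc{\bc'}}$ turns $\deneq$ of bodies into $\deneq$ of the local statements.

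The remaining subcase, $\vv\in\fv{\bc'}$ and $\ww\neq\vv$, is genuine $\alpha$-renaming of the bound variable, and I would peel off the renaming of $\vv$. Let $\tau$ be the transposition swapping $\vv$ and $\ww$ (well-defined, as $\vv,\ww$ are compatible), and set $\sigma'':=\tau\circ\sigma$. Then $\sigma''$ is bijective, fixes $\vv$ (since $\tau(\ww)=\vv$), and still fixes $\fv\bc$ (here I use $\vv,\ww\notin\fv\bc$), hence fixes all of $\fv{\bc'}=\fv\bc\cup\{\vv\}$; by the induction hypothesis $\bc'':=\fullsubst{\bc'}{\sigma''}\deneq\bc'$. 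Since $\tau$ is an involution, $\sigma=\tau\circ\sigma''$, and by (elementary) functoriality of full substitution $\fullsubst{\bc'}\sigma=\fullsubst{\bc''}\tau$, so $\fullsubst\bc\sigma=\local\ww{\fullsubst{\bc''}\tau}$. As $\bc''\deneq\bc'$ implies $\local\vv{\bc''}\deneq\local\vv{\bc'}=\bc$, it remains only to prove the single-variable statement $\local\ww{\fullsubst{\bc''}\tau}\deneq\local\vv{\bc''}$, where now $\ww\notin\fv{\bc''}$.

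This last step is the heart of the lemma and the main obstacle: the name of a purely local variable is semantically irrelevant. I would prove it semantically, working from the operator definition \eqref{eq:semantics.local} of $\LOCAL\vv{\denotc{\bc''}}$ and the corresponding $\LOCAL\ww$ expression. The key ingredient is a relabeling principle, namely that renaming all variables by the bijection $\tau$ conjugates the denotation, $\denotc{\fullsubst{\bc''}\tau}=\toE{\swap\vv\ww}\circ\denotc{\bc''}\circ\toE{\swap\vv\ww}$ (with $\swap\vv\ww$ involutive); this is elementary from the semantics and is among the axiomatized facts about the language. Substituting this into \eqref{eq:semantics.local} for $\LOCAL\ww$, and using that the fresh copy is initialized to the default value and the discarded wire is immediately traced out, while $\initial\vv$ and $\initial\ww$ agree, the extra $\swap\vv\ww$ conjugations combine with the swaps $\toE{\swapcop\vv}$ already present in the $\LOCAL$ construction to give $\LOCAL\ww{\denotc{\fullsubst{\bc''}\tau}}=\LOCAL\vv{\denotc{\bc''}}$. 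Carrying out this cancellation cleanly, keeping track of which subsystem is initialized, acted on, and traced out, is exactly where care is needed; everything else in the induction is bookkeeping.
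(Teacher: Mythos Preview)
Your induction is correct, but it is a considerably longer path than the paper's, and in the end it rests on the very same semantic fact that lets the paper finish in three lines.

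The paper does not induct on $\bc$ at all. It introduces the permutation unitary $U_\sigma$ on $\elltwov{\VVall}$ given by $U_\sigma\basis{}m=\basis{}{m\circ\sigma}$, sets $\calE_\sigma:=\toE{U_\sigma}$, and uses the global relabeling identity
\[
\denotc{\fullsubst\bc\sigma}=\calE_{\sigma^{-1}}\circ\denotc\bc\circ\calE_\sigma
\]
once, for the whole program. Then \autoref{lemma:fv} factors $\denotc\bc$ as $\calE_\bc'\otimes\id$ over $\fv\bc$, while $\calE_\sigma$ factors as $\calE_\sigma'\otimes\id$ over $\dom\sigma$; disjointness of $\fv\bc$ and $\dom\sigma$ makes them commute, and $\calE_{\sigma^{-1}}\circ\calE_\sigma=\id$ finishes the proof.

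Your ``heart of the lemma'' --- the semantic identity $\denotc{\fullsubst{\bc''}\tau}=\toE{\swap\vv\ww}\circ\denotc{\bc''}\circ\toE{\swap\vv\ww}$ --- is exactly this relabeling identity specialized to a transposition. Once you accept it for a transposition and one subprogram, you could just as well accept it for all bijective $\sigma$ and for $\bc$ itself, and then the entire structural induction (the atomic cases, the compositional cases, the careful $\vv\in\fv{\bc'}$ case analysis, the factoring $\sigma=\tau\circ\sigma''$) becomes unnecessary. Your induction is doing only bookkeeping; the one genuinely nontrivial step you isolate is already the whole argument in the paper's presentation. The paper's route is shorter; yours makes explicit that all the content sits in the $\local$ case, and it does reduce the needed relabeling to the single-swap instance, which is arguably closer to something one would actually verify by hand from \eqref{eq:semantics.local}.
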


\begin{proof}
  Let $U_\sigma$ be the unitary on $\elltwov\VVall$ defined by
  $U_\sigma\basis{}{m} = \basis{}{m\circ\sigma}$. That is
  $U_\sigma$ reorders the subsystems corresponding to the variables in
  $\VVall$. Let $\calE_\sigma(\rho) := U_\sigma\rho\adj{U_\sigma}$.
  
  Since $\fullsubst\bc\sigma$ simply renames all variables (even the
  local ones), $\denotc{\fullsubst\bc\sigma}$ simply operates on the
  reordered variables, formally
  $\denotc{\fullsubst\bc\sigma}(\rho)=\calE_{\sigma^{-1}} \circ
  \denotc\bc \circ \calE_{\sigma}$.

  Since $\sigma$ is the identity on $\dom\sigma$,
  $\calE_\sigma=\calE_\sigma'\otimes \id$ for some $\calE_{\sigma}'$
  on $\dom\sigma$.  And $\denotc\bc=\calE_{\bc}'\otimes\id$ for some
  $\calE_{\bc}'$ on $\fv\bc$. Since $\dom\sigma$ and $\fv\bc$ are
  disjoint, this implies that $ $$\calE_\sigma$ and
  $\denotc\bc$ commute.

  Thus 
  \begin{equation*}
    \denotc{\fullsubst\bc\sigma}(\rho)
    =\calE_{\sigma^{-1}} \circ \denotc\bc \circ \calE_{\sigma}
    = \calE_{\sigma^{-1}} \circ \calE_{\sigma} \circ \denotc\bc
    = \calE_{\sigma\circ \sigma^{-1}} \circ \denotc\bc
    = \calE_{\id}  \circ \denotc\bc
    = \denotc\bc.
    \mathQED
  \end{equation*}
\end{proof}

\begin{lemma}\label{lemma:rename.locals}
  Let $\sigma$ be a variable substitution that is injective
  on $\VV$ and let $\WW:=\sigma(\VV)$. Assume that $\sigma=\id$ outside $\VV$
  and that $(\fv\bc\setminus\VV)\cap\WW=\varnothing$.
  Assume $\noconflict\sigma\bc$.
  
  Then ${\local\VV\bc} \deneq {\local\WW{\paren{\bc\{\WW/\VV\}}}}$
\end{lemma}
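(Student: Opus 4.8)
The plan is to isolate the effect of the outer $\localkw$-binder from the renaming performed inside $\bc$. The heart of the argument is the identity $\subst\bc\sigma\deneq\fullsubst\bc\tau$, where $\tau$ is a \emph{bijective} variable substitution that agrees with $\sigma$ on $\VV$. Recall that $\subst\bc\sigma$ renames only the non-local occurrences of variables, whereas $\fullsubst\bc\tau$ also renames the bound (local) ones; the two programs therefore differ only in the names of $\localkw$-bound variables, and since a local variable is internally replaced by a fresh copy (it is saved and restored), its name is semantically irrelevant, so the two denotations should coincide. The hypothesis $\noconflict\sigma\bc$ is exactly what makes this sound: it forbids the renaming from capturing a free variable or being captured by an inner binder. (That it is indispensable is seen from $\bc=\local\ww{\assign\vv 1}$ with $\sigma\colon\vv\mapsto\ww$, for which $\noconflict\sigma\bc$ fails and $\subst\bc\sigma$ erroneously captures $\ww$.)

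Before constructing $\tau$ I would reduce to the case $\VV\cap\WW=\varnothing$: if the two sets overlap, I choose a set $\VV''$ of fresh variables of matching type, disjoint from $\VV$, $\WW$, and $\fv\bc$ — these exist by the standing assumption that every variable has infinitely many compatible companions — and compose the renamings $\VV\to\VV''$ and $\VV''\to\WW$ by transitivity of $\deneq$. With $\VV\cap\WW=\varnothing$ I let $\tau$ be the substitution swapping $\VV$ and $\WW$ pointwise and the identity elsewhere; it is bijective and agrees with $\sigma$ on $\VV$. I then prove $\subst\bc\sigma\deneq\fullsubst\bc\tau$ by induction on $\bc$. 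The atomic statements and the compound cases $\bc_1;\bc_2$, $\langif e{\bc_1}{\bc_2}$, $\while e{\bc_1}$ follow directly from the induction hypotheses after pushing both substitutions inside.

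The only interesting case, and the main obstacle, is a nested $\local{\mathbf u}{\bc'}$: here $\subst$ leaves the binder $\mathbf u$ untouched while $\fullsubst$ renames it to $\tau(\mathbf u)$, so after applying the induction hypothesis to $\bc'$ one is left comparing $\denotc{\local{\mathbf u}{\cdot}}=\LOCAL{\mathbf u}{\denotc\cdot}$ and $\LOCAL{\tau(\mathbf u)}{\denotc\cdot}$, i.e.\ an $\alpha$-renaming of the \emph{localized} variable from $\mathbf u$ to $\tau(\mathbf u)$ (which matches because $\mathbf u$ and $\tau(\mathbf u)$ share the same default value, so the two freshly initialized local registers agree — this is the one place where matching defaults of renamed variables is used). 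The side condition $\mathbf u\notin\sigma(\fv{\bc'}\cap\dom\sigma)$ supplied by the $\noconflict$ rule for $\localkw$ guarantees that the relabeled free variables do not collide with $\mathbf u$, so this last $\alpha$-renaming is justified by the relabeling behaviour of $\LOCAL{\cdot}{\cdot}$ together with \lemmaref{lemma:unused} and \lemmaref{lemma:local.merge} for the bookkeeping of scopes. Once $\subst\bc\sigma\deneq\fullsubst\bc\tau$ is established, the lemma follows in two strokes: by \autoref{lemma:full_subst_vars} applied to $\local\VV\bc$ with the bijection $\tau$ (whose domain $\VV\cup\WW$ is disjoint from $\fv{(\local\VV\bc)}=\fv\bc\setminus\VV$ by hypothesis) we obtain $\local\VV\bc\deneq\fullsubst{(\local\VV\bc)}\tau=\local\WW{\fullsubst\bc\tau}$, and then congruence of $\local\WW{\cdot}$ under $\deneq$ (immediate from \lemmaref{lemma:local.LOCAL}) replaces $\fullsubst\bc\tau$ by $\subst\bc\sigma$, giving $\local\VV\bc\deneq\local\WW{\subst\bc\sigma}$.
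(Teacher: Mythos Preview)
The paper does not give a proof in the text; it only records that the lemma is formalized in Isabelle/HOL (\texttt{rename\_locals} in \texttt{Rename\_Locals.thy}) and leaves the proof sketch as a TODO. So there is nothing in the paper to compare your argument against directly.

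On its own merits, your overall strategy is sound: reduce to disjoint $\VV,\WW$, let $\tau$ be the bijective swap, apply \autoref{lemma:full_subst_vars} to $\local\VV\bc$ (its free variables avoid $\dom\tau$ by hypothesis) to obtain $\local\WW{\fullsubst\bc\tau}$, and then replace $\fullsubst\bc\tau$ by $\subst\bc\sigma$ via congruence. The crux is indeed the auxiliary claim $\subst\bc\sigma\deneq\fullsubst\bc\tau$.

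Your treatment of the $\localkw$ case in that induction is underspecified, and as written contains a gap. When $\mathbf u\in\dom\tau$ (say $\mathbf u\in\VV$), the two sides are $\local{\mathbf u}{\subst{\bc'}{\sigma(\mathbf u{:=}\mathbf u)}}$ and $\local{\tau(\mathbf u)}{\fullsubst{\bc'}\tau}$. You cannot apply the induction hypothesis to $\bc'$ with the pair $\bigl(\sigma(\mathbf u{:=}\mathbf u),\tau\bigr)$, because these now \emph{disagree} on the free occurrence $\mathbf u\in\fv{\bc'}$: the former sends $\mathbf u\mapsto\mathbf u$, the latter $\mathbf u\mapsto\tau(\mathbf u)$. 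Consequently, after the IH one is \emph{not} comparing $\LOCAL{\mathbf u}{X}$ with $\LOCAL{\tau(\mathbf u)}{X}$ for a common $X$; the inner denotations differ too. The fix is to also shrink $\tau$ to a bijection $\tau'$ that fixes both $\mathbf u$ and $\tau(\mathbf u)$, apply the IH with $\bigl(\sigma(\mathbf u{:=}\mathbf u),\tau'\bigr)$ (for which agreement on $\fv{\bc'}$ and $\noconflict{\cdot}{\bc'}$ do hold), and then close with the purely semantic identity
\[
  \LOCAL{\mathbf u}{\calE}
  \;=\;
  \LOCAL{\tau(\mathbf u)}{\,\calE_{\langle\mathbf u\leftrightarrow\tau(\mathbf u)\rangle}\circ\calE\circ\calE_{\langle\mathbf u\leftrightarrow\tau(\mathbf u)\rangle}\,},
\]
which follows directly from the definition of $\LOCAL{\cdot}{\cdot}$ together with the relabeling identity $\denotc{\fullsubst\bd\tau}=\calE_{\tau^{-1}}\circ\denotc\bd\circ\calE_\tau$ already used inside the proof of \autoref{lemma:full_subst_vars}. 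Your appeal to \lemmaref{lemma:unused} and \lemmaref{lemma:local.merge} does not supply this step: those lemmas add or coalesce superfluous $\localkw$-binders but do not rename one. Once this single-variable semantic swap is isolated as a separate lemma, the induction closes as you intend.
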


This is shown in Isabelle/HOL, as \verb|rename_locals| in theory
\verb|Rename_Locals.thy|. See \autoref{sec:isabelle-proofs} for
remarks about our Isabelle/HOL development.

{\TODOQ{Proof sketch}}

\begin{lemma}\lemmalabel{dghfasdfgaeyfsdyfs}
  \begin{compactenum}[(i)]
  \item\itlabel{lemma:merge.local}
    $\localp\VV\bc;\localp\VV\bd
    \deneq
    \local\VV{\paren{\bc;\init\VV;\bd}}$.
  \item \itlabel{lemma:merge.local.fv}
    If $\VV\cap\fv\bc=\varnothing$, then $\bc;\localp \VV\bd \deneq \local\VV{\paren{\bc;\bd}}$.
  \end{compactenum}
\end{lemma}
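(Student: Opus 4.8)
The plan is to bypass the circuit pictures and argue algebraically from the ``equivalent form'' of the $\localkw$-semantics stated just after \eqref{eq:semantics.local}, namely $\LOCAL\vv\calE=\calE'\otimes\id_\vv$ with $\calE'(\rho)=\partr{}\vv\calE(\rho\otimes\rhoinit\vv)$. First I would promote this to finite sets: for the fixed $\VV$ of the lemma and any superoperator $\calE$ on $\VVall$, define $\widehat\calE(\rho):=\partr{}\VV\calE(\rho\otimes\rhoinit\VV)$ on trace-class operators $\rho$ over $\VVall\setminus\VV$ (with $\rhoinit\VV:=\proj{\psiinit\VV}$ as in \autoref{lemma:init.sem}). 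An easy induction on $\abs\VV$ using \lemmaref{lemma:local.merge} and the single-variable form, together with $\rhoinit\VV=\rhoinit{\vv_1}\otimes\dots\otimes\rhoinit{\vv_n}$, then gives $\LOCAL\VV\calE=\widehat\calE\otimes\id_\VV$. With \lemmaref{lemma:local.LOCAL} ($\denotc{\local\VV\bc}=\LOCAL\VV{\denotc\bc}$) this is the only structural input; everything else is a short computation. I abbreviate $\calE:=\denotc\bc$, $\calF:=\denotc\bd$, and suppress the canonical reordering of tensor factors, as the paper does.

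For \eqref{lemma:merge.local} the crux is that $\denotc{\init\VV}=\Einit\VV\otimes\id_{\VVall\setminus\VV}$ (\autoref{lemma:init.sem}) sends any $\tau$ over $\VVall$ to $\partr{}\VV(\tau)\otimes\rhoinit\VV$: it discards the current contents of $\VV$ and reinitializes them to $\rhoinit\VV$. On the one hand $\denotc{\localp\VV\bc;\localp\VV\bd}=\LOCAL\VV\calF\circ\LOCAL\VV\calE=(\widehat\calF\circ\widehat\calE)\otimes\id_\VV$. On the other hand $\denotc{\local\VV{(\bc;\init\VV;\bd)}}=\LOCAL\VV{(\calF\circ\denotc{\init\VV}\circ\calE)}=\widehat{(\calF\circ\denotc{\init\VV}\circ\calE)}\otimes\id_\VV$. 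I would then evaluate the right-hand hat on a generic $\rho$ over $\VVall\setminus\VV$: applying $\calE$ to $\rho\otimes\rhoinit\VV$, then $\denotc{\init\VV}$ (which leaves $\partr{}\VV\calE(\rho\otimes\rhoinit\VV)=\widehat\calE(\rho)$ and resets the $\VV$-slot to $\rhoinit\VV$), then $\calF$, then $\partr{}\VV$, produces exactly $\widehat\calF(\widehat\calE(\rho))$. Hence both sides equal $(\widehat\calF\circ\widehat\calE)\otimes\id_\VV$.

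For \eqref{lemma:merge.local.fv} I would use $\VV\cap\fv\bc=\varnothing$ and \autoref{lemma:fv} to write $\denotc\bc=\calE_0\otimes\id_\VV$ for some superoperator $\calE_0$ on $\VVall\setminus\VV$ (split the $\VV$-identity off from $\id_{\VVall\setminus\fv\bc}$, which is possible since $\VV\subseteq\VVall\setminus\fv\bc$). Then $\denotc{\bc;\localp\VV\bd}=\LOCAL\VV\calF\circ\denotc\bc=(\widehat\calF\otimes\id_\VV)\circ(\calE_0\otimes\id_\VV)=(\widehat\calF\circ\calE_0)\otimes\id_\VV$, while $\denotc{\local\VV{(\bc;\bd)}}=\widehat{(\calF\circ\denotc\bc)}\otimes\id_\VV$. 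Expanding the latter on $\rho$ and using $\denotc\bc(\rho\otimes\rhoinit\VV)=\calE_0(\rho)\otimes\rhoinit\VV$ gives $\partr{}\VV\calF(\calE_0(\rho)\otimes\rhoinit\VV)=\widehat\calF(\calE_0(\rho))$, so the two sides again coincide.

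The computations are routine; the step I would be most careful about is the bookkeeping that separates the \emph{outer} copy of $\VV$ preserved by the $\otimes\id_\VV$ factor from the \emph{inner} working copy that $\calE,\calF$ actually act on and that is created by $\rho\mapsto\rho\otimes\rhoinit\VV$ and traced out again inside $\widehat{(\cdot)}$. Keeping these straight is precisely what makes the role of $\init\VV$ in \eqref{lemma:merge.local} transparent: it severs the working copy between $\bc$ and $\bd$, so a single combined local block behaves like two consecutive ones. (The same facts can be read off pictorially in the circuit style of \lemmaref{lemma:add.init.begin}, drawing $\denotc{\init\VV}$ as a kill-wire immediately followed by a fresh $\psiinit\VV$-input; I find the algebraic route less error-prone for the tensor-factor bookkeeping.)
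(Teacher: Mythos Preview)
Your proposal is correct and takes a genuinely different route from the paper. The paper first proves the single-variable cases: for \eqref{lemma:merge.local} by drawing the two circuits and observing that $\Einit\vv$ is exactly ``kill the $\vv$-wire, then spawn a fresh $\psiinit\vv$''; for \eqref{lemma:merge.local.fv} by a chain of previously established equivalences (\lemmaref{lemma:unused}, the special case of \eqref{lemma:merge.local}, \autoref{lemma:swap}, \lemmaref{lemma:add.init.begin}). The general cases are then obtained by separate inductions over $\VV$, carried out in Isabelle/HOL. Your approach instead front-loads the induction into a single auxiliary fact, the set-level formula $\LOCAL\VV\calE=\widehat\calE\otimes\id_\VV$, and then both parts become short direct computations with $\widehat{(\cdot)}$ and \autoref{lemma:init.sem}; in particular your argument for \eqref{lemma:merge.local.fv} avoids the detour through \lemmaref{lemma:unused}, \autoref{lemma:swap}, and \lemmaref{lemma:add.init.begin} entirely. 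The trade-off is that the induction establishing the set-level $\widehat{(\cdot)}$-formula, while routine, carries exactly the tensor-factor bookkeeping you flag as the delicate point, whereas the paper's organization pushes that bookkeeping into the Isabelle-verified inductions. Both are fine; yours is arguably more transparent as a pen-and-paper argument.
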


\begin{proof}
  We first show \eqref{lemma:merge.local} in the special case $\VV=\vv$. The lhs and rhs are
  depicted by the following circuits (using \autoref{lemma:init.sem} for the semantics of $\init\vv$):
  \begin{equation*}
    \begin{tikzpicture}
      \initializeCircuit;
      \newWires{rest,v,v2}
      \stepForward{5mm};
      \labelWire[\tiny $\VVall\setminus\vv$]{rest};
      \labelWire[\tiny $\vv$]{v2};
      \stepForward{12mm};
      %
      %
      \skipWire{v};
      \node[wireInput={v}] (init-v) {\footnotesize $\psiinit\vv$};
      \stepForward{2.5mm};
      \labelWire[\tiny $\vv$]{v};
      \stepForward{2.5mm};
      \node[gate={rest,v}] (c) {$\denotc\bc$};
      %
      \stepForward{3mm};
      \node[killWire=v] (kill-v) {};
      \drawWire{v2}; \node[boxAroundLabeled={\tiny $\denotc{\local\vv\bc}$},
      fit=(c)(init-v)(kill-v)(\getWireCoord{v2})] (dotted) {};
      \stepForward{10mm};
      \drawWires{rest,v2};
      %
      %
      \skipWire{v};
      \node[wireInput={v}] (init-v) {\footnotesize $\psiinit\vv$};
      \stepForward{2.5mm};
      \labelWire[\tiny $\vv$]{v};
      \stepForward{2.5mm};
      \node[gate={rest,v}] (d) {$\denotc\bd$};
      \stepForward{3mm};
      \node[killWire=v] (kill-v) {};
      \drawWire{v2}; \node[boxAroundLabeled={\tiny $\denotc{\local\vv\bd}$},
      fit=(d)(init-v)(kill-v)(\getWireCoord{v2})] (dotted) {};
      \stepForward{5mm};
      \drawWires{rest,v2};
    \end{tikzpicture}
    \qquad
    \begin{tikzpicture}
      \initializeCircuit;
      \newWires{rest,v,v2}
      \stepForward{5mm};
      \labelWire[\tiny $\VVall\setminus\vv$]{rest};
      \labelWire[\tiny $\vv$]{v2};
      \stepForward{12mm};
      \skipWire{v};
      \node[wireInput={v}] (init-v) {\footnotesize $\psiinit\vv$};
      \stepForward{2.5mm};
      \labelWire[\tiny $\vv$]{v};
      \stepForward{2.5mm};
      \node[gate={rest,v}] (c) {$\denotc\bc$};
      %
      \stepForward{5mm};
      \node[gate={v}] (Einit) {$\Einit\vv$};
      \stepForward{5mm};
      \node[gate={rest,v}] (d) {$\denotc\bd$};
      \drawWire{v2}; \node[boxAroundLabeled={\tiny $\denotc{\bc;\init\vv;\bd}$},
      fit=(c)(d)(Einit)] (dotted) {};
      \stepForward{3mm};
      \node[killWire=v] (kill-v) {};
      \stepForward{5mm};
      \drawWires{rest,v2};
    \end{tikzpicture}
  \end{equation*}
  By definition (\autoref{lemma:init.sem}),
  $\Einit\vv:\rho\mapsto \proj{\psiinit\vv}\otimes\tr\rho$. Or, as a circuit, $
  \begin{tikzpicture}
    \newWire{v};
    \stepForward{2.5mm};
    \labelWire[\tiny $\vv$]{v};
    \stepForward{2.5mm};
    \node[killWire=v] (kill-v) {};
    \stepForward{7mm};
    \node[wireInput={v}] (init-v) {\footnotesize $\psiinit\vv$};
    \stepForward{3.5mm};
    \drawWire{v};
  \end{tikzpicture}
  $.
  Thus the two circuits are identical, hence \eqref{lemma:merge.local} follows.  

  \medskip

  We prove \eqref{lemma:merge.local.fv} in the special case $\VV=\vv$:
  \begin{multline*}
    \bc;\localp\vv\bd
    \starrel\deneq
    \localp\vv\bc;\localp\vv\bd
    \eqrefrel{lemma:merge.local}\deneq
    \local\vv{\paren{\bc;\init\vv;\bd}}
    \\
    \starstarrel\deneq
    \local\vv{\paren{\init\vv;\bc;\bd}}
    \tristarrel\deneq
    \local\vv{\paren{\bc;\bd}}
  \end{multline*}
  Here $(*)$ uses \lemmaref{lemma:unused},
  $(**)$ uses \lemmaref{lemma:swap},
  and $(*{*}*)$ uses \lemmaref{lemma:add.init.begin}.

  This shows \eqref{lemma:merge.local.fv}.

  \medskip

  The general case of \eqref{lemma:merge.local.fv} is a
  straightforward induction over $\VV$, using the special case for the
  induction step. We did the proof of the general case in Isabelle/HOL
  (\texttt{Helping\_Lemmas.locals\_seq2}), using the special of 
  \eqref{lemma:merge.local.fv} as an axiom.

  \medskip

  The general case of \eqref{lemma:merge.local} is proven by a simple
  induction over $\VV$, using the special case of
  \eqref{lemma:merge.local} and the general case of
  \eqref{lemma:merge.local} for the base case.
  We did the proof of the general case in Isabelle/HOL
  (\texttt{Helping\_Lemmas.locals\_seq\_merge}), using the special of 
  \eqref{lemma:merge.local} as an axiom.
\end{proof}

\begin{lemma}\label{lemma:init.overwr}
  \begin{compactenum}[(i)]
  \item\label{lemma:init.overwr:X}   If $\XX\subseteq\overwr\bc$, then ${\assign\XX e;\bc}\deneq \bc$.
  \item\label{lemma:init.overwr:Q}   If $\QQ\subseteq\overwr\bc$, then ${\Qinit\QQ e;\bc}\deneq \bc$.
  \item\label{lemma:init.overwr:init}   If $\VV\subseteq\overwr\bc$, then ${\init\VV;\bc}\deneq\bc$.
\end{compactenum}
\end{lemma}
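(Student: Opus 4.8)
The plan is to reduce all three parts to a single semantic fact about $\overwr$, proved by structural induction. The fact is: \emph{if $\VV\subseteq\overwr\bc$ then $\denotc\bc(\rho)=\denotc\bc(\rho')$ whenever $\partr{}\VV\rho=\partr{}\VV\rho'$}; informally, once the variables in $\VV$ are overwritten before being read, the denotation of $\bc$ no longer depends on their initial content, i.e.\ $\denotc\bc$ factors through $\partr{}\VV$. Granting this fact, each part follows because the respective prefix statement overwrites $\VV$ while leaving the reduced state on $\VVall\setminus\VV$ untouched, i.e.\ $\partr{}\VV\circ\denotc{\text{prefix}}=\partr{}\VV$: a direct computation from the semantics gives this for $\denotc{\assign\XX e}$ (with $\VV=\XX$), for $\denotc{\Qinit\QQ e}$ (with $\VV=\QQ$, using $\norm{\denotee em}=1$ so that the discarded register carries unit trace), and, via \autoref{lemma:init.sem}, for $\denotc{\init\VV}$. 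Feeding $\denotc{\text{prefix}}(\rho)$ and $\rho$ into $\bc$ then yields inputs with equal $\partr{}\VV$, so the fact gives $\denotc\bc\circ\denotc{\text{prefix}}=\denotc\bc$, which is exactly the claimed denotational equivalence. (Part~(\ref{lemma:init.overwr:init}) can alternatively be obtained from (\ref{lemma:init.overwr:X}) and (\ref{lemma:init.overwr:Q}) by peeling off the statements $\init{\vv_i}$ of $\init\VV$ one at a time, since every $\vv_i\in\VV\subseteq\overwr\bc$.)

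I would prove the fact by induction on the structure of $\bc$, following the recursive definition of $\overwr$. The cases $\Skip$, $\Qapply$, and $\while e{\bc_1}$ are immediate since there $\overwr\bc=\varnothing$, forcing $\VV=\varnothing$. For the atomic statements $\assign\XX e$, $\sample\XX e$, $\Qinit\QQ e$, $\Qmeasure\XX\QQ e$ one reads off from the semantics that, for $\VV$ contained in the corresponding $\overwr$-set, the output is computed from the expression (whose free variables are disjoint from $\VV$) and from the parts of the memory outside $\VV$, while the target variables---which include $\VV$---are overwritten; hence the output depends only on $\partr{}\VV\rho$. For $\bc=\langif e{\bc_1}{\bc_2}$ we have $\VV\subseteq\overwr{\bc_1}\cap\overwr{\bc_2}$ and $\fv e\cap\VV=\varnothing$; since $\fv e\cap\VV=\varnothing$ the restriction superoperator commutes with the partial trace, $\partr{}\VV\circ\restricte e=\restricte e\circ\partr{}\VV$, so $\partr{}\VV\rho=\partr{}\VV\rho'$ gives $\partr{}\VV(\restricte e\rho)=\partr{}\VV(\restricte e\rho')$, and the two induction hypotheses (applied to $\bc_1$ on $\restricte e$ and to $\bc_2$ on $\restricte{\lnot e}$) combine to give the result.

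The sequential case $\bc=\bc_1;\bc_2$ is the main obstacle, because $\overwr$ splits the variables asymmetrically: for a program, $\covered{\bc_1}=\VVall$, so $\VV\subseteq\overwr{\bc_1}\cup(\overwr{\bc_2}\setminus\fv{\bc_1})$. I would set $\VV_a:=\VV\cap\overwr{\bc_1}$ and $\VV_b:=\VV\setminus\VV_a\subseteq\overwr{\bc_2}\setminus\fv{\bc_1}$. Because $\VV_b\cap\fv{\bc_1}=\varnothing$, \autoref{lemma:fv} lets us write $\denotc{\bc_1}=\calE_1\otimes\id_{\VV_b}$, so $\partr{}{\VV_b}$ commutes through $\denotc{\bc_1}$, yielding $\calE_1\circ\partr{}{\VV_b}$. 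By the induction hypothesis for $\bc_2$ (with $\VV_b\subseteq\overwr{\bc_2}$) it suffices to prove $\partr{}{\VV_b}\denotc{\bc_1}\rho=\partr{}{\VV_b}\denotc{\bc_1}\rho'$; commuting the trace inward turns this into $\calE_1(\partr{}{\VV_b}\rho)=\calE_1(\partr{}{\VV_b}\rho')$, and the induction hypothesis for $\bc_1$ (with $\VV_a\subseteq\overwr{\bc_1}$) shows $\calE_1$ depends only on $\partr{}{\VV_a}$ of its argument, hence only on $\partr{}{\VV_a}\partr{}{\VV_b}\rho=\partr{}\VV\rho$, which equals $\partr{}\VV\rho'$ by assumption. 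Finally, for $\bc=\local\ww{\bc_1}$ we have $\VV\subseteq\overwr{\bc_1}$ and $\ww\notin\VV$; writing $\denotc{\local\ww{\bc_1}}=\calF\otimes\id_\ww$ with $\calF(\sigma)=\partr{}\ww\,\denotc{\bc_1}(\sigma\otimes\rhoinit\ww)$, the induction hypothesis for $\bc_1$ (together with $\ww\notin\VV$, so that adjoining $\rhoinit\ww$ and tracing out $\ww$ commute with $\partr{}\VV$) shows $\calF$, and therefore $\calF\otimes\id_\ww$, depends only on $\partr{}\VV$. This closes the induction; the hard part throughout is keeping the two partial traces $\partr{}{\VV_a}$, $\partr{}{\VV_b}$ and the two induction hypotheses correctly synchronized in the sequential case.
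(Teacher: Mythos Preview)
Your proof is correct and takes a genuinely different route from the paper. The paper proves the single-variable special case of~(\ref{lemma:init.overwr:init}), namely $\init\vv;\bc\deneq\bc$ for $\vv\in\overwr\bc$, directly by structural induction on $\bc$; it then iterates to get~(\ref{lemma:init.overwr:init}) for sets, and finally derives~(\ref{lemma:init.overwr:X}) and~(\ref{lemma:init.overwr:Q}) by a sandwiching trick (e.g.\ $\assign\XX e;\bc \deneq \assign\XX e;\init\XX;\bc \deneq \init\XX;\bc \deneq \bc$, using the elementary fact $\assign\XX e;\init\XX\deneq\init\XX$). Your approach instead isolates the underlying semantic invariant---that $\denotc\bc$ factors through $\partr{}\VV$ whenever $\VV\subseteq\overwr\bc$---and proves that once, for arbitrary $\VV$, by structural induction; all three parts then drop out uniformly from $\partr{}\VV\circ\denotc{\text{prefix}}=\partr{}\VV$.

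The trade-off: the paper's single-variable induction keeps each step lighter (in the sequential case it just branches on whether $\vv\in\overwr{\bc_1}$ or $\vv\in\overwr{\bc_2}\setminus\fv{\bc_1}$ and uses \autoref{lemma:swap}), at the cost of the iteration and the sandwiching detour afterwards. Your set-valued induction front-loads the work into the sequential case, where you must split $\VV=\VV_a\cup\VV_b$ and thread two partial traces through two induction hypotheses---but it yields a cleaner, reusable characterisation of $\overwr$ and treats all three parts symmetrically. Two small technicalities you glide over but which do go through: in the sequential case the IH for $\bc_1$ is stated for $\denotc{\bc_1}$ on $\VVall$, and you apply it to $\calE_1$ on $\VVall\setminus\VV_b$; this is fine because tensoring with a fixed state on $\VV_b$ and invoking the IH shows $\calE_1$ inherits the factorisation. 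Similarly, in the $\localkw$ case the step from ``$\calF$ factors through $\partr{}\VV$'' to ``$\calF\otimes\id_\ww$ factors through $\partr{}\VV$'' uses $\ww\notin\VV$ in the same way.
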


\begin{proof}
  We first show a special case of \eqref{lemma:init.overwr:init},
  namely that ${\init\vv;\bc}\deneq\bc$ if $\vv\in\overwr\bc$.
  We show this by induction over the structure of $\bc$.

  We distinguish the following cases:
  \begin{itemize}
  \item Cases $\bc=\Skip$, $\bc=\while e{\bc'}$, $\bc=\Qapply e \QQ$:

    In these cases, $\overwr\bc=\varnothing$. By assumption,
    $\vv\in\overwr\bc$. Thus this case cannot arise.
  \item Case $\bc=\assign\XX e$:

    In this case, $\vv\in\overwr\bc=\XX\setminus\fv e$. Thus $\vv$ is
    a classical variable, $\vv\in\XX$, and $\vv\notin\fv e$.

    We have for all $m,\rho$:
    \begin{align*}
      &\denotc{\init\vv;\bc}\pb\paren{\proj{\basis{\cl{{\VVall}}}m}\otimes\rho} \\
      &= \denotc{\assign\XX e}\circ\denotc{\assign\vv{\initial\vv}}\pb\paren{\proj{\basis{\cl{{\VVall}}}m}\otimes\rho}
      && \text{(sem.~of ;, def.~of $\initkw$)} \\
      &= \denotc{\assign\XX e}\pb\paren{\proj{\basis{\cl{{\VVall}}}{m(\vv:={\initial\vv})}}\otimes\rho}
      && \text{(sem.~of assignment)}\\
      &= {\proj{\basis{\cl{{\VVall}}}{m(\vv:={\initial\vv})(\XX:=\denotee e{m(\vv:=\initial\vv)})}}\otimes\rho}
      && \text{(sem.~of assignment)}\\
      &\starrel= {\proj{\basis{\cl{{\VVall}}}{m(\vv:={\initial\vv})(\XX:=\denotee e{m})}}\otimes\rho} \\
      &\starstarrel= {\proj{\basis{\cl{{\VVall}}}{m(\XX:=\denotee e{m})}}\otimes\rho} \\
      &= \denotc{\assign\XX e}\pb\paren{\proj{\basis{\cl{{\VVall}}}m}\otimes\rho}
        = \denotc\bc\pb\paren{\proj{\basis{\cl{{\VVall}}}m}\otimes\rho}
      && \text{(sem.~ of assignment)}
    \end{align*}
    Here $(*)$ follows since $\vv\notin\fv e$ and thus
    $\denotee e{m(\vv:=\initial\vv)}=\denotee em$.  And $(**)$ follows
    since $\vv\in\XX$ and thus
    $m(\vv:={\initial\vv})(\XX:=\denotee e{m}) = m(\XX:=\denotee
    e{m})$.

    Thus $\denotc{\init\vv;\bc}=\denotc\bc$ on all states of the form
    ${\proj{\basis{\cl{{\VVall}}}m}\otimes\rho}$.
Since states of this form span all cq-states,
    by linearity,
    ${\init\vv;\bc}\deneq\bc$.
  \item Case $\bc=\sample\XX e$:

    In this case, $\vv\in\overwr\bc=\XX\setminus\fv e$. Thus $\vv$ is
    a classical variable, $\vv\in\XX$, and $\vv\notin\fv e$.

    We have for all $m,\rho$:
    \begin{align*}
      &\denotc{\init\vv;\bc}\pb\paren{\proj{\basis{\cl{{\VVall}}}m}\otimes\rho} \\
      &= \denotc{\sample\XX e}\circ\denotc{\assign\vv{\initial\vv}}\pb\paren{\proj{\basis{\cl{{\VVall}}}m}\otimes\rho}
      && \text{(sem.~of ;, def.~of $\initkw$)} \\
      &= \denotc{\sample\XX e}\pb\paren{\proj{\basis{\cl{{\VVall}}}{m(\vv:={\initial\vv})}}\otimes\rho}
      && \text{(sem.~of assignment)}\\
      &= \sum_z \denotee e{m(\vv:=\initial\vv)}(z)\cdot  {\proj{\basis{\cl{{\VVall}}}{m(\vv:={\initial\vv})(\XX:=z)}}\otimes\rho}
      && \text{(sem.~of sample)}\\
      &\starrel= \sum_z \denotee e{m}(z)\cdot {\proj{\basis{\cl{{\VVall}}}{m(\vv:={\initial\vv})(\XX:=z)}}\otimes\rho} \\
      &\starstarrel= \sum_z \denotee e{m}(z)\cdot {\proj{\basis{\cl{{\VVall}}}{m(\XX:=z)}}\otimes\rho} \\
      &= \denotc{\sample\XX e}\pb\paren{\proj{\basis{\cl{{\VVall}}}m}\otimes\rho}
        = \denotc\bc\pb\paren{\proj{\basis{\cl{{\VVall}}}m}\otimes\rho}
      && \text{(sem.~of assignment)}
    \end{align*}
    Here $(*)$ follows since $\vv\notin\fv e$ and thus
    $\denotee e{m(\vv:=\initial\vv)}=\denotee em$.  And $(**)$ follows
    since $\vv\in\XX$ and thus
    $m(\vv:={\initial\vv})(\XX:=\denotee e{m}) = m(\XX:=\denotee
    e{m})$.

    Thus $\denotc{\init\vv;\bc}=\denotc\bc$ on all states of the form
    ${\proj{\basis{\cl{{\VVall}}}m}\otimes\rho}$. By linearity,
    ${\init\vv;\bc}\deneq\bc$.
    
  \item Case $\bc=\Qinit\QQ e$:

    In this case, $\vv\in\overwr\bc=\QQ$. Thus $\vv$ is a quantum variable and $\vv\in\QQ$.
    We have for all $m,\rho$:
    \begin{align*}
      \denotc{\init\vv;\bc}\pb\paren{\proj{\basis{}m}\otimes\rho}
      &=
        \denotc{\Qinit\QQ e}\circ\denotc{\init\vv}
        \pb\paren{\proj{\basis{}m}\otimes\rho}
      && \text{(def.~of $\bc$, semantics of ;)} \\
      &=
      \denotc{\Qinit\QQ e}\pb\paren{
      \proj{\basis{}m}\otimes
      \partr\vv{}\rho \otimes \proj{\psiinit\vv}
        }
        && \text{(\autoref{lemma:init.sem})}
      \\&
      =
      \proj{\basis{}m}\otimes
      \partr{}\QQ \pb\paren{\partr{}\vv\rho \otimes \proj{\psiinit\vv}}
      \otimes \denotee em
        && \text{(sem.~of quant.~init.)}
      \\&
      =
      \proj{\basis{}m}\otimes
      \partr{}\QQ \rho
      \otimes \denotee em
      && \text{($\vv\in\QQ$)}
      \\&
      =
      \denotc{\Qinit\QQ e}\pb\paren{
      \proj{\basis{}m}\otimes
      \rho
      }
        && \text{(sem.~of quant.~init.)}
      \\&
      =
      \denotc\bc\pb\paren{
      \proj{\basis{}m}\otimes
      \rho
      }
        && \text{(definition of $\bc$)}
    \end{align*}
    Thus $\denotc{\init\vv;\bc}=\denotc\bc$ on all states of the form
    ${\proj{\basis{\cl{{\VVall}}}m}\otimes\rho}$. By linearity,
    ${\init\vv;\bc}\deneq\bc$.
  \item Case $\bc=\langif e{\bc'}{\bd'}$:

    Since $\vv\in\overwr\bc=(\overwr{\bc'} \cap \overwr{\bd'})\setminus\fv e$ we have:
    $\vv\in\overwr{\bc'}$, $\vv\in\overwr{\bd'}$, $\vv\notin\fv e$.

    We will show that $\denotc{\init\vv;\bc}\pb\paren{\proj{\basis{\cl{{\VVall}}}m}\otimes\rho}
    = \denotc{\bc}\pb\paren{\proj{\basis{\cl{{\VVall}}}m}\otimes\rho}$ for all $m,\rho$.
    By linearity, this then shows ${\init\vv;\bc}\deneq\bc$.
    Fix $m,\rho$.
    We assume $\denotee e m=\true$. The case
    $\denotee e m=\false$ is shown analogously. 

    We distinguish two cases, depending on whether $\vv$ is a classical or a quantum variable:
    \begin{itemize}
    \item Case $\vv$ is classical:

      Since $\vv\notin\fv e$, we have
      \begin{equation}\label{eq:emv.true}
        \denotee e{m(\vv:=\initial\vv)} = \denotee em = \true
      \end{equation}
      And
      \begin{align*}
        &\denotc{\init\vv;\bc}\pB\paren{\proj{\basis{\cl{{\VVall}}}m}\otimes\rho} \\
        &= \denotc{\langif e{\bc'}{\bd'}}\circ\denotc{\assign\vv{\initial\vv}}\pb\paren{\proj{\basis{\cl{{\VVall}}}m}\otimes\rho}
        && \text{(sem.~of ;, def.~of $\initkw$)} \\
        &= \denotc{\langif e{\bc'}{\bd'}}\pb\paren{\proj{\basis{\cl{{\VVall}}}{m(\vv:=\initial\vv)}}\otimes\rho}
        && \text{(sem.~of assignment)} \\
        &= \denotc{\bc'}\circ\underbrace{\restricte e\pb\paren{\proj{\basis{\cl{{\VVall}}}{m(\vv:=\initial\vv)}}\otimes\rho}}_{\eqrefrel{eq:emv.true}=\ {\proj{\basis{\cl{{\VVall}}}{m(\vv:=\initial\vv)}}\otimes\rho}}
          + \denotc{\bd'}\circ\underbrace{\restricte{\lnot e}\pb\paren{\proj{\basis{\cl{{\VVall}}}{m(\vv:=\initial\vv)}}\otimes\rho}}_{\eqrefrel{eq:emv.true}=\ 0}
          \hspace{-1in}
        \\[-17pt]
        &&& \qquad\qquad\text{(sem.~of if)} \\
        & = \denotc{\bc'}\pb\paren{\proj{\basis{\cl{{\VVall}}}{m(\vv:=\initial\vv)}}\otimes\rho} 
         = \denotc{\init\vv; \bc'}\pb\paren{\proj{\basis{\cl{{\VVall}}}{m}}\otimes\rho}
        && \text{(sem., def.~of init)}
        \\
        &= \denotc{\bc'}\pb\paren{\proj{\basis{\cl{{\VVall}}}{m}}\otimes\rho}
        && \text{(induction hypothesis)} \\
        & = \denotc{\langif e{\bc'}{\bd'}}\pb\paren{\proj{\basis{\cl{{\VVall}}}{m}}\otimes\rho}
        = \denotc\bc\pb\paren{\proj{\basis{\cl{{\VVall}}}{m}}\otimes\rho}
        && \text{(sem.~of if, \eqref{eq:emv.true})}
      \end{align*}
      
    \item Case $\vv$ is quantum:

      Let $\rho' := \partr{}\QQ \rho \otimes \proj{\denotee em}$. We have
      \begin{align*}
        &\denotc{\init\vv;\bc}\pB\paren{\proj{\basis{\cl{{\VVall}}}m}\otimes\rho} \\
        &= \denotc{\langif e{\bc'}{\bd'}}\circ\denotc{\Qinit\vv{\psiinit\vv}}\pb\paren{\proj{\basis{\cl{{\VVall}}}m}\otimes\rho}
        && \text{(sem.~of ;, def.~of $\initkw$)} \\
        &= \denotc{\langif e{\bc'}{\bd'}}\pb\paren{\proj{\basis{\cl{{\VVall}}}{m}}\otimes\rho'}
        && \text{(sem.~of qu.~init.)} \\
        &= \denotc{\bc'}\circ\underbrace{\restricte e\pb\paren{\proj{\basis{\cl{{\VVall}}}{m}}\otimes\rho'}}_{=\ {\proj{\basis{\cl{{\VVall}}}{m}}\otimes\rho'}}
          + \denotc{\bd'}\circ\underbrace{\restricte{\lnot e}\pb\paren{\proj{\basis{\cl{{\VVall}}}{m}}\otimes\rho'}}_{=\ 0}
          \hspace{-1in}
        && \text{(sem.~of if)} \\
        & = \denotc{\bc'}\pb\paren{\proj{\basis{\cl{{\VVall}}}{m}}\otimes\rho'} 
         = \denotc{\init\vv; \bc'}\pb\paren{\proj{\basis{\cl{{\VVall}}}{m}}\otimes\rho'}
        && \text{(sem., def.~of init)}
        \\
        &= \denotc{\bc'}\pb\paren{\proj{\basis{\cl{{\VVall}}}{m}}\otimes\rho'}
        && \text{(induction hypothesis)}
        \\
        & = \denotc{\langif e{\bc'}{\bd'}}\pb\paren{\proj{\basis{\cl{{\VVall}}}{m}}\otimes\rho}
        = \denotc\bc\pb\paren{\proj{\basis{\cl{{\VVall}}}{m}}\otimes\rho}
        && \text{(sem.~of if, \eqref{eq:emv.true})}
      \end{align*}
    \end{itemize}

  \item Case $\bc=\bc';\bd'$:    

    We have
    $\vv\in\overwr\bc = \overwr{\bc'} \cup \pB\paren{
      \pb\paren{\overwr{\bd'}\setminus\fv{\bc'}} \cap \covered{\bc'}
    }$. We thus distinguish the following cases:
    \begin{itemize}
    \item Case $\vv\in\overwr{\bc'}$:

      By induction hypothesis, ${\init\vv;\bc'}\deneq{\bc'}$.
      Thus ${\init\vv;\bc}\deneq{\init\vv;\bc';\bd'}
      \deneq {\bc';\bd'} = \bc$.
      
    \item Case $\vv\in{\overwr{\bd'}\setminus\fv{\bc'}}$:

      Since $\vv\notin\fv{\bc'}$ and $\fv{\init\vv}=\{\vv\}$, we have
      ${\init\vv;\bc'} \deneq {\bc';\init\vv}$ by
      \autoref{lemma:swap}. And since $\vv\in\overwr{\bd'}$, by induction
      hypothesis, we have ${\init\vv;\bd'} \deneq {\bd'}$.
      Thus ${\init\vv;\bc} \deneq {\init\vv;\bc';\bd'} \deneq
      {\bc';\init\vv;\bd'} \deneq {\bc';\bd'} = \bc$.
    \end{itemize}
  \item Case $\bc=\Qmeasure \XX\QQ e$:

    Since $\vv\in\overwr\bc=\XX\setminus\fv e$, we have that $\vv$ is
    classical, $\vv\in\XX$, but $\vv\notin\fv e$.

    We will show that $\denotc{\init\vv;\bc}\pb\paren{\proj{\basis{\cl{{\VVall}}}m}\otimes\rho}
    = \denotc{\bc}\pb\paren{\proj{\basis{\cl{{\VVall}}}m}\otimes\rho}$ for all $m,\rho$.
    By linearity, this then shows ${\init\vv;\bc}\deneq\bc$.
    Fix $m,\rho$.

    Let $P_z := \denotee em(z) \otimes \id_{\qu{{\VVall}}\setminus\QQ}$. Since
    $\vv\notin\fv e$, we also have  $P_z = \denotee e{m(\vv:=\initial\vv)}(z) \otimes \id_{\qu{{\VVall}}\setminus\QQ}$
    
    We have for all $m,\rho$:
    \begin{align*}
      &\denotc{\init\vv;\bc}\pb\paren{\proj{\basis{\cl{{\VVall}}}m}\otimes\rho} \\
      &= \denotc{\Qmeasure\XX\QQ e}\circ\denotc{\assign\vv{\initial\vv}}\pb\paren{\proj{\basis{\cl{{\VVall}}}m}\otimes\rho}
      && \text{(sem.~of ;, def.~of $\initkw$)} \\
      &= \denotc{\Qmeasure\XX\QQ e}\pb\paren{\proj{\basis{\cl{{\VVall}}}{m(\vv:={\initial\vv})}}\otimes\rho}
      && \text{(sem.~of assignment)}\\
      &= \sum_z \proj{\basis{\cl{{\VVall}}}{m(\vv:={\initial\vv})(\XX:=z)}}\otimes P_z\rho P_z
      && \text{(sem.~of measurement)}\\
      &\starrel= \sum_z \proj{\basis{\cl{{\VVall}}}{m(\XX:=z)}}\otimes P_z\rho P_z
      \\
      & = \denotc{\Qmeasure\XX\QQ e}\pb\paren{\proj{\basis{\cl{{\VVall}}}{m}}\otimes\rho}
      && \text{(sem.~of measurement)}\\
      & = \denotc\bc\pb\paren{\proj{\basis{\cl{{\VVall}}}{m}}\otimes\rho}
    \end{align*}
    Here $(*)$ follows
    since $\vv\in\XX$ and thus
    $m(\vv:={\initial\vv})(\XX:=z) = m(\XX:=z)$.
  \item Case $\bc=\local\ww{\bc'}$:

    Since $\vv\in\overwr\bc=\overwr{\bc'}\setminus\{\ww\}$, we have
    $\vv\in\overwr{\bc'}$ but $\vv\neq\ww$.  By induction hypothesis,
    we have $\init\vv;\bc'\deneq\bc'$.  By
    \lemmaref{lemma:merge.local.fv}, we have
    $\init\vv;\local\ww{\bc'} \deneq \local\ww{\init\vv;\bc'}$.
    Thus $\init\vv;\bc
    \deneq
    \init\vv;\local\ww{\bc'} \deneq \local\ww{\init\vv;\bc'}
    \deneq \local\ww\bc' = \bc $.
  \end{itemize}

  Thus we have shown:
  \begin{equation}
    {\init\vv;\bc} \deneq \bc
    \qquad\text{if}\qquad
    \vv\in\overwr\bc.\label{eq:init.v}
  \end{equation}
  
  From this we can conclude \eqref{lemma:init.overwr:init}: Let $\{\vv_1,\dots,\vv_n\}:=\VV\subseteq\overwr\bc$.
  \begin{equation*}
    {\init\VV;\bc}
    \deneq
    {\init{\vv_1};\dots;\init{\vv_n};\bc}
    \deneq
    {\init{\vv_1};\dots;\init{\vv_{n-1}};\bc}
    \deneq
    \dots
    \deneq
    {\bc}.
  \end{equation*}
  This shows \eqref{lemma:init.overwr:init}.

  \medskip

  We show \eqref{lemma:init.overwr:X}. It is easy to check from the
  semantics that $\assign\XX e;\init\XX \deneq \init\XX$. Thus
  \[
    \assign\XX e;\bc
  \eqrefrel{lemma:init.overwr:init}
  \deneq
  \assign\XX e;\init\XX;\bc
  \deneq
  \init\XX;\bc
  \eqrefrel{lemma:init.overwr:init}
  \deneq
  \bc.
\]

\medskip

Finally, we 
show \eqref{lemma:init.overwr:Q}. It is easy to check from the
  semantics that $\assign\QQ e;\init\QQ \deneq \init\QQ$. Thus
  \begin{equation*}
    \Qinit\QQ e;\bc
  \eqrefrel{lemma:init.overwr:init}
  \deneq
  \Qinit\QQ e;\init\QQ;\bc
  \deneq
  \init\QQ;\bc
  \eqrefrel{lemma:init.overwr:init}
  \deneq
  \bc.
  \mathQED
\end{equation*}
\end{proof}

\begin{lemma}\lemmalabel{lemma:move}
  \begin{compactenum}[(i)]
  \item \itlabel{lemma:move.block}
    For $\bc := \localp{\VV_1}{\bc_1};\dots;\localp{\VV_n}{\bc_n}$, we have
    $\bc
    \deneq
      \local{\VV^\uparrow}{
        \paren{\bc'_1;\dots;\bc'_n}
        }$
    where
    $\bc_i' := \init {\VV_i^*}; \local{\VV_i^\downarrow}{\bc_i}$,
    $\WW_1 := \VV^\uparrow$,
    $\WW_{i+1} := \WW_i \cup \VV_i^* - (\fv{\bc_i} - \VV^\downarrow_i)$,
    $\VV_i - \VV_i^\downarrow - \WW_i \subseteq \VV^*_i \subseteq \VV^\uparrow$,
    $\VV^\downarrow_i\subseteq\VV_i\subseteq\VV^\downarrow_i \cup \VV^\uparrow$,
    $\VV^\uparrow \cap \fv\bc = \varnothing$.\footnote{The greedy instantiation is
      $\VV^\uparrow := \pb\paren{\bigcup_i \VV_i} - \fv\bc$,
      $\VV^\downarrow_i := \VV_i - \VV^\uparrow$,
      $\VV^*_i := \VV_i - \VV^\downarrow_i - \WW_i$.}

  \item \itlabel{lemma:move.if}
    Assume
  $\VV^\uparrow \cup \VV^\downarrow_\mathit{then} \supseteq \VV_\mathit{then} \supseteq \VV^\downarrow_\mathit{then}$, 
  $\VV^\uparrow \cup \VV^\downarrow_\mathit{else} \supseteq \VV_\mathit{else} \supseteq \VV^\downarrow_\mathit{else}$,
  $\VV^\uparrow \cap \fv e = \varnothing$,
  $\VV^\uparrow \cap (\fv{\bc_\mathit{then}} \setminus \VV_\mathit{then})=\varnothing$,
  $\VV^\uparrow \cap (\fv{\bc_\mathit{else}} \setminus \VV_\mathit{else})=\varnothing$.\footnote{The greedy instantiation
    is $\VV^\uparrow := (\VV_\mathit{then}\cup \VV_\mathit{else})
    \setminus \pb\paren{\fv{\bc_\mathit{then}} \setminus \VV_\mathit{then}}
    \setminus \pb\paren{\fv{\bc_\mathit{else}} \setminus \VV_\mathit{else}}
    \setminus \fv e$,
    $\VV_\mathit{then}^\downarrow
    := \VV_\mathit{then} \setminus \VV^\uparrow$
    $\VV_\mathit{else}^\downarrow
    := \VV_\mathit{else} \setminus \VV^\uparrow$. }
  Then
  \begin{multline*}
    {\langif e{\{\local{\VV_\mathit{then}}\bc_\mathit{then}\}}
    {\{\local{\VV_\mathit{else}}\bc_\mathit{else}\}}}
  \\
  \deneq
  {
    \local{\VV^\uparrow}{
      \langif e{\{\local{\VV^\downarrow_\mathit{then}}\bc_\mathit{then}\}}
      {\{\local{\VV^\downarrow_\mathit{else}}\bc_\mathit{else}\}}}}
\end{multline*}

  \item\itlabel{lemma:move.while}
    Assume $\VV^\uparrow \subseteq \VV \setminus \fv e$ and $\VV^\downarrow := \VV \setminus \VV^\uparrow$.\footnote{The greedy instantiation is $\VV^\uparrow := \VV \setminus \fv e$.}
      Then
      \[
        {\while e{\{ \local \VV \bc \}}}
        \deneq {\local{\VV^\uparrow}{\while e{\{\local{\VV^\downarrow}\init{\VV^\uparrow};\bc\}}}}
      \]
\end{compactenum}
\end{lemma}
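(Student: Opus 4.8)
The plan is to prove all three parts at the level of denotational superoperators, hoisting the single outer declaration $\LOCAL{\VV^\uparrow}{\cdot}$ to the top and relying on $\denotc{\local\VV\bc}=\LOCAL\VV{\denotc\bc}$ (\lemmaref{lemma:local.LOCAL}) together with the merging law $\LOCAL{\VV^\uparrow}{\LOCAL{\VV'}\calE}=\LOCAL{\VV^\uparrow\cup\VV'}\calE$ (\lemmaref{lemma:local.merge}). First I would isolate three facts about $\LOCAL{\VV^\uparrow}{\cdot}$. (A) If $\calF=\calF_0\otimes\idv{\VV^\uparrow}$ acts trivially on $\VV^\uparrow$, it commutes out: $\LOCAL{\VV^\uparrow}{(\calF\circ\calE)}=\calF_0\circ\LOCAL{\VV^\uparrow}\calE$ and likewise on the right; this is immediate from \eqref{eq:semantics.local} since $\partr{}{\VV^\uparrow}$ commutes with $\calF_0\otimes\idv{\VV^\uparrow}$ and $(\calF_0\otimes\id)(\rho\otimes\proj{\psiinit{\VV^\uparrow}})=\calF_0(\rho)\otimes\proj{\psiinit{\VV^\uparrow}}$. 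Because $\VV^\uparrow\cap\fv e=\varnothing$ throughout, $\restricte e$ and $\restricte{\lnot e}$ have this shape and slip through $\LOCAL{\VV^\uparrow}{\cdot}$. (B) A \emph{full} re-initialization splits the scope: writing $\mathcal I:=\denotc{\init{\VV^\uparrow}}=\Einit{\VV^\uparrow}\otimes\idv{\VVall\setminus\VV^\uparrow}$ (\autoref{lemma:init.sem}), one has $\LOCAL{\VV^\uparrow}{(\calF\circ\mathcal I\circ\calE)}=\LOCAL{\VV^\uparrow}\calF\circ\LOCAL{\VV^\uparrow}\calE$, because $\mathcal I(Y)=\proj{\psiinit{\VV^\uparrow}}\otimes\partr{}{\VV^\uparrow}Y$ replaces the $\VV^\uparrow$-part by its default and thereby severs the information flow through $\VV^\uparrow$. (C) The superoperator form of \lemmaref{lemma:unused}: $\LOCAL{\VV'}\calE=\calE$ whenever $\VV'$ is disjoint from the variables on which $\calE$ acts nontrivially (in particular $\LOCAL{\VV'}\id=\id$).

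For (ii) I would unfold both conditionals via $\denotc{\langif e\bc\bd}=\denotc\bc\circ\restricte e+\denotc\bd\circ\restricte{\lnot e}$ and use \autoref{lemma:local.sum} (two summands) to distribute $\LOCAL{\VV^\uparrow}{\cdot}$ over the sum on the right. The goal then splits into the branch identity $\LOCAL{\VV^\uparrow}{(\LOCAL{\VV^\downarrow_\mathit{then}}{\denotc{\bc_\mathit{then}}}\circ\restricte e)}=\LOCAL{\VV_\mathit{then}}{\denotc{\bc_\mathit{then}}}\circ\restricte e$ and its else-analogue. Fact (A) removes $\restricte e$; \lemmaref{lemma:local.merge} rewrites the nested declaration as $\LOCAL{\VV^\uparrow\cup\VV^\downarrow_\mathit{then}}{\denotc{\bc_\mathit{then}}}$; and since $\VV^\uparrow\cup\VV^\downarrow_\mathit{then}\supseteq\VV_\mathit{then}$ with surplus inside $\VV^\uparrow\setminus\VV_\mathit{then}$, which is disjoint from $\fv{\bc_\mathit{then}}$ by the hypothesis $\VV^\uparrow\cap(\fv{\bc_\mathit{then}}\setminus\VV_\mathit{then})=\varnothing$, Fact (C) discards the surplus and leaves $\LOCAL{\VV_\mathit{then}}{\denotc{\bc_\mathit{then}}}$. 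No re-initialization appears, matching the fact that a branch runs exactly once.

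For (iii) I would unfold both sides with the loop semantics $\denotc{\while e{\bc_0}}=\sum_{i\ge0}\restricte{\lnot e}\circ(\denotc{\bc_0}\circ\restricte e)^i$ (for a body $\bc_0$) and use \autoref{lemma:local.sum} to move $\LOCAL{\VV^\uparrow}{\cdot}$ inside the infinite sum. Abbreviate $\calF:=\LOCAL{\VV^\downarrow}{\denotc\bc}$. The right-hand body has denotation $\LOCAL{\VV^\downarrow}{(\denotc\bc\circ\mathcal I)}=\calF\circ\mathcal I$ (the factor $\mathcal I$ commutes out of $\LOCAL{\VV^\downarrow}{\cdot}$ as $\VV^\downarrow\cap\VV^\uparrow=\varnothing$), while the left body has denotation $\LOCAL\VV{\denotc\bc}=\LOCAL{\VV^\uparrow}\calF$ by \lemmaref{lemma:local.merge} and $\VV=\VV^\uparrow\sqcup\VV^\downarrow$. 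Pushing the commuting $\mathcal I$ to the right of $\restricte e$, it suffices to prove by induction on $i$ that $\LOCAL{\VV^\uparrow}{((\calF\circ\restricte e\circ\mathcal I)^i)}=(\LOCAL{\VV^\uparrow}\calF\circ\restricte e)^i$: each step peels one factor by splitting at the leading $\mathcal I$ via Fact (B) and using Fact (A) on $\restricte e$, the base case being Fact (C). Prepending $\restricte{\lnot e}$ (pulled out by Fact (A)) and summing over $i$ reproduces the left-hand loop, since $\LOCAL{\VV^\uparrow}\calF=\denotc{\local\VV\bc}$.

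Part (i) follows the same skeleton but is the genuinely hard case. I would first settle the \emph{full} re-initialization instance, where every $\VV^*_i$ is taken to be all of $\VV^\uparrow$. Then $\denotc{\local{\VV^\uparrow}{(\bc_1';\dots;\bc_n')}}=\LOCAL{\VV^\uparrow}{(\calF_n\circ\mathcal I\circ\dots\circ\calF_1\circ\mathcal I)}$ with $\calF_i:=\LOCAL{\VV^\downarrow_i}{\denotc{\bc_i}}$, and iterating Fact (B) (the last $\mathcal I$ absorbed by Fact (C)) gives $\LOCAL{\VV^\uparrow}{\calF_n}\circ\dots\circ\LOCAL{\VV^\uparrow}{\calF_1}$. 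Each factor equals $\denotc{\localp{\VV_i}{\bc_i}}$: by \lemmaref{lemma:local.merge} it is $\LOCAL{\VV^\uparrow\cup\VV^\downarrow_i}{\denotc{\bc_i}}=\LOCAL{\VV^\uparrow\cup\VV_i}{\denotc{\bc_i}}$, and the surplus $\VV^\uparrow\setminus\VV_i$ is disjoint from $\fv{\bc_i}$ because $\VV^\uparrow\cap\fv\bc=\varnothing$ forces $\VV^\uparrow\cap\fv{\bc_i}\subseteq\VV_i$, so Fact (C) reduces it to $\LOCAL{\VV_i}{\denotc{\bc_i}}$; composing yields exactly $\denotc\bc$. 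The main obstacle is then to descend from the full re-initialization to the minimal $\VV^*_i$ allowed by the statement: skipping the re-initialization of some $\vv\in\VV^\uparrow\setminus\VV^*_i$ between two blocks is sound only because $\vv$ is already in its default state there. This is precisely what the set $\WW_i$ records ($\WW_1=\VV^\uparrow$, $\WW_{i+1}=\WW_i\cup\VV^*_i-(\fv{\bc_i}-\VV^\downarrow_i)$), and the side condition $\VV_i-\VV^\downarrow_i-\WW_i\subseteq\VV^*_i$ is exactly what guarantees that every hoisted local of block $i$ is in its default state when $\bc_i$ begins. Turning this into a proof requires threading the state invariant ``the $\WW_i$-subsystem factors as $\proj{\psiinit{\WW_i}}\otimes(\text{rest})$'' through the composition and appealing to the semantic fact that $\init\vv$ is the identity on a state whose $\vv$-component is already $\proj{\psiinit\vv}$. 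I expect this clean-variable induction, with its many side conditions, to be the delicate step, and it is exactly the sort of argument the paper discharges in Isabelle/HOL.
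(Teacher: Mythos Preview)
Your proposal is correct, and for parts~(ii) and~(iii) it takes a genuinely different route from the paper. The paper does not work directly with the superoperator facts~(A), (B), (C) you isolate; instead it introduces auxiliary \emph{programs} $\bd_e,\bd_{\lnot e}$ with $\denotc{\bd_e}=\restricte e$ and $\denotc{\bd_{\lnot e}}=\restricte{\lnot e}$, so that the if-construct becomes (up to a two-term sum) a sequential block $\bd_e;\localp{\VV_\mathit{then}}{\bc_\mathit{then}}$, and the while-construct unrolls to an infinite sum of sequential blocks $\bd_e;\localp\VV\bc;\dots;\bd_e;\localp\VV\bc;\bd_{\lnot e}$. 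Each summand is then handled by a direct application of part~(i), and \autoref{lemma:local.sum} recombines. The advantage of the paper's approach is modularity: once~(i) is machine-checked in Isabelle, (ii) and~(iii) reduce to it with almost no new semantic reasoning. Your approach, by contrast, avoids the detour through auxiliary programs and the somewhat heavy side-condition bookkeeping of instantiating~(i); the price is that you must establish and use Facts~(A)--(C) explicitly and carry out the induction on~$i$ by hand. Both arguments are sound, and your Fact~(B) is essentially the semantic content of \lemmaref{lemma:merge.local}.

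For part~(i) the paper gives no manual proof (only an Isabelle reference), and your sketch---full re-initialization first via iterated~(B), then dropping redundant $\initkw$'s by tracking the invariant that the $\WW_i$-subsystem is in its default state---is a reasonable outline of what such a proof must do; you correctly identify the clean-variable invariant as the delicate point.
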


This lemma allows us to move local variable declarations upwards in a
program without changing its semantics.  We have implemented a tactic
\texttt{local up}%
\index{local up@\texttt{local up} (tactic)}%
\pagelabel{page:tactic:local-up}
in \texttt{qrhl-tool} that allows us to move all or
selected local variables upwards as far as possible. This is done by
repeatedly applying \autoref{lemma:move} repeatedly as long as
possible. In the footnotes in \autoref{lemma:move}, we describe how
the tactic instantiates the lemma (we call this the ``greedy
instantiation'') in order to move as many variables as possible.

\begin{proof}
  \eqref{lemma:move.block}
  is shown in Isabelle/HOL, as \verb|locals_up_block| in theory
  \verb|Locals_Up_Block.thy|. See \autoref{sec:isabelle-proofs} for
  remarks about our Isabelle/HOL development.

{\TODOQ{Proof sketch}}

  \bigskip

  We now show \eqref{lemma:move.if}.
  By definition,
  $\denotc{\langif e{\bc_\mathit{then}}{\bd_\mathit{else}}}
  = \denotc{\bc_\mathit{then}}\circ\restricte e + \denotc{\bd_\mathit{else}}\circ\restricte{\lnot e}$.
  Let $\bd_e,\bd_{\lnot e}$ be  arbitrary programs with
  $\denotc{\bd_e}=\restricte e$,   $\denotc{\bd_{\lnot e}}=\restricte{\lnot e}$ and $\fv{\bd_e},\fv{\bd_{\lnot e}}=\fv e$.\footnote{E.g.,
    $\bd_e:=\langif e{\textbf{halt}}{\Skip}$,
    $\bd_{\lnot e}:=\langif e\Skip{\textbf{halt}}$,
    $\textbf{halt} := \while\true\Skip$.}
  Then
  \begin{equation}
    \denotc{\langif e{\bc_\mathit{then}}{\bd_\mathit{else}}}
    = \denotc{\bd_e; \bc_\mathit{then}} + \denotc{\bd_{\lnot e}; \bc_\mathit{else}}
    \qquad
    \text{for any $\bc_\mathit{then},\bc_\mathit{else}$}.
    \label{eq:if.unroll}
  \end{equation}
  Thus
  \begin{align*}
    &\denotc{\langif e{{\localp{\VV_\mathit{then}}{\bc_\mathit{then}}}}
      {{\localp{\VV_\mathit{else}}{\bc_\mathit{else}}}}}
    \\
    &\eqrefrel{eq:if.unroll}=
      \denotc{\bd_e;\local{\VV_\mathit{then}}{\bc_\mathit{then}}}
      +
      \denotc{\bd_{\lnot e};\local{\VV_\mathit{else}}{\bc_\mathit{else}}}
    \\
    &\eqrefrel{lemma:move.block}=
      \denotc{\localpp{\VV^\uparrow}{\bd_e;\local{\VV^\downarrow_\mathit{then}}{\bc_\mathit{then}}}}
      +
      \denotc{\local{\VV^\uparrow}{\bd_{\lnot e};\local{\VV^\downarrow_\mathit{else}}{\bc_\mathit{else}}}}
    \\
    &\txtrel{\lemmarefshort{lemma:local.LOCAL}}=\quad
      \pb\LOCAL{\VV^\uparrow}{\denotc{\bd_{e};\local{\VV^\downarrow_\mathit{then}}{\bc_\mathit{then}}}}
      + \pb\LOCAL{\VV^\uparrow}{\denotc{\bd_{\lnot e};\local{\VV^\downarrow_\mathit{else}}{\bc_\mathit{else}}}}
    \\
    &\txtrel{\lemmarefshort{lemma:local.sum}}=\ \,
      \pb\LOCAL{\VV^\uparrow}{\denotc{\bd_{e};\local{\VV^\downarrow_\mathit{then}}{\bc_\mathit{then}}}}
      + \denotc{\bd_{\lnot e};\local{\VV^\downarrow_\mathit{else}}{\bc_\mathit{else}}}
    \\
    &\eqrefrel{eq:if.unroll}=
      \pb\LOCAL{\VV^\uparrow}{
      \denotc{\langif e{{\localp{\VV^\downarrow_\mathit{then}}{\bc_\mathit{then}}}}
      {{\localp{\VV^\downarrow_\mathit{else}}{\bc_\mathit{else}}}}}}
    \\
    &
      \txtrel{\lemmarefshort{lemma:local.LOCAL}}=\quad
      \denotc{
      \local{\VV^\uparrow}{
      \langif e{{\localp{\VV^\downarrow_\mathit{then}}{\bc_\mathit{then}}}}
      {{\localp{\VV^\downarrow_\mathit{else}}{\bc_\mathit{else}}}}}}.
  \end{align*}
  In this calculation, $\eqrefrel{lemma:move.block}=$ uses two
  invocations of \eqref{lemma:move.block}, one to convert each of the
  summands. The first invocation is instantiated with
  $\bc:={\bd_e;\local{\VV_\mathit{then}}{\bc_\mathit{then}}}$,
  $\bc_1:=\bd_e$, $\bc_2:=\bc_\mathit{then}$, $\VV_1:=\varnothing$,
  $\VV_2:=\VV_\mathit{then}$, $\VV_1^\downarrow:=\varnothing$,
  $\VV_2^\downarrow:=\VV^\downarrow_\mathit{then}$,
  $\VV_1^*,\VV_2^*:=\varnothing$, $\VV^\uparrow:=\VV^\uparrow$,
  $\WW_1:=\VV^\uparrow$, $\WW_2:=\VV^\uparrow\setminus\fv e$.  The
  premises of \eqref{lemma:move.block} then follow from the premises
  of \eqref{lemma:move.if} (using also that $\fv{\bd_e}=\fv e$ and
  $\fv{\bd_e;\local{\VV_\mathit{then}}{\bc_\mathit{then}}} = \fv
  e\cup(\fv{\bc_\mathit{then}}\setminus \VV_\mathit{then})$.  The
  second invocation is instantiated analogously, with $\mathit{else}$
  instead of $\mathtt{then}$, and with $\bc_1:=\bd_{\lnot e}$.
  
  This shows \eqref{lemma:move.if}.
    
  \bigskip

  We now show \eqref{lemma:move.while}. 
  
  By definition,
  $\denotc{\while e\bc}
  = \sum_{i=0}^\infty
  \restricte{\lnot e}
  \circ (\denotc\bc\circ\restricte e)^i$.
  (Convergence is pointwise with respect to the trace-norm.)
  Let $\bd_e,\bd_{\lnot e}$ be  arbitrary programs with
  $\denotc{\bd_e}=\restricte e$,   $\denotc{\bd_{\lnot e}}=\restricte{\lnot e}$ and $\fv{\bd_e},\fv{\bd_{\lnot e}}=\fv e$.
  Then
  \begin{equation}
    \denotc{\while e\bc}
  = \sum_{i=0}^\infty
  \denotc{\underbrace{\bd_e;\bc;\dots;\bd_e;\bc}_{i\text{ times}}
    ;\bd_{\lnot e}}
  \qquad
  \text{for any $\bc$}
  \label{eq:while.unroll}
\end{equation}
  From \eqref{lemma:move.block}, we have
  \begin{multline} \label{eq:move.while.i}
    {\bd_e;{\localp\VV\bc};\dots;\bd_e;{\localp\VV\bc};\bd_{\lnot e}}
    \\
    \deneq
    \localpp{\VV^\uparrow}{\underbrace{\bd_e;(\init\VV^\uparrow;\local{\VV^\downarrow}\bc);\dots;\bd_e;(\init\VV^\uparrow;\local{\VV^\downarrow}\bc);\bd_{\lnot e}}_{=:\bc_i^*}}
  \end{multline}
  Specifically, we instantiate \eqref{lemma:move.block} as follows:
  \begin{alignat*}8
    \bc_1,\dots,\bc_{2n+1} &\ :=\  \bd_e,\  &&\bc,\            &&\bd_e,\      &&\bc,\            &&\dots,\ &&\bd_e,\ &&\bc,\           &&\bd_{\lnot e} \\
    \bc_1',\dots,\bc_{2n+1}' &\ :=\  \bd_e,\ &&\scriptstyle (\init\VV^\uparrow;\local{\VV^\downarrow}\bc),\ &&\bd_e,       &&\scriptstyle (\init\VV^\uparrow;\local{\VV^\downarrow}\bc),\ &&\dots, &&\bd_e,    &&\scriptstyle (\init\VV^\uparrow;\local{\VV^\downarrow}\bc),\ &&\bd_{\lnot e} \\
    \VV_1,\dots\VV_{2n+1} &\ :=\  \varnothing,\ &&\VV,       &&\varnothing, &&\VV,              &&\dots, &&\varnothing, &&\VV,          && \varnothing \\
    \VV^*_1,\dots\VV^*_{2n+1} &\ :=\  \varnothing,\ &&\VV^\uparrow,       &&\varnothing, &&\VV^\uparrow,              &&\dots, &&\varnothing, &&\VV^\uparrow,          && \varnothing \\
    \VV^\downarrow_1,\dots\VV^\downarrow_{2n+1} &\ :=\  \varnothing,\ &&\VV^\downarrow,       &&\varnothing, &&\VV^\downarrow,              &&\dots, &&\varnothing, &&\VV^\downarrow,          && \varnothing
  \end{alignat*}
  As well as $ \VV^\uparrow := \VV^\uparrow$. (The values of
  $\WW_i$ do not matter in this application of
  \eqref{lemma:move.block}.) Then the premises of
  \eqref{lemma:move.block} are elementary to check (using
  also $\fv{\bd_e;{\localp\VV\bc};\dots;\bd_e;{\localp\VV\bc};\bd_{\lnot e}}=\fv e\cup(\fv\bc\setminus\VV)$
  and the premises $\VV^\uparrow\subseteq \VV\setminus\fv e$ and $\VV^\downarrow=\VV\setminus\VV^\uparrow$).
  \eqref{eq:move.while.i} follows.

  And from \eqref{eq:while.unroll}, we have
  \begin{equation}
    \label{eq:while.unroll2}
    \while e{\paren{\init\VV^\uparrow;\local{\VV^\downarrow}\bc}}
    \ \deneq\
    \sum_{i=0}^\infty
    \denotc{\bc_i^*}.
  \end{equation}
  
  This implies:
  \begin{align*}
    \denotc{\while e{\paren{\local\VV\bc}}}
    \ &\txtrel{\eqref{eq:while.unroll},\eqref{eq:move.while.i}}=\
    \sum_{i=0}^\infty \denotc {\local{\VV^\uparrow}{\bc_i^*}}
    =
    \sum_{i=0}^\infty
      \pb\LOCAL{\VV^\uparrow}{\denotc{\bc_i^*}}
    \\
    &\txtrel{\lemmarefshort{lemma:local.sum}}=\ \,
      \pB\LOCAL{\VV^\uparrow}{    \sum_{i=0}^\infty \denotc{\bc_i^*}}
    \eqrefrel{eq:while.unroll2}=
      \pB\LOCAL{\VV^\uparrow}{\denotc{\while e{\paren{\init\VV^\uparrow;\local{\VV^\downarrow}\bc}}}}
    \\
    &=
    \denotc{\local{\VV^\uparrow}{\while e{\paren{\init\VV^\uparrow;\local{\VV^\downarrow}\bc}}}} \\
    &\txtrel{\lemmarefshort{lemma:merge.local.fv}}=\quad
    \denotc{\local{\VV^\uparrow}{\while e{\paren{\localpp{\VV^\downarrow}{\init\VV^\uparrow;\bc}}}}}.
  \end{align*}
  This shows \eqref{lemma:move.while}.
\end{proof}

\section{Basic rules for variable renaming/removal}
\label{sec:basic.var.rules}

\ERULE{RenameQrhl1}{
    \noconflict\sigma\bc
    \\
    \sigma\text{ injective on } \fv\bc\cup \{\vv :
    \vv_1\in\fv\PA\cup\fv\PB\}
    \\
    \rhl{\PA\sigma_1}{\bc\sigma}\bc{\PB\sigma_1} }{ \rhl\PA\bc\bd\PB }
  
  Here $\sigma_1(\vv_1):=\ww_1$ whenever $\sigma(\vv)=\ww$ (and
  $\sigma_1(\vv_2):=\vv_2$ for all $\vv$).

  Analogously \texttt{RenameQrhl2} for renaming in $\bd$.

This is shown in Isabelle/HOL, as \verb|rename_qrhl_left| and
\verb|rename_qrhl_right| in theory \verb|Rename_Locals.thy|. See
\autoref{sec:isabelle-proofs} for remarks about our Isabelle/HOL
development.

We have implemented a tactic \texttt{rename}%
\index{rename@\texttt{rename} (tactic)}%
\pagelabel{page:tactic:rename}
in \texttt{qrhl-tool} that implements this rule.

{\TODOQ{Proof sketch}}

\ERULE{RemoveLocal1}{
  \fv\PA,\fv\PB\cap \VV_1=\varnothing
  \\
  \pb\rhl{\PA\cap
    \paren{\VV_1\eqstate\psiinit\VV}
  }\bc\bd\PB
}{
  \rhl\PA{\local\VV\bc}\bd\PB
}

Analogously \texttt{RemoveLocal2} for removing on the right side.

Note that the converse of this rule does not hold: If
    $\rhl\PA{\local\VV\bc}\bd\PB$, then we do not necessarily have
    $ \pb\rhl{\PA\cap \paren{\VV_1\eqstate\psiinit\VV} }\bc\bd\PB $.\footnote{Counterexample:
    Let $\bc:=\paren{\Qinit{\qq\rr}{\fsq\basis{}{00}
        + \fsq\basis{}{11}}}$,
    $\bd:=\paren{\sample\xx\bit}$ (here $\bit$ stands for the uniform distribution on $\bit$),
    $\PA:=\CL\true$,
    $\PB:=\paren{\qq_1\eqstate\basis{}{\xx_2}}$.
    Then
    $\rhl{\PA\cap\paren{\qq_1\eqstate\psiinit\qq}}\bc\bd\PB$ does not hold
    (this would imply that $\qq$ is unentangled with $\rr$ after $\bc$).
    Yet $\local\rr\bc$ is equivalent to assigning
    randomly $\basis{}0$ or $\basis{}1$ to $\qq$, thus
    $\rhl\PA{\local\yy\bc}\bd\PB$ holds.
  }

  We have implemented this rule as the tactic \texttt{local remove}%
  \index{local remove@\texttt{local remove} (tactic)}%
  \pagelabel{page:tactic:remove} in
  \texttt{qrhl-tool} which allows us to remove selected (or all) local
  variable declarations from the top of the left or right program. The
  tactic is a little weaker in that it does not include
  ${\VV_1\eqstate\psiinit\VV}$ in the precondition of the new subgoal.

\begin{proof}
  We show the rule for $\VV=\vv$ (only one variable). The general case follows by induction.
  
  Let $\ww\in\VVall$ be an arbitrary variable such that
  $\ww_1\notin\fv{\PA,\PB}$, $\ww\notin\fv\bc$,
  $\abs{\typev\ww}\geq\abs{\typev\vv}$,
  $\abs{\typev\ww}=\infty$.\footnote{Such $\ww$ exists because there
    are infinitely many $\ww$ with $\typev\ww=\typev\vv$ and
    infinitely many $\ww$ with $\typev\ww=\infty$,
    and $\fv\PA$, $\fv\PB$, $\fv\bc$ are finite (see
    preliminaries).}
  Then $\abs{\types{\vv'\ww}}=\abs{\types{\vv\ww}}=\abs{\typev\ww}$.
  Hence there is a bijection $\phi:\typev\ww\to\types{\vv'\ww}$.
  Thus $U:\basis{\ww}{x}\mapsto\basis{\vv'\ww}{\phi(x)}$ is a unitary from $\elltwov{\ww}$ to $\elltwov{\vv'\ww}$.

  Since $\ww\notin\fv\bc$, by \autoref{lemma:fv},
  $\denotc\bc=\calE\otimes\id_{\ww}$ for some $\calE$ on
  $\VVall\setminus\ww$.  In slight abuse of notation, we also write
  $\denotc\bc$ for that $\calE$.

  Since $\vv,\ww\notin\fv\PA,\fv\PB$, there are $\PA',\PB'$ such that
  $\PA=\PA'\otimes\elltwov{\vv\ww}$ and
  $\PB=\PB'\otimes\elltwov{\vv\ww}$.
  
  Consider the following circuit (but ignore the wavy lines with boxes on the bottom for now):
  \newcounter{vertline}
  \newcommand\vertline[2]{
    \refstepcounter{vertline}
    \draw[very thin, decorate, decoration={snake, segment length=1mm, amplitude=.1mm}]
    (\currentXPos |- \getWireCoord{above})
          node[inner ysep=0, inner xsep=1pt, anchor=west] {\tiny$\rho^{\scriptscriptstyle(\thevertline)}$}
    -- (\currentXPos |- \getWireCoord{below#1})
          node[draw, very thin, inner ysep=2pt, inner xsep=-4pt, anchor=north]
              {\footnotesize $\begin{array}{c}#2 \end{array}$};
  }
  \begin{center}
    \begin{tikzpicture}
      \initializeCircuit;
      \newWires{above,rest,v,v2,w};
      \newWireRelative{below1}{w}{-1cm};
      \newWireRelative{below2}{below1}{-1cm};
      \stepForward{5mm};
      \labelWire[\tiny $\VVall\setminus\vv\ww$]{rest};
      \labelWire[\tiny $\vv$]{v};
      \labelWire[\tiny $\ww$]{w};
      \stepForward{7mm};
      \vertline1{\PA = \\ \PA'\otimes\elltwov{\vv_1\ww_1}};
      \label{at-start};
      \stepForward{8mm};
      \node[wireInput={v2}] (init-v2) {\footnotesize $\psiinit{\vv'}$};
      \stepForward{2mm};
      \labelWire[\tiny $\vv'$]{v2};
      \stepForward{2mm};
      \vertline2{\PA'\otimes\elltwov{\vv_1\ww_1}\\{}\otimes\SPAN\psiinit{\vv'_1}};
      \label{after-init};
      \stepForward{3mm};
      \drawWires{v,v2};
      \stepForward{5mm};
      \crossWire{v}{v2};
      \crossWire{v2}{v};
      \skipWires{v,v2};
      \stepForward{3mm};
      \vertline1{\PA'\otimes\SPAN\psiinit{\vv_1}\\{}\otimes\elltwov{\vv'_1\ww_1}};
      \label{after-swap};
      \stepForward{3mm};
      \node[gateAsy={v2,w}{w}] (U) {$\adj U$};
      \stepForward{2mm};
      \labelWire[\tiny $\ww$]{w};
      \stepForward{3mm};
      \vertline2{\PA'\otimes\SPAN\psiinit{\vv_1}\otimes\elltwov{\ww_1}
        \\{}=\PA\cap(\vv_1\eqstate\psiinit{\vv_1})};
      \label{after-U*};
      \stepForward{4mm};
      \node[gate={rest,v}] (c) {$\denotc\bc$};
      \stepForward{4mm};
      \vertline1{\PB=\\{}\PB'\otimes\elltwov{\vv_1\ww_1}};
      \label{after-c};
      \stepForward{5mm};
      \node[gateAsy={w}{v2,w}] (U) {$U$};
      \stepForward{2mm};
      \labelWire[\tiny $\vv'$]{v2};
      \stepForward{3mm};
      \vertline2{\PB'\otimes\elltwov{\vv_1\vv'_1\ww_1}};
      \label{after-U};
      \stepForward{4mm};
      \drawWires{v,v2};
      \stepForward{5mm};
      \crossWire{v}{v2};
      \crossWire{v2}{v};
      \skipWires{v,v2};
      \stepForward{4mm};
      \vertline1{\PB'\otimes\elltwov{\vv_1\vv'_1\ww_1}};
      \label{after-swap2};
      \stepForward{5mm};
      \node[killWire=v2] (kill-v2) {};
      \stepForward{7mm};
      \vertline2{\PB=\\{}\PB'\otimes\elltwov{\vv_1\ww_1}};
      \label{after-kill};
      \stepForward{7mm};
      \labelWire[\tiny $\VVall\setminus\vv\ww$]{rest};
      \labelWire[\tiny $\vv$]{v};
      \labelWire[\tiny $\ww$]{w};
      \stepForward{6mm};
      \drawWires{rest,v,w};
    \end{tikzpicture}
  \end{center}
  First, not that $U$ commutes with $\denotc\bc$. And since $U$ is
  unitary, $\adj U$ and $U$ are inverses. Thus $\adj U$ and $U$ cancel
  out in this circuit. The remaining circuit is by definition
  $\denotc{\local\vv\bc}$ (see \eqref{eq:semantics.local}).

  Abbreviating, we say ``$\rho_1,\rho_2$ satisfy $\PA$'' iff there exists
  a separable $\rho$ (the ``coupling'') such that $\partr{}2\rho=\rho_1$,
  $\partr{}1\rho=\rho_2$, and $\rho$ satisfies $\PA$.
  
  We need to show $\rhl\PA{\local\vv\bc}\bd\PB$. For this, fix
  cq-operators $\rho_1,\rho_2$ on $\VVall$ satisfying~$\PA$.  We need
  to show that $\denotc{\local\vv\bc}(\rho_1),\denotc\bd(\rho_2)$
  satisfy $\PB$.

  \newcommand\rhoref[1]{\rho^{\scriptscriptstyle\eqref{#1}}}
  \newcommand\Cref[1]{\PC^{\scriptscriptstyle\eqref{#1}}}

  Let $\rhoref{at-start},\dots,\rhoref{after-kill}$ be the
  states at the corresponding wavy lines when executing the above
  circuit with initial state $\rho_1$. In particular, $\rhoref{at-start}=\rho_1$
  and $\rhoref{after-kill}=\denotc{\local\vv\bc}(\rho_1)$. 
  Let $\PC^{(i)}$ denote the predicate given in the box under the wavy line
  for $\rho^{(i)}$.

  Since  $\rho_1,\rho_2$ satisfy~$\PA$, we have that
  $\rhoref{at-start},\rho_2$ satisfy $\Cref{at-start}$.

  Then $\rhoref{after-init},\rho_2$ satisfy $\Cref{after-init}$. (The
  coupling is $\rho\otimes\proj{\psiinit{\vv'_1}}$
  if the previous coupling was $\rho$.)

  Then $\rhoref{after-swap},\rho_2$ satisfy $\Cref{after-swap}$. (The
  coupling is  $(\toE{\swapcop\vv}\otimes\id)\rho$
  if the previous coupling was $\rho$.)

  Then $\rhoref{after-U*},\rho_2$ satisfy $\Cref{after-U*}$
  since $\adj U$ maps $\elltwov{\vv_1'\ww_1}$ to $\elltwov{\ww_1}$. (The
  coupling is $(\toE{\adj U}\otimes\id)\rho$ if the previous coupling
  was $\rho$.)

  By assumption,
  $ \pb\rhl{\PA\cap \paren{\vv_1\eqstate\psiinit{\vv_1}} }\bc\bd\PB $.
  And since $\rhoref{after-U*},\rho_2$ satisfy
  $\Cref{after-U*}=A\cap(\vv_1\eqstate\psiinit{\vv_1})$,
  we have that $\denotc\bc(\rhoref{after-U*}),\denotc\bd(\rho_2)$
  satisfy $\PB=\Cref{after-c}$.
  Since $\rhoref{after-c}=\denotc\bc(\rhoref{after-U*})$,
  we have that
  $\rhoref{after-c},\denotc\bd(\rho_2)$ satisfy $\Cref{after-c}$.

  Then $\rhoref{after-U},\denotc\bd(\rho_2)$ satisfy $\Cref{after-U}$
  since $U$ maps $\elltwov{\ww_1}$ to $\elltwov{\vv_1'\ww_1}$. (The
  coupling is $(\toE{U}\otimes\id)\rho$ if the previous coupling
  was $\rho$.)

  Then $\rhoref{after-swap2},\denotc\bd(\rho_2)$ satisfy $\Cref{after-swap2}$. (The
  coupling is  $(\toE{\swapcop\vv}\otimes\id)\rho$
  if the previous coupling was $\rho$.)

    Then $\rhoref{after-kill},\denotc\bd(\rho_2)$ satisfy $\Cref{after-kill}$. (The
  coupling is  $\partr{}{\vv'}\rho$
  if the previous coupling was $\rho$.)

  As mentioned above, $\rhoref{after-kill}=\denotc{\local\vv\bc}(\rho_1)$.
  And $\Cref{after-kill} = \PB$. Thus 
  $\denotc{\local\vv\bc}(\rho_1),\denotc\bd(\rho_2)$ satisfy $\PB$.

  This shows  $\rhl\PA{\local\vv\bc}\bd\PB$.
\end{proof}

\section{Two-sided initialization}
\label{sec:two-sided.init}

\ERULE{JointQInitEq}{
  \fv\PB\cap \QQ_1\QQ'_2=\varnothing
  \\\\
  \PA' := \PB \cap \pb\paren{(V\otimes U)\RR_1\QQ_1
    \quanteq (V'\otimes U')\RR'_2\QQ'_2}
  \cap \CL{\text{$U,U',V,V'$ are isometries}}
  \\\\
  \PB':={\PB \cap \paren{V\RR_1 \quanteq V'\RR'_2}}\cap
  \SPAN\basis{\QQ_1}{e_1} \cap \SPAN\basis{\QQ'_2}{e'_2}
}{
  \pb \rhl
  {\PA'}
  {\Qinit \QQ e}  {\Qinit{\QQ'}{e'}}
  {\PB'}
}

The following simple case is probably easier to understand at first reading.
We obtain it 
by setting $U,U',V,V' := \id$ and weakening the postcondition.
\ERULE{JointQInitEq0}{
  \fv\PB\cap \QQ_1\QQ'_2=\varnothing
}{
  \pB \rhl
  {\PB \cap \pb\paren{\RR_1\QQ_1 \quanteq\RR'_2\QQ'_2}}
  {\Qinit \QQ e}  {\Qinit{\QQ'}{e'}}
  {\PB \cap \paren{\RR_1 \quanteq \RR'_2}}
}

\begin{proof}[of \rulerefx{JointQInitEq}]
  By \qrhlautoref{lemma:pure}, it is sufficient to show that for all
  $m_1,m_2,\psi_1,\psi_2$ with normalized
  $\psi_1\otimes\psi_2\in\denotee{\PA'}{m_1m_2}$, there is a separable
  state $\rho'$ with $\partr1{}\rho'=\denotc{\Qinit\QQ e}\pb\paren{\proj{\basis{}{m_1}\otimes\psi_1}}$ and
  $\partr2{}\rho'=\denotc{\Qinit{\QQ'}{e'}}\pb\paren{\proj{\basis{}{m_2}\otimes\psi_2}}$ and $\rho'$ satisfies~$\PB'$.

  Since in the following proof, we will use the same $m_1,m_2$
  throughout, for ease of notation, we will simply write $\PA'$
  instead of $\denotee{\PA'}{m_1m_2}$, and analogously for all other
  expressions (e.g., $\PB,U_Q,e,e'$, etc.)
  
  Since
  $(V\otimes U)\RR_1\QQ_1 \quanteq (V'\otimes U')\RR'_2\QQ'_2
  \supseteq \PA'$, we have
  $\psi_1\otimes\psi_2\in (V\otimes U)\RR_1\QQ_1 \quanteq (V'\otimes
  U')\RR'_2\QQ'_2$.  Since $\CL{\text{$U,U',V,V'$ are isometries}}
  \supseteq \PA'$, we have that that $U,U',V,V'$ are
  isometries.

  By \qrhlautoref{lemma:quanteq}, this implies that there are normalized
  $\psi_1^{QR},\psi_1^E,\psi_2^{QR},\psi_2^E$ on $\QQ_1\RR_1$,
  $\VVall_1\setminus\QQ_1\RR_1$, $\QQ_2'\RR_2'$,
  $\VVall_2\setminus \QQ_2'\RR_2'$ such that:
  $\psi_1=\psi_1^{QR}\otimes\psi_1^E$
  and
  $\psi_2=\psi_2^{QR}\otimes\psi_2^E$
  and
  $(U\otimes V) \psi_1^{QR} = (U'\otimes V')\psi_2^{QR}$.

  Let $\{\phi_z\}_{z\in Z}$ be an orthonormal basis of $\im U\cap\im U'$.
  Then $\adj U\phi_z$ are orthonormal, and $\adj{U'}\phi_z$ are orthonormal.

  Let $\psi_{1z} := \paren{\proj{\adj U\phi_z}\otimes\id}\psi_1^{QR}\otimes \psi_1^E$
  and $\psi_{2z} := \paren{\proj{\adj{U'}\phi_z}\otimes\id}\psi_2^{QR}\otimes \psi_2^E$.

  Note that $\paren{\proj{\adj U\phi_z}\otimes\id}\psi_1^{QR}$
  is of the form $\adj U\phi_z\otimes \psi_{1z}^R$ for some (not necessarily normalized)
  $\psi_{1z}^R$. And similarly $\paren{\proj{\adj{U'}\phi_z}\otimes\id}\psi_2^{QR}=
  \adj{U'}\phi_z\otimes\psi_{2z}^R$ for some $\psi_{2z}^R$. We fix those 
  $\psi_{1z}^R$ and $\psi_{2z}^R$.

  Let
  \begin{equation}
    \label{eq:psi'z}
    \psi'_z := e_1\otimes  \psi_{1z}^R \otimes \psi_1^E
    \otimes e'_2 \otimes \psi_{2z}^R \otimes \psi_2^E
    \cdot \frac{1}{ \norm{\psi_{1z}^R}^2}
    \qquad\text{and}\qquad
    \rho' :=
    \sum_z \proj{\basis{}{m_1}\otimes\basis{}{m_2}\otimes\psi'_z}.
  \end{equation}
  (Note that $e_1,e'_2$ are normalized vectors on $\QQ_1,\QQ_2'$, respectively
  because we assume that the programs in the rule are well-typed.)

  \begin{claim}\label{claim:psi12.same}
    For all $z$, $\norm{\psi_{1z}^R}=\norm{\psi_{2z}^R}$ and $V\psi_{1z}^R=V'\psi_{2z}^R$.
  \end{claim}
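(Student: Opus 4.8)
The plan is to reduce everything to the single identity $(U\otimes V)\psi_1^{QR}=(U'\otimes V')\psi_2^{QR}$, which I will call $\Phi$, and to compute the ``slice'' $(\proj{\phi_z}\otimes\id)\Phi$ in two ways. First I would apply the operator $U\otimes V$ to the defining equation $(\proj{\adj U\phi_z}\otimes\id)\psi_1^{QR}=\adj U\phi_z\otimes\psi_{1z}^R$. On the right-hand side this gives $U\adj U\phi_z\otimes V\psi_{1z}^R=\phi_z\otimes V\psi_{1z}^R$, using that $U\adj U$ is the orthogonal projection onto $\im U$ and $\phi_z\in\im U$, so $U\adj U\phi_z=\phi_z$. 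On the left-hand side I would use the operator identity $U\,\proj{\adj U\phi_z}=\proj{\phi_z}\,U$ (which follows from $\proj{\adj U\phi_z}=\adj U\,\proj{\phi_z}\,U$ and $U\adj U\phi_z=\phi_z$) to pull the projector through, obtaining $(U\otimes V)(\proj{\adj U\phi_z}\otimes\id)\psi_1^{QR}=(\proj{\phi_z}\otimes\id)(U\otimes V)\psi_1^{QR}=(\proj{\phi_z}\otimes\id)\Phi$. Thus $(\proj{\phi_z}\otimes\id)\Phi=\phi_z\otimes V\psi_{1z}^R$.

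Running the same computation with the primed isometries, and using $\phi_z\in\im U'$ so that $U'\adj{U'}\phi_z=\phi_z$, yields symmetrically $(\proj{\phi_z}\otimes\id)\Phi=\phi_z\otimes V'\psi_{2z}^R$. Equating the two expressions gives $\phi_z\otimes V\psi_{1z}^R=\phi_z\otimes V'\psi_{2z}^R$; since $\phi_z$ is a unit vector, cancelling it on the $\QQ$-factor produces $V\psi_{1z}^R=V'\psi_{2z}^R$, which is the second assertion of the claim. For the norm equality I would then invoke that $V$ and $V'$ are isometries (guaranteed on the support of $\PA'$, as recorded just before the claim), giving $\norm{\psi_{1z}^R}=\norm{V\psi_{1z}^R}=\norm{V'\psi_{2z}^R}=\norm{\psi_{2z}^R}$, where the middle step is exactly the identity just established.

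The only genuinely delicate point is the operator identity $U\,\proj{\adj U\phi_z}=\proj{\phi_z}\,U$ and its primed counterpart: both depend on $U\adj U$ (resp.\ $U'\adj{U'}$) acting as the identity on $\phi_z$, which is precisely where the hypothesis $\phi_z\in\im U\cap\im U'$ is consumed. This is the step I would state most carefully, since it is the place where the isometry assumptions and the choice of basis of $\im U\cap\im U'$ interact; everything else is a routine manipulation of the tensor factors.
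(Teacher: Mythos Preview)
Your proof is correct and follows essentially the same route as the paper: both hinge on the operator identity $U\,\proj{\adj U\phi_z}=\proj{\phi_z}\,U$ (equivalently $U\adj U\phi_z=\phi_z$ since $\phi_z\in\im U$) to show that $(U\otimes V)(\proj{\adj U\phi_z}\otimes\id)\psi_1^{QR}=(\proj{\phi_z}\otimes\id)\Phi=\phi_z\otimes V\psi_{1z}^R$, and symmetrically for the primed side. The only cosmetic difference is ordering: you derive $V\psi_{1z}^R=V'\psi_{2z}^R$ first and then read off the norm equality from the isometry of $V,V'$, whereas the paper establishes a single ``master equation'' and extracts both conclusions from it separately---your route is arguably the more economical of the two.
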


  \begin{claimproof}
    Then
    \begin{align}
      &{(U\otimes V)\paren{\proj{\adj U\phi_z}\otimes\id}\psi_1^{QR}}
      \starrel=
      {(U\otimes V)\paren{\proj{\adj U\phi_z}\otimes\id}(\adj UU\otimes\id)\psi_1^{QR}}
      \notag\\&
      =(\proj{U\adj U\phi_z}\otimes\id)(U \otimes V)\psi_1^{QR}
      \starstarrel= 
      (\proj{\phi_z}\otimes\id)(U \otimes V)\psi_1^{QR}
      \notag\\&
      \tristarrel=
      (\proj{\phi_z}\otimes\id)(U' \otimes V')\psi_2^{QR}
      \starstarrel= 
      (\proj{U'\adj{U'}\phi_z}\otimes\id)(U' \otimes V')\psi_2^{QR}
      \notag\\&
      =
      {(U'\otimes V')\paren{\proj{\adj{U'}\phi_z}\otimes\id}(\adj{U'}{U'}\otimes\id)\psi_1^{QR}}
      \starrel=
      {(U'\otimes V')\paren{\proj{\adj{U'}\phi_z}\otimes\id}\psi_2^{QR}}.
      \label{eq:UVphiQR}
    \end{align}
    Here $(*)$ uses that $U,U'$ are isometries.
    And $(**)$ uses that $\psi_z\in\im U\cap\im U'$.
    And $(*\mathord**)$ uses that
    $(U\otimes V)\psi_1^{QR}=(U'\otimes V')\psi_2^{QR}$ (this was
    shown above).
    Then
    \begin{multline*}
      \norm{\psi_{1z}^R}
      \starrel=
      \norm{\adj U\phi_z \otimes \psi_{1z}^R}
      \starstarrel=
      \pb\norm{\paren{\proj{\adj U\phi_z}\otimes\id}\psi_1^{QR}}
      \tristarrel=
      \pb\norm{(U\otimes V)\paren{\proj{\adj U\phi_z}\otimes\id}\psi_1^{QR}}
    \end{multline*}
    Here $(*)$ uses that $\adj U\phi_z$ is normalized. And $(**)$ is by definition of $\psi_{1z}^R$.
    And $(*{*}*)$ uses that $U,V$ are isometries. Analogously, 
    $\norm{\psi_{2z}^R} = \pb\norm{(U'\otimes V')\paren{\proj{\adj{U'}\phi_z}\otimes\id}\psi_2^{QR}}$. By \eqref{eq:UVphiQR},
    this implies $\norm{\psi_{1z}^R} = \norm{\psi_{2z}^R}$.

    Furthermore,
    \begin{multline*}
      \phi_z \otimes V\psi_{1z}^R
      \starrel=
      U\adj U\phi_z \otimes V\psi_{1z}^R
      \starstarrel= (U\otimes V)\paren{\proj{\adj U\phi_z}\otimes\id}\psi_{1z}^{QR}
        \\
        \eqrefrel{eq:UVphiQR}=
        (U'\otimes V')\paren{\proj{\adj{U'}\phi_z}\otimes\id}\psi_{2z}^{QR}
        \starstarrel=
        U'\adj{U'}\phi'_z \otimes V'\psi_{2z}^R
        \starrel=
        \phi_z \otimes V'\psi_{2z}^R.
    \end{multline*}
    Here $(*)$ holds since $\phi_z\in\im U\cap\im U'$.
    Here $(**)$ holds by definition of $\psi_{1z}^R, \psi_{2z}^R$.
    Thus $\phi_z \otimes V\psi_{1z}^R
    = \phi_z \otimes V'\psi_{2z}^R$. Since $\phi_z\neq0$, this implies $V\psi_{1z}^R=V'\psi_{2z}^R$.
  \end{claimproof}

  \begin{claim}\label{claim:My}
    Let $\gamma\in\elltwov\QQ$ be orthogonal to all $\adj U\phi_z$. Then $\paren{\proj{\gamma} \otimes \id}\psi_1=0$.

    Let $\gamma'\in\elltwov{\QQ'}$ be orthogonal to all $\adj{U'}\phi_z$. Then $\paren{\proj{\gamma'} \otimes \id}\psi_2=0$.    
  \end{claim}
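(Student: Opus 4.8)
The plan is to reduce both statements to a single bra-contraction identity and then transport one orthogonality relation through the isometries. I will prove the first statement; the second is identical with $U',V',\psi_2,\psi_2^{QR}$ replacing $U,V,\psi_1,\psi_1^{QR}$. Since $\psi_1=\psi_1^{QR}\otimes\psi_1^E$ with $\psi_1^E$ normalized and $\gamma$ acting only inside the $\QQ_1$-factor of $\psi_1^{QR}$, we have $\paren{\proj\gamma\otimes\id}\psi_1=\pb\paren{\paren{\proj\gamma\otimes\id}\psi_1^{QR}}\otimes\psi_1^E$, so it suffices to show $\paren{\proj\gamma\otimes\id}\psi_1^{QR}=0$. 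Writing $\proj\gamma=\gamma\adj\gamma$, this in turn follows once we establish the bra-contraction $\paren{\adj\gamma\otimes\id}\psi_1^{QR}=0$, because then $\paren{\proj\gamma\otimes\id}\psi_1^{QR}=\paren{\gamma\otimes\id}\paren{\adj\gamma\otimes\id}\psi_1^{QR}=0$. Throughout I abbreviate $\chi:=(U\otimes V)\psi_1^{QR}=(U'\otimes V')\psi_2^{QR}$, the identity already established above from \qrhlautoref{lemma:quanteq}.

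The crux is to locate the $\QQ$-support of $\chi$ (the smallest closed subspace $S\subseteq\elltwov\QQ$ with $\chi\in S\otimes\elltwov\RR$) inside $\im U\cap\im U'$. For $\eta\perp\im U$ one has $\adj\eta\,U=0$, hence $\paren{\adj\eta\otimes\id}\chi=\pb\paren{\adj\eta\,U\otimes V}\psi_1^{QR}=0$; running the basis-of-$\orth{(\im U)}$ expansion $\pb\paren{(\id-P)\otimes\id}\chi=\sum_j\pb\paren{\proj{\eta_j}\otimes\id}\chi=0$ (with $P$ the projector onto the closed subspace $\im U$) shows $S\subseteq\im U$, and the representation $\chi=(U'\otimes V')\psi_2^{QR}$ gives $S\subseteq\im U'$ symmetrically. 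Thus $S\subseteq\im U\cap\im U'$. On the other hand, from $\gamma\perp\adj U\phi_z$ for all $z$ we get $\adj{(U\gamma)}\phi_z=\adj\gamma\adj U\phi_z=0$, and since $\{\phi_z\}_z$ is an orthonormal basis of $\im U\cap\im U'$, this means $U\gamma\perp\im U\cap\im U'\supseteq S$. Consequently $\paren{\adj{(U\gamma)}\otimes\id}\chi=0$.

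Finally I transport this back to $\psi_1^{QR}$ using that $U,V$ are isometries. Since $\adj U\,U=\id$, we have $\adj{(U\gamma)}\,U=\adj\gamma\,\adj U\,U=\adj\gamma$, so
\[
  0=\paren{\adj{(U\gamma)}\otimes\id}\chi
   =\pb\paren{\adj{(U\gamma)}\,U\otimes V}\psi_1^{QR}
   =\paren{\adj\gamma\otimes V}\psi_1^{QR}
   =(\id\otimes V)\pb\paren{\paren{\adj\gamma\otimes\id}\psi_1^{QR}}.
\]
As $V$ is an isometry it is injective, whence $\paren{\adj\gamma\otimes\id}\psi_1^{QR}=0$, completing the first statement; the second is verbatim with primes. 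The step I expect to be delicate is the support localization: the orthogonality we extract is only to $\im U\cap\im U'$, not to $\im U'$ alone (which would be false), and the argument is valid precisely because $\chi$'s $\QQ$-support already lies in the intersection. The basis expansion above is what makes this support claim rigorous in possibly infinite dimension, where one cannot simply pick out ``the first tensor factor'' of an entangled vector.
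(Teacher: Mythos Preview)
Your proof is correct and follows essentially the same strategy as the paper: both arguments establish that the $\QQ$-support of $\chi=(U\otimes V)\psi_1^{QR}=(U'\otimes V')\psi_2^{QR}$ lies in $\im U\cap\im U'$, then exploit that $\gamma$ is orthogonal to the $\adj U$-pullback of a basis of that intersection. The only cosmetic difference is in the final step: the paper applies $\adj U\otimes\adj V$ to $\chi$ to land $\psi_1^{QR}$ directly in $\SPAN\{\adj U\phi_z\}\otimes\elltwov\RR$ (whence $(\proj\gamma\otimes\id)\psi_1^{QR}=0$ immediately), whereas you push $\gamma$ forward to $U\gamma$, kill $\chi$ there, and then transport back via injectivity of $\id\otimes V$; these are equivalent uses of the isometry identities $\adj UU=\id$ and $\adj VV=\id$.
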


  \begin{claimproof}
    \newcommand\calH{\mathcal H}
    We have $(U\otimes V)\psi_{1}^{QR}\in \im U\otimes \calH$ where $\calH$ is the range of $V,V'$.
    We also have $(U'\otimes V')\psi_{2}^{QR}\in \im U'\otimes \calH$.
    Since $(U\otimes V)\psi_1^{QR}=(U'\otimes V')\psi_2^{QR}$,
    we have $(U\otimes V)\psi_{1}^{QR}\in (\im U\otimes \calH) \cap (\im U'\otimes \calH)
    = (\im U\cap\im U')\otimes \calH$.

    Since $\phi_z$ are a basis of $\im U\cap \im U'$, this implies that 
    $(U\otimes V)\psi_{1}^{QR} \in \SPAN\{\phi_z\}_z \otimes \calH$,
    and thus $\psi_1^{QR}
    = (\adj U\otimes\adj V)(U\otimes V)\psi_{1}^{QR}
    \in \SPAN\{\adj U \phi_z\} \otimes \elltwov\RR $.
    Since $\gamma$ is orthogonal to $\adj U\phi_z$,
    we then have $(\proj\gamma\otimes\id)\psi_1^{QR} = 0$.
    And since $\psi_1=\psi_1^{QR}\otimes\psi_1^E$,
    $(\proj\gamma\otimes\id)\psi_1 = 0$.

    This shows the first half of the claim. The second half is shown analogously.
  \end{claimproof}

  \begin{claim}\label{claim:rho'.B'}
    $\rho'$ satisfies $\PB'$.
  \end{claim}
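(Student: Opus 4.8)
The plan is to reduce the claim to a statement about individual vectors. Since $\rho' = \sum_z \proj{\basis{}{m_1}\otimes\basis{}{m_2}\otimes\psi'_z}$ has a definite classical part $m_1,m_2$ and quantum part $\sum_z\proj{\psi'_z}$, and since $\PB'$ (short for $\denotee{\PB'}{m_1m_2}$) is a closed subspace, it suffices to show $\psi'_z\in\PB'$ for every $z$. The scalar $\tfrac{1}{\norm{\psi_{1z}^R}^2}$ in the definition of $\psi'_z$ is irrelevant for subspace membership, so I would discard it and work with $e_1\otimes\psi_{1z}^R\otimes\psi_1^E\otimes e'_2\otimes\psi_{2z}^R\otimes\psi_2^E$. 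As $\PB' = \PB\cap(V\RR_1\quanteq V'\RR'_2)\cap\SPAN\basis{\QQ_1}{e_1}\cap\SPAN\basis{\QQ'_2}{e'_2}$, I would check membership in each of the four intersectands in turn.

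The two span conditions are immediate: $\psi'_z$ carries the factor $e_1$ on $\QQ_1$ and $e'_2$ on $\QQ'_2$, placing it in $\SPAN\basis{\QQ_1}{e_1}$ and in $\SPAN\basis{\QQ'_2}{e'_2}$. For the quantum equality, I would invoke Claim~\ref{claim:psi12.same}, which gives $V\psi_{1z}^R=V'\psi_{2z}^R$. Since $\psi'_z$ is a product state with $\psi_{1z}^R$ on $\RR_1$, $\psi_{2z}^R$ on $\RR'_2$, and a spectator on the complementary registers, the characterization of the extended quantum equality from \cite{qrhl} (cf.\ \qrhlautoref{def:quanteq}) reduces membership in $V\RR_1\quanteq V'\RR'_2$ to exactly the equation $V\psi_{1z}^R=V'\psi_{2z}^R$, which Claim~\ref{claim:psi12.same} supplies.

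For membership in $\PB$, I would use $\fv\PB\cap\QQ_1\QQ'_2=\varnothing$ to write $\PB = \PB^\circ\otimes\elltwov{\QQ_1\QQ'_2}$, i.e.\ the projector onto $\PB$ acts as the identity on $\QQ_1\QQ'_2$. From $\PA'\subseteq\PB$ and $\psi_1\otimes\psi_2\in\PA'$ we have $\psi_1\otimes\psi_2\in\PB$. Applying $\proj{\adj U\phi_z}$ on $\QQ_1$ and $\proj{\adj{U'}\phi_z}$ on $\QQ'_2$ --- both of which commute with the $\PB$-projector --- and using $(\proj{\adj U\phi_z}\otimes\id)\psi_1^{QR}=\adj U\phi_z\otimes\psi_{1z}^R$ together with the analogous identity on side $2$, I obtain $\adj U\phi_z\otimes\psi_{1z}^R\otimes\psi_1^E\otimes\adj{U'}\phi_z\otimes\psi_{2z}^R\otimes\psi_2^E\in\PB$. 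Since the $\QQ_1\QQ'_2$-factor $\adj U\phi_z\otimes\adj{U'}\phi_z$ is nonzero and $\PB=\PB^\circ\otimes\elltwov{\QQ_1\QQ'_2}$, the complementary factor $\psi_{1z}^R\otimes\psi_1^E\otimes\psi_{2z}^R\otimes\psi_2^E$ lies in $\PB^\circ$; re-tensoring $e_1,e'_2$ on $\QQ_1\QQ'_2$ then returns $\psi'_z\in\PB^\circ\otimes\elltwov{\QQ_1\QQ'_2}=\PB$.

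Collecting the four memberships gives $\psi'_z\in\PB'$ for all $z$, and hence $\rho'$ satisfies $\PB'$. The main obstacle I anticipate is the quantum-equality step: making precise which lemma of \cite{qrhl} reduces membership of a product state in the operator-decorated equality $V\RR_1\quanteq V'\RR'_2$ to the single equation $V\psi_{1z}^R=V'\psi_{2z}^R$, and ensuring the spectator registers are handled correctly. The $\PB$-step, while elementary, also requires care to justify that the rank-one projections on $\QQ_1,\QQ'_2$ genuinely commute with the $\PB$-projector and that extracting the complementary factor from a product vector is legitimate.
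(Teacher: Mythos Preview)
Your proposal is correct and follows essentially the same route as the paper: reduce to $\psi'_z\in\PB'$, then verify the four intersectands by reading off the $\QQ_1,\QQ'_2$ factors for the span conditions, invoking \autoref{claim:psi12.same} together with \qrhlautoref{lemma:quanteq} for $V\RR_1\quanteq V'\RR'_2$, and using $\fv\PB\cap\QQ_1\QQ'_2=\varnothing$ to swap the $\QQ_1\QQ'_2$ tensor factors for membership in~$\PB$. If anything, your write-up is slightly more explicit than the paper's about the tensor decomposition $\PB=\PB^\circ\otimes\elltwov{\QQ_1\QQ'_2}$, which is exactly the right way to justify the factor-replacement step.
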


  \begin{claimproof}
    We have that $\psi_1\otimes\psi_2\in\PA'\subseteq\PB$.  Since
    $\fv\PB\cap \QQ_1\QQ_2'=\varnothing$, and
    $\proj{\adj U\phi_z},\proj{\adj{U'}\phi_z}$ operate on
    $\QQ_1,\QQ_2'$ respectively, we have that
    $\psi_{1z}\otimes\psi_{2z} = 
    (\proj{\adj U\phi_z}\otimes\id\otimes\proj{\adj{U'}\phi_z}\otimes\id)
    (\psi_1\otimes\psi_2)\in\PB$.

    We thus have
    $\psi_{1z} \otimes \psi_{2z} = \adj U\phi_z \otimes \psi_{1z}^R \otimes
    \psi_1^E \otimes \adj {U'}\phi_z \otimes \psi_{2z}^R \otimes
    \psi_2^E\in\PB$. Since $\fv\PB\cap \QQ_1\QQ_2'=\varnothing$ and
    $\adj U\phi_z, \adj{U'}\phi_z$ are on $\QQ_1,\QQ_2'$ and nonzero, it follows that also
    $ \psi'_z \eqrefrel{eq:psi'z}= e_1\otimes \psi_{1z}^R \otimes \psi_1^E \otimes e'_2
    \otimes \psi_{2z}^R \otimes \psi_2^E \cdot \frac{1}{
      \norm{\psi_{1z}}^2} \in \PB $.  And obviously
    $\psi'_z\in\SPAN\basis{\QQ_1}{e_1} \cap
    \SPAN\basis{\QQ'_2}{e'_2}$.  (Because the only difference is in
    the tensor factors in $\QQ_1\QQ_2'$.)  Furthermore, by
    \autoref{claim:psi12.same}, $V\psi_{1z}^R=V'\psi_{2z}^R$.  Thus by
    \qrhlautoref{lemma:quanteq},
    $\psi_{1z}\otimes\psi_{2z}\in \SPAN\basis{\QQ_1}{e_1} \cap
    \SPAN\basis{\QQ'_2}{e'_2}$.
    
    Hence altogether $\psi'_z\in\PB'$.
    Thus $\rho'=\sum_z\proj{\psi_z'}$ satisfies $\PB'$.
  \end{claimproof}

  \begin{claim}\label{claim:rho'12}
    $\partr1{}\rho'=\denotc{\Qinit\QQ e}\pb\paren{\proj{\basis{}{m_1}\otimes\psi_1}}$ and
    $\partr2{}\rho'=\denotc{\Qinit{\QQ'}{e'}}\pb\paren{\proj{\basis{}{m_2}\otimes\psi_2}}$.
  \end{claim}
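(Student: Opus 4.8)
The plan is to compute the two reduced states of $\rho'$ directly from \eqref{eq:psi'z} and match each against the semantics of quantum initialization given in \autoref{sec:sem.progs}. The key structural observation is that every vector $\basis{}{m_1}\otimes\basis{}{m_2}\otimes\psi'_z$ appearing in $\rho'$ is, up to the scalar prefactor, a tensor product of a $1$-indexed factor $\basis{}{m_1}\otimes e_1\otimes\psi_{1z}^R\otimes\psi_1^E$ and a $2$-indexed factor $\basis{}{m_2}\otimes e'_2\otimes\psi_{2z}^R\otimes\psi_2^E$. Since the partial trace is linear (and continuous in trace norm), I would compute $\partr1{}\rho'$ termwise: tracing out the $2$-indexed factor of $\proj{\basis{}{m_1}\otimes\basis{}{m_2}\otimes\psi'_z}$ yields the projector onto its $1$-indexed factor, scaled by the squared norm of the $2$-indexed factor. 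As $e'_2$, $\psi_2^E$ and the basis vector are normalized, that squared norm is $\norm{\psi_{2z}^R}^2$, which by \autoref{claim:psi12.same} equals $\norm{\psi_{1z}^R}^2$; combining this with the scalar prefactor in \eqref{eq:psi'z} makes the $z$-th term collapse to $\proj{\basis{}{m_1}}\otimes\proj{e_1}\otimes\proj{\psi_{1z}^R}\otimes\proj{\psi_1^E}$, giving $\partr1{}\rho' = \proj{\basis{}{m_1}}\otimes\proj{e_1}\otimes\pb\paren{\sum_z\proj{\psi_{1z}^R}}\otimes\proj{\psi_1^E}$.

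For the right-hand side I would unfold the semantics of $\Qinit\QQ e$, which discards $\QQ_1$ from the quantum part and reinserts $\proj{e_1}$: $\denotc{\Qinit\QQ e}\pb\paren{\proj{\basis{}{m_1}\otimes\psi_1}} = \proj{\basis{}{m_1}}\otimes\proj{e_1}\otimes\partr{}{\QQ_1}\proj{\psi_1}$. Writing $\psi_1=\psi_1^{QR}\otimes\psi_1^E$ and invoking \autoref{claim:My} — which shows $\psi_1^{QR}$ has no component in $\gamma\otimes\elltwov{\RR_1}$ for any $\gamma\perp\adj U\phi_z$ — I obtain the decomposition $\psi_1^{QR}=\sum_z \adj U\phi_z\otimes\psi_{1z}^R$. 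Because the $\adj U\phi_z$ are orthonormal, all cross terms vanish under $\partr{}{\QQ_1}$, so $\partr{}{\QQ_1}\proj{\psi_1}=\pb\paren{\sum_z\proj{\psi_{1z}^R}}\otimes\proj{\psi_1^E}$, which is exactly the expression found for $\partr1{}\rho'$. This establishes the first equality.

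The second equality is entirely symmetric: repeating the argument with the two indices swapped — tracing out the $1$-indexed factor, again using $\norm{\psi_{1z}^R}=\norm{\psi_{2z}^R}$ from \autoref{claim:psi12.same}, and the analogous decomposition $\psi_2^{QR}=\sum_z\adj{U'}\phi_z\otimes\psi_{2z}^R$ with the orthonormal family $\adj{U'}\phi_z$ — yields $\partr2{}\rho'=\denotc{\Qinit{\QQ'}{e'}}\pb\paren{\proj{\basis{}{m_2}\otimes\psi_2}}$. I expect the main obstacle to be the scalar bookkeeping, namely verifying that the squared-norm weight produced by the partial trace precisely compensates the normalization prefactor in \eqref{eq:psi'z}; this cancellation is exactly where the norm identity of \autoref{claim:psi12.same} is indispensable. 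A secondary point is the convergence of the (possibly infinite) sum over $z$, which is inherited from the fact that all the $\psi_{iz}^R$ arise from decomposing the fixed normalized states $\psi_1,\psi_2$.
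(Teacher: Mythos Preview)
Your argument is correct and uses the same two ingredients as the paper (\autoref{claim:psi12.same} for the norm cancellation, \autoref{claim:My} to eliminate components outside $\SPAN\{\adj U\phi_z\}$), but the organization differs. The paper packages the basis decomposition into a complete measurement superoperator $\calM=\sum_z\calM_z+\sum_y\calM_y$ on $\QQ_1$ and uses the structural fact that the initialization map $\calE$ (which traces out $\QQ_1$ and reinserts $\proj{e_1}$) satisfies $\calE\circ(\calM\otimes\id)=\calE$; it then shows $\partr1{}\proj{\psi'_z}=\calE\circ(\calM_z\otimes\id)\paren{\proj{\psi_1}}$ termwise, sums over $z$, and invokes \autoref{claim:My} to see that the added $\calM_y$ terms vanish. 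Your route is the ``unpacked'' version of the same computation: you write out $\psi_1^{QR}=\sum_z\adj U\phi_z\otimes\psi_{1z}^R$ explicitly and compute $\partr{}{\QQ_1}$ by hand using orthonormality of the $\adj U\phi_z$. Both arrive at the same expression; your version is a bit more elementary, the paper's a bit more modular. One minor point: for the second marginal the paper observes that \autoref{claim:psi12.same} is not actually needed there, since tracing out the $1$-indexed factor produces the weight $\norm{\psi_{1z}^R}^2$, which cancels the prefactor directly without appealing to $\norm{\psi_{1z}^R}=\norm{\psi_{2z}^R}$.
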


  \begin{claimproof}
    Let $\calE(\rho):=\partr{}{\QQ}\rho \otimes \proj{e_1}$, the
    operation of initializing $\QQ$ with the state $e$. With that
    definition,
    $\denotc{\Qinit\QQ e}\pb\paren{\proj{\basis{}{m_1}} \otimes \rho}
    = \proj{\basis{}{m_1}} \otimes \calE(\rho)$.

    Let
    $\calM_z(\rho):=\proj{\adj U\phi_z}\rho\adj{\proj{\adj U\phi_z}}$.
    Since $\{\phi_z\}_{z\in Z}$ are an orthonormal basis of
    $\im U\cap\im U'$, and $U$ is an isometry,
    $\{\adj{U}\phi_z\}_{z\in Z}$ are orthonormal. We can thus extend
    that set to an orthonormal basis
    $\{\adj{U}\phi_z\}_{z\in Z} \cup \{\gamma_y\}_{y\in Y}$.  Let
    $\calM_y(\rho):=\proj{\gamma_y}\rho\adj{\proj{\gamma_y}}$.  Then
    $\calM(\rho) := \sum_z \calM_z(\rho) + \sum_y \calM_y(\rho)$ is a
    CPTPM on $\QQ$.
  
    We have
    \begin{align}
      \partr1{} \proj{\psi_z'}
      &\eqrefrel{eq:psi'z}=
      \partr1{} \proj{
        e_1\otimes\psi_{1z}^R\otimes\psi_1^E \otimes
        e_2'\otimes\psi_{2z}^R\otimes\psi_2^E}
        \cdot \tfrac1{\norm{\psi_{1z}^R}^2}
      =
        \proj{e_1\otimes\psi_{1z}^R\otimes\psi_1^E}
        \cdot \tfrac{\norm{\psi_{2z}^R}^2}{\norm{\psi_{1z}^R}^2}
      \notag\\
      &\starrel=
        \proj{e_1\otimes\psi_{1z}^R\otimes\psi_1^E}
        \starstarrel=
        \calE \pb\paren{
        \proj{\adj U\phi_z\otimes\psi_{1z}^R\otimes\psi_1^E}}
        \tristarrel=
        \calE \pb\paren{
        \proj{(\proj{\adj U\phi_z}\otimes\id)\psi_1^{QR}\otimes\psi_1^E}}
      \notag\\
      &
        =
        \calE \circ (\calM_z\otimes \id) \pb\paren{
        \proj{\psi_1^{QR}\otimes\psi_1^E}}
      \fourstarrel=
        \calE \circ (\calM_z \otimes \id) \pb\paren{
        \proj{\psi_1}}.
        \label{eq:tr1.psi'z}
    \end{align}
    Here $(*)$ follows because the fraction is $1$ by
    \autoref{claim:psi12.same}.  And $(**)$ uses that
    $\norm{\adj U\phi_z}=1$ since $\phi_z$ is normalized and $U$ is an
    isometry and $\phi_z\in\im U\cap \im U'$.
    And $(*{*}*)$ 
    by definition of $\psi_{1z}^R$, and
    $(*{*}{*}*)$ by definition of 
    $\psi_1^{QR},\psi_1^E$.

    Since $\calM$ is a CPTPM on $\QQ$, and by definition of $\calE$, we have
    $\calE \circ (\calM\otimes\id) = \calE$.
    
    Then
    \begin{align*}
      \partr{}1\rho'
      &\eqrefrel{eq:psi'z}=
      \partr{}1\sum_z\proj{\basis{}{m_1}\otimes\basis{}{m_2}\otimes \psi_z'}
      \eqrefrel{eq:tr1.psi'z}=
      \sum_z \proj{\basis{}{m_1}}\otimes \calE\pb\paren{
        (\calM_z\otimes\id) \paren{\proj{\psi_1}}}
      \\
      &=
      \proj{\basis{}{m_1}}\otimes \calE\pB\paren{\sum_z 
        (\calM_z\otimes\id) \paren{\proj{\psi_1}}}
      \\
      &\starrel=
      \proj{\basis{}{m_1}}\otimes \calE\pB\paren{\sum_z 
        (\calM_z\otimes\id) \paren{\proj{\psi_1}}
        + \sum_y
        (\calM_y\otimes\id) \paren{\proj{\psi_1}}
      }
      \\
      &=
        \proj{\basis{}{m_1}}\otimes \calE\pb\paren{(\calM\otimes\id) \paren{\proj{\psi_1}}}
      \starstarrel=
        \proj{\basis{}{m_1}}\otimes \calE\paren{{\proj{\psi_1}}}
      \\&
      \tristarrel=
        \denotc{\Qinit\QQ e}\paren{
        \proj{\basis{}{m_1}}\otimes {{\proj{\psi_1}}}}
    \end{align*}
    Here $(*)$ follows from \autoref{claim:My}.
    And $(**)$ is because $\calE\circ(\calM\otimes\id)=\calE$.
    And $(*{*}*)$ was explained after the definition of $\calE$.

    Thus we have shown
    $\partr1{}\rho'=\denotc{\Qinit\QQ
      e}\pb\paren{\proj{\basis{}{m_1}\otimes\psi_1}}$.

    $\partr2{}\rho'=\denotc{\Qinit{\QQ'}{e'}}\pb\paren{\proj{\basis{}{m_2}\otimes\psi_2}}$
    is shown analogously. (With the sole exception that we do not need
    \autoref{claim:psi12.same} to simplify the fraction in
    \eqref{eq:tr1.psi'z} because the nominator and denominator are the
    same term in this case.)
  \end{claimproof}

  As mentioned in the first paragraph of this proof,
  \autoref{claim:rho'.B'} and \autoref{claim:rho'12} implies the
  conclusion of the rule. ($\rho'$ is separable by definition.)
\end{proof}

\ERULE{JointRemoveLocal}{
  \Tilde\QQ\subseteq\QQ
  \\
  \Tilde\QQ'\subseteq\QQ'
  \\
  \fv{\PA,\PB} \cap \QQ_1\QQ'_2\XX_1\XX_2' = \varnothing
  \\
    \PA' := {\PA \cap
      \paren{\QQ_1\eqstate\psiinit{}}
    \cap
    \paren{\QQ'_2\eqstate\psiinit{}}
    \cap
    \CL{\XX_1=\initial{} \land \XX'_2 = \initial{}}}
  \\
  \PB' :=   {
    \PB \cap \CL{U,U',V,V'\text{ are unitaries}}
      \cap \pb\paren{(U\otimes V)\Tilde\QQ_1\RR_1 \quanteq (U'\otimes V')\Tilde\QQ_2\RR'_2}
  }
  \\\\
  \pb\rhl
  {\PA'}
  \bc   \bd
  {\PB'}
}{
  \pB\rhl
  \PA
  {\local{\QQ\XX}\bc}  {\local{\QQ'\XX'}\bd}
  {
    \PB \cap \paren{V\RR_1 \quanteq V'\RR'_2}
  }
}

A simpler variant of this rule (that probably illustrates the core
ideas better) is the following:

\ERULE{JointRemoveLocal0}{
  \fv{\PA,\PB} \cap \QQ_1\QQ_2\XX_1\XX_2 = \varnothing
  \\\\
  \pb\rhl
{\PA \cap
    \paren{ \QQ_1\RR_1\quanteq\QQ_2\RR_2 }
    \cap
    \CL{\XX_1=\XX_2}}
  \bc   \bd
{
    \PB
      \cap \paren{\QQ_1\RR_1 \quanteq \QQ_2\RR_2}
  }
}{
  \pB\rhl
  {\PA  \cap   \paren{ \RR_1\quanteq\RR_2 }}
  {\local{\QQ\XX}\bc}  {\local{\QQ\XX}\bd}
  {
    \PB \cap \paren{\RR_1 \quanteq \RR_2}
  }
}

We have not implemented a tactic for either of these rules in \texttt{qrhl-tool}.
During the case study \cite{pqfo-verify}, we first implemented and used a
preliminary version\footnote{Preliminary means that it was not based
  on the proven rule but instead was based on intuition and thus not
  necessarily sound.}  of a \texttt{local remove joint}
tactic. \notanonymous However, after
further improvements of the \texttt{equal} tactic (which is
essentially an application of the \rulerefx{Adversary} rule, see
\autoref{sec:adv.rule}), and consequent rewriting of the affected proofs in the
case study, it turned out that all applications of \texttt{local
  remove joint} were gone. So it seems that a tactic implementing
\rulerefx{JointRemoveLocal} is not a high priority, and thus we did
not implement a final version of the \texttt{local remove joint}
tactic. However, the \ruleref{JointRemoveLocal} is still important
because it is used to prove the improved \rulerefx{Adversary} rule
which made it possible to remove the \texttt{local remove joint}
tactic in the first place.

\begin{proof}[of \rulerefx{JointRemoveLocal}]
  By \ruleref{JointQInitEq}, we have
  \[
    \pb\rhl
    {\PB'}
    {\init{\Tilde\QQ}}
    {\init{\Tilde\QQ'}}
    {\PB \cap \paren{V\RR_1 \quanteq V'\RR'_2}}.
  \]
  (The rule needs the premise $\fv\PB \cap \Tilde\QQ_1\Tilde\QQ'_2=\varnothing$ which
  follows from the premises of \ruleref{JointRemoveLocal}.)
  
  With the premise $\rhl{\PA'}\bc\bd{\PB'}$ and \ruleref{Seq}, we get
  \[
    \pb\rhl
    {\PA'}
    {\bc;\init{\Tilde\QQ}}
    {\bd;\init{\Tilde\QQ'}}
    {\PB \cap \paren{V\RR_1 \quanteq V'\RR'_2}}.
  \]

  By consecutive application of \ruleref{RemoveLocal1} and RemoveLocal2, we get
  \[
    \pb\rhl
    {\PA}
    {\local{\QQ\XX}{\bc;\init{\Tilde\QQ }}}
    {\local{\QQ\XX}{\bd;\init{\Tilde\QQ'}}}
    {\PB \cap \paren{V\RR_1 \quanteq V'\RR'_2}}.
  \]
  (This requires
  $ \fv{\PA,\PB} \cap \QQ_1\QQ'_2\XX_1\XX_2' = \varnothing$ which is
  provided in the premises of this proof.)

  By \lemmaref{lemma:add.init.end},
  ${\local{\QQ\XX}{\bc;\init{\Tilde\QQ }}} \deneq
  {\local{\QQ\XX}{\bc}}$ and
  $ {\local{\QQ\XX}{\bd;\init{\Tilde\QQ'}}} \deneq
  {\local{\QQ\XX}{\bd}}$. (Using the premises $\Tilde\QQ\subseteq\QQ$,
  $\Tilde\QQ'\subseteq\QQ'$.)

  Thus 
  \[
    \pb\rhl
    {\PA}
    {\local{\QQ\XX}{\bc}}
    {\local{\QQ\XX}{\bd}}
    {\PB \cap \paren{V\RR_1 \quanteq V'\RR'_2}}.
  \]
\end{proof}

\begin{proof}[of \rulerefx{JointRemoveLocal0}]
  We have
  $\paren{\QQ_1\eqstate\psiinit{}} \cap
  \paren{\QQ_2\eqstate\psiinit{}} \subseteq (\QQ_1\quanteq\QQ_2)$ by
  \qrhlautoref{lemma:qeq.span}.  And
  $(\QQ_1\quanteq\QQ_2) \cap (\RR_1\quanteq\RR_2) \subseteq
  (\QQ_1\RR_1\quanteq\QQ_2\RR_2)$
  since the lhs consists of states invariant under swapping
  $\QQ_1$ and $\QQ_2$ and invariant under swapping $\RR_1$ and $\RR_2$,
  while the rhs consists of states invariant doing both those swaps. And
  $\CL{\XX_1=\initial{}\land\XX_1=\initial{}} \subseteq
  \CL{\XX_1=\XX_2}$.

  Thus
  \begin{multline*}
    \PA' := {\PA \cap
      \paren{\QQ_1\eqstate\psiinit{}}
    \cap
    \paren{\QQ'_2\eqstate\psiinit{}}
    \cap
    \CL{\XX_1=\initial{} \land \XX'_2 = \initial{}}}
  \\
  \subseteq\quad
  {\PA \cap
    \paren{ \QQ_1\RR_1\quanteq\QQ_2\RR_2 }
    \cap
    \CL{\XX_1=\XX_2}}.
\end{multline*}
we get with the qRHL judgment from the premise and \ruleref{Seq}:
\[
\rhl{\PA'}\bc\bd
{
    \PB
      \cap \paren{\QQ_1\RR_1 \quanteq \QQ_2\RR_2}
    }
  \]

  By setting $U,U',V,V' := \id$, $\QQ',\Tilde\QQ,\Tilde\QQ':=\QQ$,
  $\RR':=\RR$ in \ruleref{JointRemoveLocal}, this implies the
  conclusion of \ruleref{JointRemoveLocal0}.
\end{proof}

The \ruleref{JointQInitEq} also gives us the opportunity to easily
strengthen the \rulerefx{QrhlElimEq} rule from \cite{qrhl}.  The
\ruleref{QrhlElimEq} allows us to derive a relationship between
probabilities (which is what we eventually care about in cryptographic
proofs) from a qRHL judgment. The new rule gives us more flexibility
in terms of the variables that have to be involved in the (quantum)
equalities in the pre-/postconditions of that qRHL judgment.

\ERULE{QrhlElimEqNew}{
  \rho\ \text{satisfies}\ \PA\\
  \QQ\supseteq \fv\bc \setminus \overwr\bc\\
  \QQ\supseteq \fv\bd \setminus \overwr\bd\\
  \QQ\supseteq \fv\PA \\
  \rhl{\CL{\XX_1=\XX_2} \cap \paren{\QQ_1\quanteq \QQ_2} \cap \PA_1 \cap \PA_2}
  \bc\bd{\CL{e_1 \implies f_2}}
}{
  \Pr[e:\bc(\rho)] \leq \Pr[f:\bd(\rho)]
}

The same holds with $\iff/=$ instead of $\implies/\leq$.

We rewrote the tactic \texttt{byqrhl}%
\index{byqrhl@\texttt{byqrhl} (tactic)}%
\pagelabel{page:tactic:byqrhl} in the \texttt{qrhl-tool} that
implements the \ruleref{QrhlElimEq} to require the more liberal
variable conditions from \rulerefx{QrhlElimEqNew}.

\begin{proof}
  We first show an auxiliary claim:
  \begin{claim}\label{claim:pr.init}
    For any program $\bc$ and expression $e$ and finite set of quantum
    variables~$\QQ$, $\Pr[e:(\bc;\init\QQ)(\rho)] = \Pr[e:\bc(\rho)]$.
  \end{claim}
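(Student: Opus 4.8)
The plan is to reduce the claim to the semantics of sequential composition together with \autoref{lemma:init.sem}, which already describes exactly what appending $\init\QQ$ does to a state. First I would write $\denotc{\bc;\init\QQ} = \denotc{\init\QQ}\circ\denotc\bc$ and invoke \autoref{lemma:init.sem} to replace $\denotc{\init\QQ}$ by $\Einit\QQ\otimes\id_{\VVall\setminus\QQ}$, where $\Einit\QQ(\sigma)=\proj{\psiinit\QQ}\otimes\tr\sigma$. The structural point that makes everything work is that $\QQ$ consists \emph{only} of quantum variables, so the entire classical register lies inside the $\id_{\VVall\setminus\QQ}$ factor and is left untouched by $\Einit\QQ$.

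Concretely, I would expand $\denotc\bc(\rho)=\sum_m\proj{\basis{}m}\otimes\rho_m$ as in \autoref{def:prafter} and apply $\Einit\QQ\otimes\id$ termwise (justified by linearity and trace-norm continuity of superoperators). Since $\QQ\subseteq\qu\VVall$, each classical label $m$ is preserved and the quantum part $\rho_m$ is replaced by $\rho'_m := \proj{\psiinit\QQ}\otimes\partr{}\QQ\rho_m$, so that
\[
  \denotc{\bc;\init\QQ}(\rho)=\sum_m\proj{\basis{}m}\otimes\rho'_m .
\]
In particular $\bc;\init\QQ$ produces the same classical branching structure as $\bc$.

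The one computation that matters is that the per-branch trace is unchanged: $\tr\rho'_m=\tr\proj{\psiinit\QQ}\cdot\tr(\partr{}\QQ\rho_m)=1\cdot\tr\rho_m=\tr\rho_m$, using that $\psiinit\QQ$ is normalized and that the partial trace preserves the trace. Because $e$ evaluates to a truth value depending only on classical variables, the index set $\{m : \denotee em=\true\}$ over which \autoref{def:prafter} sums is literally the same for both programs. Hence $\prafter e{(\bc;\init\QQ)}\rho=\sum_{m:\denotee em=\true}\tr\rho'_m=\sum_{m:\denotee em=\true}\tr\rho_m=\prafter e\bc\rho$, which is the claim.

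I do not expect a genuine obstacle here: the statement is essentially immediate once \autoref{lemma:init.sem} is available. The only points needing a little attention are (i) confirming that $\init\QQ$ modifies no classical variable, so that both the summation index and the truth value of $e$ are genuinely unaffected, and (ii) justifying that the superoperator $\Einit\QQ\otimes\id$ may be pushed through the (possibly infinite) sum $\sum_m$, which follows from the linearity and trace-norm continuity used throughout this section.
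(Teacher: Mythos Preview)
Your proposal is correct and follows essentially the same argument as the paper: expand $\denotc\bc(\rho)$ into its classical branches, observe that $\init\QQ$ leaves the classical labels untouched and preserves each per-branch trace, and conclude. The only cosmetic difference is that the paper first treats a single variable $\qq$ directly from the semantics of $\Qinit$ and then inducts over $\QQ$, whereas you invoke \autoref{lemma:init.sem} to handle the whole set $\QQ$ at once; both routes yield the same computation $\tr\rho'_m=\tr\rho_m$.
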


  \begin{claimproof}
    We show that
    $\Pr[e:(\bc;\init\qq)(\rho)] = \Pr[e:\bc(\rho)]$.
    The general case $\Pr[e:(\bc;\init\QQ)(\rho)] = \Pr[e:\bc(\rho)]$
    follows by induction over $\QQ$.

    Let $\rho':=\denotc\bc(\rho)$.  If we write $\rho'$ as
    $\rho'=:\sum_m\proj{\basis{\cl\VV}{m}}\otimes \rho'_m$ (here $\cl\VV$
    is the set of all classical variables), by definition of
    $\Pr[e:\bc(\rho)]$ \qrhlautoref{def:prafter}, we have
    $\Pr[e:\bc(\rho)]=\sum_{m\text{ s.t.~}\denotee em=\true}\tr\rho'_m$.

    Let $\rho'':=\denotc{\bc;\init\qq}(\rho)$.
    Since $\init\qq = \Qinit\qq{\psiinit\qq}$ by definition,
    $\rho''=\denotc{\Qinit\qq{\psiinit\qq}}(\rho)$.
    By the semantics of the language,
    $\rho''=\sum_m \proj{\basis{\cl\VV}{m}}\otimes \rho''_m$
    where $\rho''_m=\partr{}{\QQ}\rho'_m \otimes \proj{\psiinit\qq}$.

    Note that $\tr\rho''_m=\tr\rho'_m$ since $\pb\norm{\psiinit\qq}=1$.

    Again by definition of $\Pr[\dots]$,
    we have 
    $\Pr[e:(\bc;\init\qq)(\rho)]=\sum_{m\text{ s.t.~}\denotee em=\true}\tr\rho''_m$.
    Since  $\tr\rho''_m=\tr\rho'_m$, it follows that
    $\Pr[e:(\bc;\init\qq)(\rho)] = \Pr[e:\bc(\rho)]$.
  \end{claimproof}
  
  Let $\Tilde\XX := \XX\cup\cl{\fv\bc}\cup\cl{\fv\bd}$ and
  $\Tilde\QQ := \QQ\cup\qu{\fv\bc}\cup\qu{\fv\bd}$ and
  $\tilde\bc := \init{\Tilde\QQ\setminus\QQ};\ \bc$.
  $\tilde\bd := \init{\Tilde\QQ\setminus\QQ};\ \bd$.

  We have
  \begin{align*}
    &\braces{\CL{\Tilde\XX_1=\Tilde\XX_2} \cap \paren{\Tilde\QQ_1\quanteq \Tilde\QQ_2}
    \cap \PA_1 \cap \PA_2}
    \\
    &\qquad \subseteq
      \braces{\CL{\XX_1=\XX_2} \cap \paren{\Tilde\QQ_1\quanteq \Tilde\QQ_2} \cap \PA_1 \cap \PA_2}
      && \text{(since $\XX\subseteq\Tilde\XX$)}
      \\
    &\qquad \init{\paren{\Tilde \QQ\setminus\QQ}}    \sim    \init{\paren{\Tilde \QQ\setminus\QQ}}
      \quad \braces{\CL{\XX_1=\XX_2} \cap \paren{\QQ_1\quanteq \QQ_2} \cap \PA_1 \cap \PA_2}
      && (\rulerefx{JointQInitEq0})
    \\
    &\qquad
    \bc \sim \bd
      \quad
      \braces{e_1\implies f_1}
    &&
       \text{(by assumption)}
  \end{align*}
  The application of \ruleref{JointQInitEq0} uses that
  $\fv\PA\subseteq\QQ$ by assumption and therefore
  $\fv{\PA_1\cap \PA_2} \cap
  \paren{\Tilde\QQ\setminus\QQ}_1\paren{\Tilde\QQ\setminus\QQ}_2 =
  \varnothing$.

  Thus with rules \rulerefx{Conseq} and \rulerefx{Seq}, and by
  definition of $\tilde\bc,\tilde\bd$, we have
  \[
    \rhl {\CL{\Tilde\XX_1=\Tilde\XX_2} \cap \paren{\Tilde\QQ_1\quanteq
      \Tilde\QQ_2} \cap \PA_1 \cap \PA_2} {\tilde\bc}{\tilde\bd}
  {e_1\implies f_1}.
  \]

  Furthermore, we have that $\rho$ satisfies $\PA$ (by assumption) and
  $\fv{\tilde\bc},\fv{\tilde\bd}\subseteq\Tilde\XX\Tilde\QQ$ and
  $\qu{\fv{\PA}}\subseteq\Tilde\QQ$ (by definition of
  $\tilde\bc,\tilde\bd,\Tilde\XX,\Tilde\QQ$).  Thus, by
  \ruleref{QrhlElimEq} from \cite{qrhl}, we have
  \begin{equation}
    \Pr[e:\tilde\bc(\rho)] \leq \Pr[f:\tilde\bd(\rho)].
    \label{eq:prtildebc}
  \end{equation}

  Furthermore,
  \begin{multline}
    \tilde\bc
    \deneq
    \init{\paren{\Tilde\QQ\setminus\QQ}\cap\overwr\bc};
    \init{\paren{\Tilde\QQ\setminus\QQ}\setminus\fv\bc};
    \bc
    \\
    \deneq
    \init{\paren{\Tilde\QQ\setminus\QQ}\cap\overwr\bc};
    \bc;
    \init{\underbrace{\paren{\Tilde\QQ\setminus\QQ}\setminus\fv\bc}_{=:\mathbf I}}
    \deneq
    \bc;
    \init{\mathbf{I}}
    \label{eq:tilde.bc}
  \end{multline}
  The first $\deneq$ follows since
  ${\paren{\Tilde\QQ\setminus\QQ}\cap\overwr\bc} \cup
  {\paren{\Tilde\QQ\setminus\QQ}\setminus\fv\bc} =
  {\Tilde\QQ\setminus\QQ}$.\footnote{The latter is shown in
    Isabelle/HOL, \texttt{Helping\_Lemmas.qrhlelimeq\_aux}.}  The
  second $\deneq$ follows by \lemmaref{lemma:swap} (and using that
  ${\paren{\Tilde\QQ\setminus\QQ}\setminus\fv\bc}$ and $\fv\bc$ are
  disjoint).  The third $\deneq$ follows by
  \lemmaref{lemma:init.overwr:init}
  (using that 
  $\pb\fv{\paren{\Tilde\QQ\setminus\QQ}\cap\overwr\bc}
  \subseteq\overwr\bc$).

  From \eqref{eq:tilde.bc}, we get
  $\Pr[e:\tilde\bc(\rho)]=\Pr[e:\paren{\bc;\init{\mathbf{I}}}(\rho)]$.
  Furthermore, by \autoref{claim:pr.init},
  $\Pr[e:\paren{\bc;\init{\mathbf{I}}}(\rho)]=\Pr[e:\bc(\rho)]$.
  
  Thus $\Pr[e:\tilde\bc(\rho)]
  = \Pr[e:\bc(\rho)]$.
  Analogously, we have 
  $\Pr[f:\tilde\bd(\rho)]
  = \Pr[f:\bd(\rho)]$.

  With \eqref{eq:prtildebc}, this implies
  $\Pr[e:\bc(\rho)] \leq \Pr[f:\bd(\rho)]$ and concludes the proof.

  (The case using
  $\iff/=$ instead of $\implies/\leq$
  is shown analogously, using the corresponding variant of
  \ruleref{QrhlElimEq}.)
\end{proof}

\section{Variable change in quantum equality}
\label{sec:varchange}

\ERULE{EqVarChange}{
  \typel\QQ = \typel{\QQ'}
  \\
  \typel{\Tilde\QQ} = \typel{\Tilde\QQ'}
  \\
  \abs{\typel\QQ} \leq \abs{\typel{\Tilde\QQ}}
  \text{ or }
  \abs{\typel{\Tilde\QQ}} = \infty
  \\
  \fv\PA,\fv\PB\cap\QQ_1\QQ'_2\Tilde\QQ_1\Tilde\QQ'_2 = \varnothing
  \\
  \fv\bc\cap\QQ\Tilde\QQ = \varnothing
  \\
  \fv\bd\cap\QQ'\Tilde\QQ' = \varnothing
  \\\\
  \pB \rhl
  {\PA \cap (U_S\otimes \id)\SSS_1\Tilde\QQ_1 \quanteq (U'_S\otimes\id)\SSS'_2\Tilde\QQ'_2}
  \bc \bd
  {\PB \cap (U_R\otimes \id)\RR_1\Tilde\QQ_1 \quanteq (U'_R\otimes\id)\RR'_2\Tilde\QQ'_2}
}{
  \pB \rhl
  {\PA \cap (U_S\otimes \id)\SSS_1\QQ_1 \quanteq (U'_S\otimes\id)\SSS'_2\QQ'_2}
  \bc \bd
  {\PB \cap (U_R\otimes \id)\RR_1\QQ_1 \quanteq (U'_R\otimes\id)\RR'_2\QQ'_2}
}

The above rule is implemented in \texttt{qrhl-tool} by the
\texttt{conseq qrhl}%
\index{conseq qrhl@\texttt{conseq qrhl} (tactic)}%
\pagelabel{page:tactic:conseq}
tactic, which combines the \ruleref{Seq} with
\ruleref{EqVarChange}. That is, it allows to take a given, already
proven qRHL judgment, (optionally) replaces specified variables in the
quantum equality in the pre/postcondition by new variables, and then
uses the resulting qRHL judgment $X$ to solve the current subgoal $Y$
using the \rulerefx{Seq} rule. (This means the current subgoal $Y$
must be a qRHL judgment as well with the same programs as $X$, and new
subgoals are created to show the relationship between the
pre/postconditions of $X$ and $Y$.)

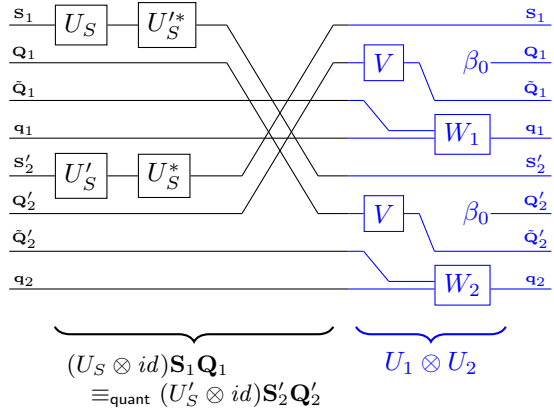
\begin{figure}[t]
  \centering
  \begin{tikzpicture}
    \initializeCircuit;
    \newWires{S1,Q1,Q1t,q1,S2,Q2,Q2t,q2,underbrace};
    \stepForward{2mm};
    \labelWire[\tiny $\SSS_1$]{S1};
    \labelWire[\tiny $\QQ_1$]{Q1};
    \labelWire[\tiny $\Tilde\QQ_1$]{Q1t};
    \labelWire[\tiny $\qq_1$]{q1};
    \labelWire[\tiny $\SSS'_2$]{S2};
    \labelWire[\tiny $\QQ'_2$]{Q2};
    \labelWire[\tiny $\Tilde\QQ'_2$]{Q2t};
    \labelWire[\tiny $\qq_2$]{q2};
    \stepForward{4mm};
    \coordinate (underbrace1-start) at (\getWireCoord{underbrace} -| \currentXPos);
    \node[gate=S1] (US) {$U_S$};
    \node[gate=S2] (US') {$U_S'$};
    \stepForward{4mm};
    \node[gate=S1] (US) {$\adj{U_S'}$};
    \node[gate=S2] (US') {$\adj{U_S}$};
    \stepForward{4mm};
    \drawWires{Q1,S1};
    \stepForward{2mm};
    \drawWires{Q2,S2};
    \stepForward{10mm};
    \crossWire{Q1}{Q2}; 
    \crossWire{S1}{S2};
    \stepForward{2mm};
    \crossWire{Q2}{Q1};
    \crossWire{S2}{S1};
    \skipWires{Q1,S1};
    \stepForward{-2mm};
    \skipWires{Q2,S2};
    \coordinate (underbrace1-stop) at (\getWireCoord{underbrace} -| \currentXPos);
    \draw[thick, decoration={brace,mirror,amplitude=2mm}, decorate]
    (underbrace1-start) -- ($(underbrace1-stop) + (2mm,0)$)
    node [pos=0.5,anchor=north,yshift=-2mm] {\small$\begin{array}{l} (U_S\otimes \id)\SSS_1\QQ_1 \\\quad \quanteq (U'_S\otimes\id)\SSS'_2\QQ'_2 \end{array}  $};
    \stepForward{2mm};
    \drawWires{S1,Q1,Q1t,q1,S2,Q2,Q2t,q2};
    \tikzset{every path/.style={color=blue}};
    \stepForward{2mm};
    \coordinate (underbrace2-start) at (\getWireCoord{underbrace} -| \currentXPos);
    \node[gate=Q1] (V1) {$V$};
    \node[gate=Q2] (V2) {$V$};
    \drawWires{Q1t,Q2t};
    \stepForward{-2mm};
    \newWireRelative{W1}{q1}{1mm};
    \newWireRelative{W2}{q2}{1mm};
    \crossWire{Q1t}{W1};
    \crossWire{Q2t}{W2};
    \stepForward{2mm};
    \drawWires{Q1,Q2};
    \stepForward{2mm};
    \crossWire{Q1}{Q1t};
    \crossWire{Q2}{Q2t};
    \skipWires{Q1t,Q2t};
    \stepForward{0mm};
    \node[gate={W1,q1}] (W1) {$W_1$};
    \node[gate={W2,q2}] (W2) {$W_2$};
    \stepForward{0mm};
    \node[wireInput={Q1}] (beta01) {$\beta_0$};
    \node[wireInput={Q2}] (beta02) {$\beta_0$};
    \coordinate (underbrace2-stop) at (\getWireCoord{underbrace} -| \currentXPos);
    \stepForward{6mm};
    \labelWire[\tiny $\SSS_1$]{S1};
    \labelWire[\tiny $\QQ_1$]{Q1};
    \labelWire[\tiny $\Tilde\QQ_1$]{Q1t};
    \labelWire[\tiny $\qq_1$]{q1};
    \labelWire[\tiny $\SSS'_2$]{S2};
    \labelWire[\tiny $\QQ'_2$]{Q2};
    \labelWire[\tiny $\Tilde\QQ'_2$]{Q2t};
    \labelWire[\tiny $\qq_2$]{q2};
    \stepForward{2mm};
    \drawWires{S1,Q1,Q1t,q1,S2,Q2,Q2t,q2};
    \draw[thick, decoration={brace,mirror,amplitude=2mm}, decorate]
    ($(underbrace2-start) - (1mm,0)$) -- ($(underbrace2-stop) + (2mm,0)$)
    node [pos=0.5,anchor=north,yshift=-2mm] {$U_1\otimes U_2$};
  \end{tikzpicture}
  
  \caption{Circuit: Quantum equality swap and $U_1\otimes U_2$}
  \label{fig:circuit-qeq-u}
\end{figure}

\begin{figure}[t]
  \centering
\begin{tikzpicture}
    \initializeCircuit;
    \tikzset{every path/.style={color=blue}};
    \newWires{S1,Q1,Q1t,q1,S2,Q2,Q2t,q2,underbrace};
    \stepForward{2mm};
    \labelWire[\tiny $\SSS_1$]{S1};
    \labelWire[\tiny $\QQ_1$]{Q1};
    \labelWire[\tiny $\Tilde\QQ_1$]{Q1t};
    \labelWire[\tiny $\qq_1$]{q1};
    \labelWire[\tiny $\SSS'_2$]{S2};
    \labelWire[\tiny $\QQ'_2$]{Q2};
    \labelWire[\tiny $\Tilde\QQ'_2$]{Q2t};
    \labelWire[\tiny $\qq_2$]{q2};
    \stepForward{4mm};
    \coordinate (underbrace2-start) at (\getWireCoord{underbrace} -| \currentXPos);
    \node[gate=Q1] (V1) {$V$};
    \node[gate=Q2] (V2) {$V$};
    \drawWires{Q1t,Q2t};
    \stepForward{-2mm};
    \newWireRelative{W1}{q1}{1mm};
    \newWireRelative{W2}{q2}{1mm};
    \crossWire{Q1t}{W1};
    \crossWire{Q2t}{W2};
    \stepForward{2mm};
    \drawWires{Q1,Q2};
    \stepForward{2mm};
    \crossWire{Q1}{Q1t};
    \crossWire{Q2}{Q2t};
    \skipWires{Q1t,Q2t};
    \stepForward{0mm};
    \node[gate={W1,q1}] (W1) {$W_1$};
    \node[gate={W2,q2}] (W2) {$W_2$};
    \stepForward{0mm};
    \node[wireInput={Q1}] (beta01) {$\beta_0$};
    \node[wireInput={Q2}] (beta02) {$\beta_0$};
    \coordinate (underbrace2-stop) at (\getWireCoord{underbrace} -| \currentXPos);
    \draw[thick, decoration={brace,mirror,amplitude=2mm}, decorate]
    (underbrace2-start) -- ($(underbrace2-stop) + (2mm,0)$)
    node [pos=0.5,anchor=north,yshift=-2mm] {$U_1\otimes U_2$};
    \stepForward{2mm};
    \tikzset{every path/.style={color=black}};
    \stepForward{2mm};
    \coordinate (underbrace1-start) at (\getWireCoord{underbrace} -| \currentXPos);
    \node[gate=S1] (US) {$U_S$};
    \node[gate=S2] (US') {$U_S'$};
    \stepForward{4mm};
    \node[gate=S1] (US) {$\adj{U_S'}$};
    \node[gate=S2] (US') {$\adj{U_S}$};
    \stepForward{4mm};
    \drawWires{Q1t,S1};
    \stepForward{2mm};
    \drawWires{Q2t,S2};
    \stepForward{10mm};
    \crossWire{Q1t}{Q2t}; 
    \crossWire{S1}{S2};
    \stepForward{2mm};
    \crossWire{Q2t}{Q1t};
    \crossWire{S2}{S1};
    \skipWires{Q1t,S1};
    \stepForward{-2mm};
    \skipWires{Q2t,S2};
    \coordinate (underbrace1-stop) at (\getWireCoord{underbrace} -| \currentXPos);
    \draw[thick, decoration={brace,mirror,amplitude=2mm}, decorate]
    (underbrace1-start) -- ($(underbrace1-stop) + (2mm,0)$)
    node [pos=0.5,anchor=north,yshift=-2mm]
    {\small$\begin{array}{l} (U_S\otimes \id)\SSS_1\Tilde\QQ_1 \\\quad \quanteq (U'_S\otimes\id)\SSS'_2\Tilde\QQ'_2 \end{array}  $};
    \stepForward{4mm};
    \labelWire[\tiny $\SSS_1$]{S1};
    \labelWire[\tiny $\QQ_1$]{Q1};
    \labelWire[\tiny $\Tilde\QQ_1$]{Q1t};
    \labelWire[\tiny $\qq_1$]{q1};
    \labelWire[\tiny $\SSS'_2$]{S2};
    \labelWire[\tiny $\QQ'_2$]{Q2};
    \labelWire[\tiny $\Tilde\QQ'_2$]{Q2t};
    \labelWire[\tiny $\qq_2$]{q2};
    \stepForward{2mm};
    \drawWires{S1,Q1,Q1t,q1,S2,Q2,Q2t,q2};
  \end{tikzpicture}
  
  \caption{Circuit: $U_1\otimes U_2$ and quantum equality swap}
  \label{fig:circuit-u-qeq}
\end{figure}
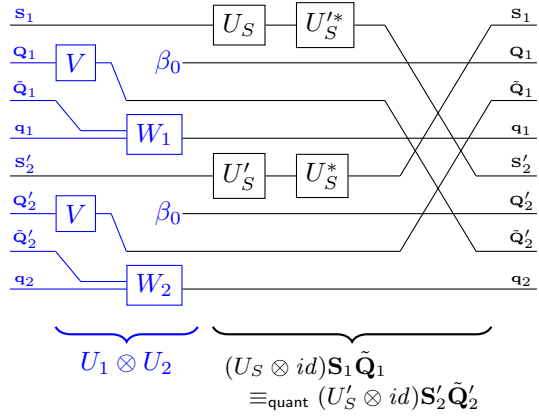

\begin{figure}[t]
  \centering
  \begin{tikzpicture}
    \initializeCircuit;
    \newWires{S1,Q1,Q1t,q1,S2,Q2,Q2t,q2,underbrace};
    \stepForward{2mm};
    \labelWire[\tiny $\RR_1$]{S1};
    \labelWire[\tiny $\QQ_1$]{Q1};
    \labelWire[\tiny $\Tilde\QQ_1$]{Q1t};
    \labelWire[\tiny $\qq_1$]{q1};
    \labelWire[\tiny $\RR'_2$]{S2};
    \labelWire[\tiny $\QQ'_2$]{Q2};
    \labelWire[\tiny $\Tilde\QQ'_2$]{Q2t};
    \labelWire[\tiny $\qq_2$]{q2};
    \stepForward{4mm};
    \coordinate (underbrace1-start) at (\getWireCoord{underbrace} -| \currentXPos);
    \node[gate=S1] (US) {$U_R$};
    \node[gate=S2] (US') {$U_R'$};
    \stepForward{4mm};
    \node[gate=S1] (US) {$\adj{U_R'}$};
    \node[gate=S2] (US') {$\adj{U_R}$};
    \stepForward{4mm};
    \drawWires{Q1t,S1};
    \stepForward{2mm};
    \drawWires{Q2t,S2};
    \stepForward{10mm};
    \crossWire{Q1t}{Q2t}; 
    \crossWire{S1}{S2};
    \stepForward{2mm};
    \crossWire{Q2t}{Q1t};
    \crossWire{S2}{S1};
    \skipWires{Q1t,S1};
    \stepForward{-2mm};
    \skipWires{Q2t,S2};
    \coordinate (underbrace1-stop) at (\getWireCoord{underbrace} -| \currentXPos);
    \draw[thick, decoration={brace,mirror,amplitude=2mm}, decorate]
    (underbrace1-start) -- ($(underbrace1-stop) + (2mm,0)$)
    node [pos=0.5,anchor=north,yshift=-2mm] {\small$\begin{array}{l} (U_R\otimes \id)\RR_1\QQ_1 \\\quad \quanteq (U'_R\otimes\id)\RR'_2\QQ'_2 \end{array}  $};
    \stepForward{2mm};
    \drawWires{S1,Q1,Q1t,q1,S2,Q2,Q2t,q2};
    \tikzset{every path/.style={color=blue}};
    \stepForward{2mm};
    \coordinate (underbrace2-start) at (\getWireCoord{underbrace} -| \currentXPos);
    \node[gate=Q1] (beta1) {$\beta_0$};
    \node[gate=Q2] (beta2) {$\beta_0$};
    \node[gate=q1] (W1) {$W^*$};
    \node[gate=q2] (W2) {$W^*$};
    \stepForward{0mm};
    \drawWires{Q1t,Q2t};
    \skipWires{W1,W2};
    \stepForward{3mm};
    \crossWire{Q1t}{Q1};
    \crossWire{Q2t}{Q2};
    \skipWires{Q1,Q2};
    \drawWires{W1,W2};
    \stepForward{2.5mm};
    \crossWire{W1}{Q1t};
    \crossWire{W2}{Q2t};
    \skipWires{Q1t,Q2t};
    \stepForward{0mm};
    \node[gate=Q1] (V1) {$V^*$};
    \node[gate=Q2] (V2) {$V^*$};
    \stepForward{0mm};
    \coordinate (underbrace2-stop) at (\getWireCoord{underbrace} -| \currentXPos);
    \stepForward{3mm};
    \labelWire[\tiny $\RR_1$]{S1};
    \labelWire[\tiny $\QQ_1$]{Q1};
    \labelWire[\tiny $\Tilde\QQ_1$]{Q1t};
    \labelWire[\tiny $\qq_1$]{q1};
    \labelWire[\tiny $\RR'_2$]{S2};
    \labelWire[\tiny $\QQ'_2$]{Q2};
    \labelWire[\tiny $\Tilde\QQ'_2$]{Q2t};
    \labelWire[\tiny $\qq_2$]{q2};
    \stepForward{2mm};
    \drawWires{S1,Q1,Q1t,q1,S2,Q2,Q2t,q2};
    \draw[thick, decoration={brace,mirror,amplitude=2mm}, decorate]
    ($(underbrace2-start) - (1mm,0)$) -- ($(underbrace2-stop) + (1mm,0)$)
    node [pos=0.5,anchor=north,yshift=-2mm] {$\adj{U_1}\otimes \adj{U_2}$};
  \end{tikzpicture}

  \caption[Circuit: Quantum equality swap and $\adj{U_1}\otimes \adj{U_2}$.]
  {Circuit: Quantum equality swap and $\adj{U_1}\otimes \adj{U_2}$.
    {} $\begin{tikzpicture}[baseline={(0,-1mm)}]
      \initializeCircuit;
      \newWire{Q};
      \stepForward{3mm};
      \node[gate=Q] (b0) {$\beta_0$};
    \end{tikzpicture}$
    stands for the projector $\proj{\beta_0}$.
  }
  \label{fig:circuit-qeq-ustar}
\end{figure}

\begin{figure}[t]
  \centering
  \begin{tikzpicture}
    \initializeCircuit;
    \newWires{S1,Q1,Q1t,q1,S2,Q2,Q2t,q2,underbrace};
    \tikzset{every path/.style={color=blue}};
    \stepForward{2mm};
    \labelWire[\tiny $\RR_1$]{S1};
    \labelWire[\tiny $\QQ_1$]{Q1};
    \labelWire[\tiny $\Tilde\QQ_1$]{Q1t};
    \labelWire[\tiny $\qq_1$]{q1};
    \labelWire[\tiny $\RR'_2$]{S2};
    \labelWire[\tiny $\QQ'_2$]{Q2};
    \labelWire[\tiny $\Tilde\QQ'_2$]{Q2t};
    \labelWire[\tiny $\qq_2$]{q2};
    \stepForward{2mm};
    \drawWires{S1,Q1,Q1t,q1,S2,Q2,Q2t,q2};
    \tikzset{every path/.style={color=blue}};
    \stepForward{2mm};
    \coordinate (underbrace2-start) at (\getWireCoord{underbrace} -| \currentXPos);
    \node[gate=Q1] (beta1) {$\beta_0$};
    \node[gate=Q2] (beta2) {$\beta_0$};
    \node[gate=q1] (W1) {$W^*$};
    \node[gate=q2] (W2) {$W^*$};
    \stepForward{0mm};
    \drawWires{Q1t,Q2t};
    \skipWires{W1,W2};
    \stepForward{3mm};
    \crossWire{Q1t}{Q1};
    \crossWire{Q2t}{Q2};
    \skipWires{Q1,Q2};
    \drawWires{W1,W2};
    \stepForward{2.5mm};
    \crossWire{W1}{Q1t};
    \crossWire{W2}{Q2t};
    \skipWires{Q1t,Q2t};
    \stepForward{0mm};
    \node[gate=Q1] (V1) {$V^*$};
    \node[gate=Q2] (V2) {$V^*$};
    \stepForward{0mm};
    \coordinate (underbrace2-stop) at (\getWireCoord{underbrace} -| \currentXPos);
    \draw[thick, decoration={brace,mirror,amplitude=2mm}, decorate]
    ($(underbrace2-start) - (1mm,0)$) -- ($(underbrace2-stop) + (1mm,0)$)
    node [pos=0.5,anchor=north,yshift=-2mm] {$\adj{U_1}\otimes \adj{U_2}$};
    \tikzset{every path/.style={color=black}};
        \stepForward{4mm};
    \coordinate (underbrace1-start) at (\getWireCoord{underbrace} -| \currentXPos);
    \node[gate=S1] (US) {$U_R$};
    \node[gate=S2] (US') {$U_R'$};
    \stepForward{4mm};
    \node[gate=S1] (US) {$\adj{U_R'}$};
    \node[gate=S2] (US') {$\adj{U_R}$};
    \stepForward{4mm};
    \drawWires{Q1,S1};
    \stepForward{2mm};
    \drawWires{Q2,S2};
    \stepForward{10mm};
    \crossWire{Q1}{Q2}; 
    \crossWire{S1}{S2};
    \stepForward{2mm};
    \crossWire{Q2}{Q1};
    \crossWire{S2}{S1};
    \skipWires{Q1,S1};
    \stepForward{-2mm};
    \skipWires{Q2,S2};
    \coordinate (underbrace1-stop) at (\getWireCoord{underbrace} -| \currentXPos);
    \draw[thick, decoration={brace,mirror,amplitude=2mm}, decorate]
    (underbrace1-start) -- ($(underbrace1-stop) + (2mm,0)$)
    node [pos=0.5,anchor=north,yshift=-2mm] {\small$\begin{array}{l} (U_R\otimes \id)\RR_1\QQ_1 \\\quad \quanteq (U'_R\otimes\id)\RR'_2\QQ'_2 \end{array}  $};
    \stepForward{3mm};
    \labelWire[\tiny $\RR_1$]{S1};
    \labelWire[\tiny $\QQ_1$]{Q1};
    \labelWire[\tiny $\Tilde\QQ_1$]{Q1t};
    \labelWire[\tiny $\qq_1$]{q1};
    \labelWire[\tiny $\RR'_2$]{S2};
    \labelWire[\tiny $\QQ'_2$]{Q2};
    \labelWire[\tiny $\Tilde\QQ'_2$]{Q2t};
    \labelWire[\tiny $\qq_2$]{q2};
    \stepForward{2mm};
    \drawWires{S1,Q1,Q1t,q1,S2,Q2,Q2t,q2};
  \end{tikzpicture}

  \caption[Circuit: $\adj{U_1}\otimes \adj{U_2}$ and quantum equality swap.]
  {Circuit: $\adj{U_1}\otimes \adj{U_2}$ and quantum equality swap.
    {} $\begin{tikzpicture}[baseline={(0,-1mm)}]
      \initializeCircuit;
      \newWire{Q};
      \stepForward{3mm};
      \node[gate=Q] (b0) {$\beta_0$};
    \end{tikzpicture}$
    stands for the projector $\proj{\beta_0}$.}
  \label{fig:circuit-ustar-qeq}
\end{figure}

\begin{proof}
  Let $\PA^*,\PB^*$ denote the pre-/postcondition of the conclusion,
  and $\Tilde\PA,\Tilde\PB$ denote the pre-/postcondition of the premise.

  Fix $\psi_1\otimes\psi_2\in\PA^*$.
  We need to show that there is a separable $\rho'$ such that $\tr_2\rho'=\denotc\bc\pb\paren{\proj{\psi_1}}$
  and $\tr_1\rho'=\denotc\bd\pb\paren{\proj{\psi_2}}$ and
  $\suppo\rho'\subseteq\PB^*$.

  Let
  $S_1 := \suppo\partr{\QQ}{}\proj{\psi_1} \subseteq \elltwov{\QQ}$
  and
  $S_2 := \suppo\partr{\QQ}{}\proj{\psi_1} \subseteq \elltwov{\QQ'}$
  and $S := S_1+S_2$. ($S_1+S_2$ is
  meaningful because $\typel{\QQ}=\typel{\QQ'}$ and hence
  $\elltwov\QQ=\elltwov{\QQ'}$.)
  Since every density operator has countably dimensional support,
  $\dim S_1,\dim S_2\leq\aleph_0$
  and thus $\dim S\leq\aleph_0$.
  Furthermore, $\dim S\leq\dim\elltwov\QQ\leq\abs{\typel\QQ}$.
  Since $\abs{\typel\QQ}\leq\abs{\typel{\Tilde\QQ}}$
  or $\aleph_0\leq\abs{\typel{\tilde\QQ}}$
  by assumption of the rule, we have
  $\dim S\leq\abs{\typel{\tilde\QQ}}$.
  (Dimensions are Hilbert space dimensions, not vector space
  dimensions.)
  
  Let $\paren{\beta_i}_{i\in I}$
  be a orthonormal basis of $S$,
  and let $\paren{\beta_i}_{i\in B}$
  be an extension of that basis to the whole space $\elltwov{\QQ}$.
  We have $\abs I=\dim S\leq\abs{\typel{\Tilde Q}}$.
  Thus there exists an injection $\iota:I\to \typel{\Tilde Q}$.
  Fix such an $\iota$.

  Let $\qq$ be a fresh variable (i.e., $\qq\notin\fv{\bc,\bd,\PA^*,\PB^*,{\Tilde\PA},{\Tilde\PB}}$)
  with $\typev\qq = (\typel{\Tilde\QQ})^*$.\footnote{%
    Strictly speaking, we only require
    $\abs{\typev\qq} = \abs{(\typel{\Tilde\QQ})^*}$,
    then we can identify $\typev\qq$ with $(\typel{\Tilde\QQ})^*$.
    
    Such a $\qq$ always exists for the following reason:
    If $\typel{\Tilde\QQ}$ is infinite,
    then $\abs{\typel{\Tilde\QQ}}=\abs{\typev{\qq'}}$ for some $\qq'\in\Tilde\QQ$.
    And $\abs{\paren{\typel{\Tilde\QQ}}^*}=\abs{\typel{\Tilde\QQ}}$. Thus we need $\qq$ with $\typev\qq=\typev{\qq'}$.
    Since we assume that there are infinitely many variables of each type (see preliminaries),
    and since $\fv{\bc,\bd,\PA^*,\PB^*,{\Tilde\PA},{\Tilde\PB}}$ is finite (see preliminaries), $\qq$ exists.

    If $\typel{\Tilde\QQ}$ is finite, then $\abs{\paren{\typel{\Tilde\QQ}}^*}=\aleph_0$. Since we assume that there
    is a quantum variable of cardinality $\aleph_0$ (see preliminaries),
    and of each type there are infinitely variables, and 
     $\fv{\bc,\bd,\PA^*,\PB^*,{\Tilde\PA},{\Tilde\PB}}$ is finite, $\qq$ exists.}
  Define the bounded operator $U:\elltwov{\QQ\Tilde\QQ\qq}\to\elltwov{\QQ\Tilde\QQ\qq}$
  with
  \[
    U \pb\paren{\beta_a\otimes\basis{\Tilde\QQ}{b}\otimes{\basis\qq c}} =
    \begin{cases}
      \beta_0\otimes\basis{\Tilde\QQ}{\iota(a)}\otimes\pb\basis{\qq}{b\Vert c}
      &
      (a\in I,\ b\in\typel{\Tilde\QQ}) \\
      0
      &
      (a\notin I,\ b\in\typel{\Tilde\QQ}).
    \end{cases}
  \]
  (Here $0$ in the index of $\beta_0$ refers to an arbitrary but fixed element of $I$.)
  Note that $U$ is also an operator $\elltwov{\QQ'\Tilde\QQ'\qq}\to\elltwov{\QQ'\Tilde\QQ'\qq}$.
  In slight abuse of notation, we will also write $U$ for $U\otimes\id_{\QQ\Tilde\QQ\qq^\complement}$ or $U\otimes\id_{\QQ'\Tilde\QQ'\qq^\complement}$
  We write $U_1,U_2$ if $U$ is applied to the left/right memory, respectively.

  \begin{claim}\label{claim:in.tildeA}
    $U_1 \psi_1 \otimes U_2\psi_2
    \in \Tilde \PA$.
  \end{claim}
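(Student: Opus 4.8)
The plan is to verify the two conjuncts of $\Tilde\PA = \PA \cap \pb\paren{(U_S\otimes\id)\SSS_1\Tilde\QQ_1 \quanteq (U'_S\otimes\id)\SSS'_2\Tilde\QQ'_2}$ separately for the vector $U_1\psi_1\otimes U_2\psi_2 = (U_1\otimes U_2)(\psi_1\otimes\psi_2)$.

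First I would show $U_1\psi_1\otimes U_2\psi_2\in\PA$. We have $\psi_1\otimes\psi_2\in\PA^*\subseteq\PA$. By the premises $\fv\PA$ is disjoint from $\QQ_1\QQ'_2\Tilde\QQ_1\Tilde\QQ'_2$, and $\qq$ was chosen fresh, so $\qq_1,\qq_2\notin\fv\PA$. Hence $\PA$ factors as $\PA_0\otimes\elltwov{\QQ_1\Tilde\QQ_1\qq_1\QQ'_2\Tilde\QQ'_2\qq_2}$ for a predicate $\PA_0$ on the remaining variables. Since $U_1,U_2$ act only on $\QQ_1\Tilde\QQ_1\qq_1$ and $\QQ'_2\Tilde\QQ'_2\qq_2$, i.e.\ on the tensor factor that $\PA$ leaves unconstrained, applying $U_1\otimes U_2$ to a vector of $\PA$ keeps it in $\PA$.

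The substantial part is the quantum-equality conjunct. By the definition of the (extended) quantum equality as the fixed space of its associated swap unitary (cf.\ \qrhlautoref{lemma:quanteq}), membership in $(U_S\otimes\id)\SSS_1\QQ_1 \quanteq (U'_S\otimes\id)\SSS'_2\QQ'_2$ means being a fixed point of the swap operator $W_\QQ$ drawn as the black swap block of Figure~\ref{fig:circuit-qeq-u}, and likewise the $\Tilde\QQ$-equality corresponds to the fixed points of $W_{\Tilde\QQ}$, the black swap block of Figure~\ref{fig:circuit-u-qeq}. The crucial step is the intertwining identity
\[
  (U_1\otimes U_2)\circ W_\QQ = W_{\Tilde\QQ}\circ(U_1\otimes U_2),
\]
which is exactly the statement that the circuits of Figures~\ref{fig:circuit-qeq-u} and~\ref{fig:circuit-u-qeq} compute the same map. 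Granting it, and using that $\psi_1\otimes\psi_2\in\PA^*$ is fixed by $W_\QQ$, we obtain $W_{\Tilde\QQ}(U_1\otimes U_2)(\psi_1\otimes\psi_2) = (U_1\otimes U_2)W_\QQ(\psi_1\otimes\psi_2) = (U_1\otimes U_2)(\psi_1\otimes\psi_2)$, so $U_1\psi_1\otimes U_2\psi_2$ is fixed by $W_{\Tilde\QQ}$ and thus lies in the $\Tilde\QQ$-equality. Combined with the first part this yields $U_1\psi_1\otimes U_2\psi_2\in\Tilde\PA$.

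The hard part is establishing the circuit identity of Figures~\ref{fig:circuit-qeq-u} and~\ref{fig:circuit-u-qeq}. One must track precisely how $U$ relocates the $\QQ$-content into $\Tilde\QQ$ via $\iota$, resets $\QQ$ to $\beta_0$, and pushes the former $\Tilde\QQ$-content onto the stack register $\qq$ (the shift $b\Vert c$), and check that under this relocation the $\QQ$-swap is carried onto the $\Tilde\QQ$-swap, while the gates $W_1,W_2$ and the $\beta_0$ insertions reconcile the two orders of application. Since $U$ annihilates $\beta_a$ for $a\notin I$, the identity need only hold on states whose $\QQ$-support lies in $S=\SPAN\{\beta_a\}_{a\in I}$; this suffices because, by the choice of $S_1,S_2$ and $S=S_1+S_2$, both $\psi_1$ and $\psi_2$ have $\QQ$-support inside $S$, so $U_1,U_2$ act as isometries on them and the off-support clause is irrelevant.
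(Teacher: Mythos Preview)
Your proposal is correct and follows essentially the same route as the paper. The paper likewise splits into the $\PA$-conjunct (handled by the free-variable disjointness) and the quantum-equality conjunct, and for the latter it decomposes $U$ into the components $V\colon\beta_a\mapsto\basis{}{\iota(a)}$ and $W\colon\basis{\Tilde\QQ}{b}\otimes\basis\qq{c}\mapsto\basis\qq{b\Vert c}$, then argues that Figures~\ref{fig:circuit-qeq-u} and~\ref{fig:circuit-u-qeq} are literally identical as wiring diagrams --- which is precisely your intertwining identity $(U_1\otimes U_2)\circ W_\QQ = W_{\Tilde\QQ}\circ(U_1\otimes U_2)$ stated in equational form. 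Your remark about restricting to $\QQ$-support in $S$ is a harmless caution: with $V$ defined to annihilate $\beta_a$ for $a\notin I$ on both sides, the circuit identity in fact holds as operators, so the paper simply asserts it globally.
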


  \begin{claimproof}
    Since $\psi_1\otimes\psi_2\in\PA^*$, we have $\psi_1\otimes\psi_2\in\PA$.
    Since $\fv\PA\cap\QQ_1\QQ'_2\Tilde\QQ_1\Tilde\QQ'_2\qq_1\qq_2=\varnothing$ (by assumption of the rule
    ad definition of $\qq$),
    and $U_1,U_2$ operate on $\QQ_1\QQ'_2\Tilde\QQ_1\Tilde\QQ'_2\qq_1\qq_2$,
    we have $U_1 \psi_1 \otimes U_2\psi_2\in\PA$.

    Since $\psi_1\otimes\psi_2\in\PA^*$,
    we have
    $\psi_1\otimes\psi_2\in (U_S\otimes \id)\SSS_1\QQ_1 \quanteq
    (U'_S\otimes\id)\SSS'_2\QQ'_2$.  By definition of the quantum
    equality, this means that $\psi_1\otimes\psi_2$
    is invariant under the quantum circuit depicted in the black (left) part of
    \autoref{fig:circuit-qeq-u}.

    Furthermore, the blue (right) part of \autoref{fig:circuit-qeq-u} is an
    application of $U_1\otimes U_2$ if we define $V$, $W$ as follows:
    $V\beta_a := \basis{}{\iota(a)}$ for $a\in T$, $V\beta_a:=0$ otherwise.
    $W\pb\paren{\basis{\Tilde\QQ}b\otimes\basis{\qq} c} := \pb\basis{\qq}{b\Vert c}$.

    Thus the result of applying \autoref{fig:circuit-qeq-u} to
    $\psi_1\otimes\psi_2$ is $U_1\psi_1\otimes U_2\psi_2$.

    Furthermore, the blue (left) part of \autoref{fig:circuit-u-qeq}
    is $U_1\otimes U_2$. And by definition of quantum equality,
    a state is in $ (U_S\otimes \id)\SSS_1\Tilde\QQ_1 \quanteq
    (U'_S\otimes\id)\SSS'_2\Tilde\QQ'_2$ iff
    it is invariant under the black (right) part of \autoref{fig:circuit-u-qeq}.

    Thus,  $U_1\psi_1\otimes U_2\psi_2 \in 
    (U_S\otimes \id)\SSS_1\Tilde\QQ_1 \quanteq
    (U'_S\otimes\id)\SSS'_2\Tilde\QQ'_2$
    iff the result of applying 
    \autoref{fig:circuit-u-qeq} to $\psi_1\otimes\psi_2$
    is   $U_1\psi_1\otimes U_2\psi_2$.

    Furthermore, note that the circuits in
    Figures~\ref{fig:circuit-qeq-u} and~\ref{fig:circuit-u-qeq}
    compute the same function (they are identical as networks of
    linear operations). Since the result of
    \autoref{fig:circuit-qeq-u} is $U_1\psi_1\otimes U_2\psi_2$,
    so is that of \autoref{fig:circuit-u-qeq}.
    Thus   $U_1\psi_1\otimes U_2\psi_2 \in 
    (U_S\otimes \id)\SSS_1\Tilde\QQ_1 \quanteq
    (U'_S\otimes\id)\SSS'_2\Tilde\QQ'_2$
    and hence $U_1\psi_1\otimes U_2\psi_2\in \Tilde\PA$.
  \end{claimproof}

  Since $\rhl{\Tilde\PA}\bc\bd{\Tilde\PB}$
  by assumption, \autoref{claim:in.tildeA} implies that there is a
  separable state $\rho''$
  such that $\partr{1}{}\rho'' = \denotc\bc\paren{\proj{U\psi_1}}$
  and $\partr{2}{}\rho'' = \denotc\bd\paren{\proj{U\psi_2}}$
  and $\suppo\rho''\subseteq\Tilde\PB$.
  Let
  $\rho':=\toE{\adj{U_1}\otimes\adj{U_2}}\pb\paren{\rho''}$.
  (Here $*$ denotes an arbitrary but fixed element of $\typev\qq$.)

  \begin{claim}\label{claim:rho'.sep}
    $\rho'$ is separable.
  \end{claim}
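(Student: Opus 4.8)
The plan is to observe that $\rho'$ is obtained from the separable state $\rho''$ by conjugation with the product operator $\adj{U_1}\otimes\adj{U_2}$, which acts as one bounded operator on the left memory $\VVall_1$ and a separate one on the right memory $\VVall_2$. Since such a purely ``local'' (one-sided on each side) transformation respects the $1$/$2$ bipartition, it preserves separability, and this is exactly what the claim asserts.

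Concretely, I would first recall that $\rho''$ being separable means it admits a trace-norm convergent decomposition $\rho'' = \sum_k \sigma_k \otimes \tau_k$ into products of positive trace-class operators, where each $\sigma_k$ lives on the $1$-indexed variables and each $\tau_k$ on the $2$-indexed variables. By construction $U_1$ acts only on $\VVall_1$ and $U_2$ only on $\VVall_2$, so $\adj{U_1}\otimes\adj{U_2}$ factors across the same bipartition. Unfolding the definition of $\toE\cdot$ and pushing the conjugation through the sum yields
\[
  \rho' = \toE{\adj{U_1}\otimes\adj{U_2}}\pb\paren{\rho''}
  = \pb\paren{\adj{U_1}\otimes\adj{U_2}}\,\rho''\,\pb\paren{U_1\otimes U_2}
  = \sum_k \pb\paren{\adj{U_1}\sigma_k U_1}\otimes\pb\paren{\adj{U_2}\tau_k U_2}.
\]

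Next I would note that $\adj{U_1}\sigma_k U_1$ is again a positive trace-class operator on $\VVall_1$ and $\adj{U_2}\tau_k U_2$ a positive trace-class operator on $\VVall_2$, because conjugation by any bounded operator sends positive operators to positive operators. Thus the display above exhibits $\rho'$ as a (trace-norm convergent) sum of products of positive operators across the $1$/$2$ bipartition, i.e.\ $\rho'$ is separable.

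The only point requiring a little care --- the main obstacle, such as it is --- is the convergence bookkeeping when the separability witness of $\rho''$ is given as an infinite sum or as a limit of finite separable operators. Here I would invoke continuity of the map $X\mapsto\pb\paren{\adj{U_1}\otimes\adj{U_2}}X\pb\paren{U_1\otimes U_2}$ with respect to the trace norm (valid since $U_1,U_2$ are bounded) to transport the limit, so that a witnessing decomposition of $\rho''$ is carried to a witnessing decomposition of $\rho'$. Beyond this, the argument is just the standard fact that separability is invariant under local operations, so no further subtlety is expected.
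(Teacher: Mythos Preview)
Your proposal is correct and follows essentially the same approach as the paper: the paper's proof is simply ``$\rho''$ is separable by definition. Thus $\rho'=\toE{\adj{U_1}\otimes\adj{U_2}}\pb\paren{\rho''}$ is separable.'' You have spelled out in detail the standard fact that the paper invokes implicitly, namely that applying a product operator across the $1/2$ bipartition preserves separability.
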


  \begin{claimproof}
    $\rho''$
    is separable by definition. Thus
    $\rho'=\toE{\adj{U_1}\otimes\adj{U_2}}\pb\paren{\rho''}$ is separable.
  \end{claimproof}

  \begin{claim}\label{claim:rho'.marginals}
    $\partr{1}{}\rho' = \denotc\bc\pb\paren{\proj{\psi_1}}$
    and
    $\partr{2}{}\rho' = \denotc\bd\pb\paren{\proj{\psi_2}}$.
  \end{claim}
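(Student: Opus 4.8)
The plan is to unfold the definition $\rho' = \toE{\adj{U_1}\otimes\adj{U_2}}(\rho'') = (\adj{U_1}\otimes\adj{U_2})\,\rho''\,(U_1\otimes U_2)$ and push the partial trace through, exploiting that $U_1$ acts only on the $1$-indexed variables $\QQ_1\Tilde\QQ_1\qq_1$ and $U_2$ only on the $2$-indexed variables $\QQ'_2\Tilde\QQ'_2\qq_2$. I will establish the first equation, $\partr1{}\rho' = \denotc\bc(\proj{\psi_1})$; the second is symmetric (with $U_2,\psi_2,\bd,S_2$ in place of $U_1,\psi_1,\bc,S_1$).

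First I would compute, using that $\adj{U_1}$ and $U_1$ act on the variables kept by $\partr1{}$ and hence commute with the trace over the $2$-indexed subsystem,
\[
  \partr1{}\rho' = \adj{U_1}\,\partr1{}\pb\paren{(\id\otimes\adj{U_2})\,\rho''\,(\id\otimes U_2)}\,U_1.
\]
By cyclicity of the trace over the $2$-indexed subsystem the inner expression equals $\partr1{}\pb\paren{(\id\otimes U_2\adj{U_2})\rho''}$, where $U_2\adj{U_2}$ is the projector onto $\im U$ (recall $U$ is only a partial isometry). The key point here is that this projector acts trivially, i.e.\ $(\id\otimes U_2\adj{U_2})\rho'' = \rho''$. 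This holds because the right marginal $\partr2{}\rho'' = \denotc\bd(\proj{U_2\psi_2})$ is supported in $\im U$: the vector $U_2\psi_2$ lies in $\im U$, and since $\fv\bd\cap\QQ'\Tilde\QQ'=\varnothing$ and $\qq$ is fresh, \autoref{lemma:fv} shows $\denotc\bd$ commutes with $U_2\adj{U_2}$ and preserves this support; the standard fact that a bipartite operator is supported in the full space tensored with the support of its marginal then gives $\suppo\rho''\subseteq\elltwov{(\text{$1$-indexed})}\otimes\im U$. Thus $\partr1{}\rho' = \adj{U_1}\,\partr1{}\rho''\,U_1 = \adj{U_1}\,\denotc\bc(\proj{U_1\psi_1})\,U_1$.

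It then remains to reduce this to $\denotc\bc(\proj{\psi_1})$. Since $\fv\bc\cap\QQ\Tilde\QQ=\varnothing$ and $\qq$ is fresh, \autoref{lemma:fv} again shows $U_1$ commutes with $\denotc\bc$, so $\denotc\bc(\proj{U_1\psi_1}) = U_1\,\denotc\bc(\proj{\psi_1})\,\adj{U_1}$ and the expression becomes $\adj{U_1}U_1\,\denotc\bc(\proj{\psi_1})\,\adj{U_1}U_1$. Now $\adj{U_1}U_1$ is the projector onto $(\ker U)^\bot$, which on the $\QQ_1$ tensor factor is exactly $S$ (as $\ker U$ is spanned by the $\beta_a$ with $a\notin I$). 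Since $\psi_1$ has $\QQ$-marginal supported in $S_1\subseteq S$ and $\bc$ leaves $\QQ_1$ untouched, $\denotc\bc(\proj{\psi_1})$ has $\QQ_1$-support in $S$, so $\adj{U_1}U_1$ acts as the identity and we obtain $\partr1{}\rho' = \denotc\bc(\proj{\psi_1})$.

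The main obstacle is exactly the bookkeeping forced by $U$ being a partial isometry rather than a unitary: both $U\adj U$ and $\adj U U$ are genuine (non-identity) projectors, and the whole computation hinges on checking that each acts trivially at the point where it appears. This reduces to two support conditions -- that $\rho''$ is supported in $\im U$ on the traced-out side, and that $\denotc\bc(\proj{\psi_1})$ is supported in $S$ on $\QQ_1$ -- both of which follow from the disjointness premises via \autoref{lemma:fv} together with the marginal-support fact.
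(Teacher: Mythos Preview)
Your argument is correct and follows essentially the same route as the paper's proof: both deduce the marginal support conditions $\suppo\rho''\subseteq\im U_1\otimes\elltwov{\VVall_2}$ and $\suppo\rho''\subseteq\elltwov{\VVall_1}\otimes\im U_2$ from the known marginals of $\rho''$, use the commutativity of $\denotc\bc$ with $\toE{U_1}$ (via \autoref{lemma:fv} and freshness of $\qq$), and then cancel $\adj{U_1}U_1$ against $\denotc\bc(\proj{\psi_1})$. Your treatment of the final cancellation is in fact cleaner than the paper's: the paper justifies $\toE{\adj{U_1}}\circ\toE{U_1}\circ\denotc\bc(\proj{\psi_1})=\denotc\bc(\proj{\psi_1})$ with the phrase ``$U_1$ is an isometry'', which is literally false since $U$ has nontrivial kernel, whereas you correctly identify that $\adj{U_1}U_1$ is the projector onto $S\otimes\elltwov{\Tilde\QQ\qq}$ on the $\QQ_1$ factor and that $\denotc\bc(\proj{\psi_1})$ is supported there because $\psi_1$ has $\QQ_1$-marginal in $S_1\subseteq S$ and $\bc$ does not touch $\QQ_1$.
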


  \begin{claimproof}
    Since $\fv{\bc}\cap\QQ\Tilde\QQ\qq=\varnothing$ (by assumption of the rule
    ad definition of $\qq$), $\denotc\bc$ and $\toE{U_1}$ commute.
    Thus
    \begin{equation}\label{eq:rho''.U1}
      \partr1{} \rho'' = \denotc\bc\circ\toE{U_1}(\proj{\psi_1})
      = \toE{U_1}\circ\denotc\bc(\proj{\psi_1}).
    \end{equation}
    Thus $\suppo\partr1{}\rho'' \subseteq \im U_1$.
    Thus $\suppo\rho''\subseteq\im U_1\otimes\elltwo{\VVall_2}$.
    
    Analogously,  $\suppo\rho''\subseteq\im \elltwo{\VVall_1}\otimes\im U_2$.

    Let $\tilde\rho :=\toE{\id\otimes\adj{U_2}}\pb\paren{\rho''}$. Since $\suppo\rho''\subseteq \elltwov{\VVall_1}\otimes\im U_2$,
    and $U_2^*$ is an isometry on $\im U_2$ (since $U_2$ is an isometry), $\partr1{}\tilde\rho=
    \partr1{}\rho''$.
    Then
    \begin{align*}
      \partr1{}\rho'
      &=
        \partr1{} \toE{\adj{U_1}\otimes\id}(\tilde\rho)
        =
        \toE{\adj{U_1}} \circ \partr1{} \tilde \rho
        =
        \toE{\adj{U_1}} \circ \partr1{} \rho''
      \\&
        \eqrefrel{eq:rho''.U1}=
        \toE{\adj{U_1}} \circ \toE{U_1} \circ \denotc\bc\paren{\proj{\psi_1}}
        \starrel=
        \denotc\bc\paren{\proj{\psi_1}}.
    \end{align*}
    Here $(*)$ uses that $U_1$ is an isometry.

    $\adj{U_2}$, restricted to $\elltwov{\VVall_1\VVall_2}\otimes\SPAN{\basis{\qq_2}*}$, is an isometry.
    Thus $\partr{1}{}\tilde\rho = \partr{1}{}\rho''\otimes\proj{\basis{\qq_1}{*}}$.
    By definition of $\rho''$, we have $\partr{1}{}\rho'' = \denotc\bc\proj{U_1\psi_1}$.
    Thus $\partr{1}{}\tilde\rho = \denotc\bc\proj{U_1\psi_1} \otimes\proj{\basis{\qq_1}{*}}$

    $\partr{2}{}\rho' = \denotc\bd\pb\paren{\proj{\psi_2}}$ is shown analogously.
  \end{claimproof}

  \begin{claim}\label{claim:Ustar.TildeB}
    $\paren{\adj{U_1}\otimes\adj{U_2}}\Tilde\PB \subseteq \PB^*$.
  \end{claim}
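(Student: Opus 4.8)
The plan is to prove \autoref{claim:Ustar.TildeB} in direct analogy with \autoref{claim:in.tildeA}, but now reading the relevant circuit identity in the adjoint direction and tracking the \emph{post}conditions rather than the preconditions. Fix $\chi\in\Tilde\PB$; the goal is to show $(\adj{U_1}\otimes\adj{U_2})\chi\in\PB^*$. Since both $\Tilde\PB$ and $\PB^*$ are intersections of the same plain predicate $\PB$ with a quantum equality, it suffices to treat the two conjuncts separately: first that $\adj{U_1}\otimes\adj{U_2}$ maps $\PB$ into $\PB$, and second that it maps the quantum equality $(U_R\otimes\id)\RR_1\Tilde\QQ_1\quanteq(U'_R\otimes\id)\RR'_2\Tilde\QQ'_2$ of $\Tilde\PB$ into the quantum equality $(U_R\otimes\id)\RR_1\QQ_1\quanteq(U'_R\otimes\id)\RR'_2\QQ'_2$ of $\PB^*$.

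For the $\PB$ conjunct I would argue exactly as in the first paragraph of the proof of \autoref{claim:in.tildeA}: by the rule premise $\fv\PA,\fv\PB\cap\QQ_1\QQ'_2\Tilde\QQ_1\Tilde\QQ'_2=\varnothing$ together with the freshness of $\qq$, the registers on which $U_1,U_2$ act are disjoint from $\fv\PB$. Hence $\PB$ factors as a tensor product with the full space on those registers, and $\adj{U_1}\otimes\adj{U_2}$, acting only on that second factor, leaves $\PB$ invariant; so $\chi\in\PB$ yields $(\adj{U_1}\otimes\adj{U_2})\chi\in\PB$.

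For the quantum-equality conjunct I would use the two circuits in \autoref{fig:circuit-qeq-ustar} and \autoref{fig:circuit-ustar-qeq}. By the definition of the quantum equality as invariance under the swap conjugated by the operators, $\chi\in\Tilde\PB$ means $\chi$ is fixed by the swap realised in the black part of \autoref{fig:circuit-qeq-ustar}, while a state lies in the equality of $\PB^*$ exactly when it is invariant under the swap in the black part of \autoref{fig:circuit-ustar-qeq}; in each figure the blue part realises $\adj{U_1}\otimes\adj{U_2}$ through the elementary gates $V^*$, $W^*$ and the projector $\proj{\beta_0}$. The two figures are the same network of linear maps drawn in the two possible orders, so they compute the same operator. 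Writing $A:=\adj{U_1}\otimes\adj{U_2}$, $S'$ for the black swap of the first figure and $S$ for that of the second, this identity reads $A\circ S'=S\circ A$. Since $S'\chi=\chi$, I obtain $A\chi=A S'\chi=S A\chi$, so $A\chi$ is invariant under $S$ and therefore lies in the quantum equality of $\PB^*$. Intersecting with the $\PB$ conjunct gives $A\chi\in\PB^*$, which is the claim.

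The main obstacle is the circuit identity $A\circ S'=S\circ A$, i.e.\ the assertion that \autoref{fig:circuit-qeq-ustar} and \autoref{fig:circuit-ustar-qeq} compute the same function. Here $U$ is not a bare swap but the isometry assembled from the injection $\iota$, the map $V$, and the concatenation $W$, so one must verify gate by gate that pushing $A$ past the swap merely relabels which register (namely $\QQ$ versus $\Tilde\QQ$) the swap acts on. This is precisely the adjoint counterpart of the commutation already exploited for the preconditions in \autoref{claim:in.tildeA}, so I expect it to succumb to the same network inspection, but it is the step that genuinely needs care.
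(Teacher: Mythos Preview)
Your proposal is correct and follows essentially the same approach as the paper's own proof: both handle the $\PB$ conjunct via the free-variable disjointness and the quantum-equality conjunct via the circuit identity between \autoref{fig:circuit-qeq-ustar} and \autoref{fig:circuit-ustar-qeq}. Your algebraic reformulation $A\circ S'=S\circ A$ makes explicit what the paper phrases as ``the result of applying \autoref{fig:circuit-ustar-qeq} to $\psi$ is $(\adj{U_1}\otimes\adj{U_2})\psi$'', and the paper, like you, dispatches the circuit identity itself by asserting that the two figures are identical as networks of linear operations.
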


  \begin{claimproof}
    Fix $\psi\in\Tilde\PB$.
    Then  $\psi\in\PB$.
    Since $\fv\PB\cap\QQ_1\QQ'_2\Tilde\QQ_1\Tilde\QQ'_2\qq_1\qq_2=\varnothing$  (by assumption of the rule
    ad definition of $\qq$),
    and $U_1,U_2$ operate on $\QQ_1\QQ_2\Tilde\QQ'_1\Tilde\QQ'_2\qq_1\qq_2$,
    we have $\paren{\adj{U_1} \otimes \adj{U_2}}\psi\in\PB$.

    Since $\psi\in\Tilde\PB$,
    we have
    $\psi\in (U_R\otimes \id)\RR_1\Tilde\QQ_1 \quanteq
    (U'_R\otimes\id)\RR'_2\Tilde\QQ'_2$.
    By definition of the quantum
    equality, this means that $\psi$
    is invariant under the quantum circuit depicted in the black (left) part of
    \autoref{fig:circuit-qeq-ustar}.

    Furthermore, the blue (right) part of \autoref{fig:circuit-qeq-ustar} is an
    application of $\adj{U_1}\otimes \adj{U_2}$ if we define $V$, $W$ as in
    the proof of \autoref{claim:in.tildeA}.

    Thus the result of applying \autoref{fig:circuit-qeq-ustar} to
    $\psi$ is $\paren{\adj{U_1}\otimes \adj{U_2}}\psi$.

    Furthermore, the blue (left) part of \autoref{fig:circuit-u-qeq}
    is $\adj{U_1}\otimes\adj{ U_2}$. And by definition of quantum equality,
    a state is in $ (U_R\otimes \id)\RR_1\QQ_1 \quanteq
    (U'_R\otimes\id)\RR'_2\QQ'_2$ iff
    it is invariant under the black (right) part of \autoref{fig:circuit-ustar-qeq}.

    Thus,  $\paren{\adj{U_1}\otimes\adj{U_2}}\psi \in 
    (U_R\otimes \id)\RR_1\QQ_1 \quanteq
    (U'_R\otimes\id)\RR'_2\QQ'_2$
    iff the result of applying 
    \autoref{fig:circuit-ustar-qeq} to $\psi$
    is $\paren{\adj{U_1}\otimes\adj{U_2}}\psi$.

    Furthermore, note that the circuits in
    Figures~\ref{fig:circuit-qeq-ustar} and~\ref{fig:circuit-ustar-qeq}
    compute the same function (they are identical as networks of
    linear operations). Since the result of
    \autoref{fig:circuit-qeq-u} is $\paren{\adj{U_1}\otimes\adj{U_2}}\psi$,
    so is that of \autoref{fig:circuit-ustar-qeq}.
    Thus   $\paren{\adj{U_1}\otimes\adj{U_2}}\psi \in 
    (U_R\otimes \id)\RR_1\QQ_1 \quanteq
    (U'_R\otimes\id)\RR'_2\QQ'_2$
    and hence $\paren{\adj{U_1}\otimes\adj{U_2}}\psi\in \PB^*$.
  \end{claimproof}
  
  \begin{claim}\label{claim:rho'.supp}
    $\suppo\rho'\subseteq\PB^*$.
  \end{claim}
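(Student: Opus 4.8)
The plan is to combine the containment already established in \autoref{claim:Ustar.TildeB} with the general principle that conjugating a positive operator by a bounded operator cannot push its support outside the (closure of the) image of that operator. Recall that by construction $\rho' = \toE{\adj{U_1}\otimes\adj{U_2}}\pb\paren{\rho''} = \paren{\adj{U_1}\otimes\adj{U_2}}\rho''\paren{U_1\otimes U_2}$, and that $\rho''$ was chosen so that $\suppo\rho''\subseteq\Tilde\PB$.

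First I would record the elementary fact that for any positive trace-class operator $M$ and any bounded operator $A$ we have $\suppo\pb\paren{A M \adj A}\subseteq \overline{A(\suppo M)}$. This holds because, writing $P$ for the projector onto $\suppo M$, we have $M = PMP$, hence $A M \adj A = (AP)\,M\,\adj{(AP)}$, and the support of a conjugate $(AP)\,M\,\adj{(AP)}$ is contained in $\overline{\im(AP)} = \overline{A(\suppo M)}$. Applying this with $A := \adj{U_1}\otimes\adj{U_2}$ and $M := \rho''$, and using $\suppo\rho''\subseteq\Tilde\PB$, yields
\[
  \suppo\rho' \subseteq \overline{\paren{\adj{U_1}\otimes\adj{U_2}}\paren{\suppo\rho''}} \subseteq \overline{\paren{\adj{U_1}\otimes\adj{U_2}}(\Tilde\PB)}.
\]
By \autoref{claim:Ustar.TildeB}, $\paren{\adj{U_1}\otimes\adj{U_2}}(\Tilde\PB)\subseteq\PB^*$, and since $\PB^*$ is a topologically closed subspace it equals its own closure, so $\overline{\paren{\adj{U_1}\otimes\adj{U_2}}(\Tilde\PB)}\subseteq\PB^*$. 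Combining the two inclusions gives $\suppo\rho'\subseteq\PB^*$, which is the claim.

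I do not expect a genuine obstacle here, since the substantive geometric work (relating the swap circuits of Figures~\ref{fig:circuit-qeq-ustar} and~\ref{fig:circuit-ustar-qeq} and tracking how $\adj{U_1}\otimes\adj{U_2}$ interacts with the quantum equality) has already been carried out in \autoref{claim:Ustar.TildeB}. The only point requiring a little care is the passage through closures: in infinite dimensions $\im(AP)$ need not be closed, so the support bound must be stated with an $\overline{\phantom{A}}$, and the argument only goes through because $\PB^*$ is closed and therefore absorbs the closure at the final step.
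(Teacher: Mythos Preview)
Your proof is correct and follows essentially the same route as the paper: use $\suppo\rho''\subseteq\Tilde\PB$, push the support through the conjugation by $\adj{U_1}\otimes\adj{U_2}$, and finish with \autoref{claim:Ustar.TildeB}. The paper's version is terser and omits the closure bar you carefully carry through; your observation that $\PB^*$ is closed (so the closure can be absorbed at the last step) is a genuine technical refinement over the paper's presentation, but the underlying argument is the same.
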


  \begin{claimproof}
    By definition of $\rho''$, we have $\suppo\rho''\subseteq\Tilde\PB$. 
    Furthermore, $\rho'=\toE{\adj{U_1}\otimes\adj{U_2}}\paren{\rho''}$ by definition.
    Thus
    \[
      \suppo\rho' \subseteq \paren{\adj{U_1}\otimes\adj{U_2}} \suppo\rho''
      \subseteq \paren{\adj{U_1}\otimes\adj{U_2}} \Tilde\PB
      \starrel\subseteq \PB^*.
    \]
    Here $(*)$ is by \autoref{claim:Ustar.TildeB}.
  \end{claimproof}
  
  Since $\psi_1\otimes\psi_2\in\PA^*$ was arbitary,
  from Claims~\ref{claim:rho'.sep}, \ref{claim:rho'.marginals} and~\ref{claim:rho'.supp},
  we immediately get $\rhl{\PA^*}\bc\bd{\PB^*}$.
\end{proof}

\section{Adversary rule}
\label{sec:adv.rule}

\newcommand\aux{\qq_\mathbf{aux}}
\newcommand\Vmid{\VV\!_\mathit{mid}}
\newcommand\Vin{\VV\!_\mathit{in}}
\newcommand\Vout{\VV\!_\mathit{out}}
\newcommand\VR{\VV\!_R}
\newcommand\Eq[1]{{\equiv}#1}

Auxiliary notation: $\equiv\VV$ stands for $\qu{\VV_1}\quanteq\qu{\VV_2}\cap
\CL{\cl{\VV_1}=\cl{\VV_2}}$.

\ERULE{Adversary}{
  R\text{ is a predicate}
  \\
  s_i,s_i'\text{ are programs}
  \\
  C\text{ is a multi-hole context}
  \\
  \Vin,\Vmid,\Vout\text{ are finite sets of variables}
  \\
  \aux\text{ is a quantum variable}
  \\
  \VR := \{\xx: \xx_1\in R \vee \xx_2\in R\}
  \\
  \abs{\typev\aux}=\infty
  \\
  \aux\notin\VR
  \\
  \forall i. \aux\notin \fv{s_i}
  \\
  \forall i. \aux\notin \fv{s'_i}
  %
  \\
  \aux \in \Vmid
  \\
  \inner C \subseteq \Vmid
  \\
  \Vout \subseteq \Vmid
  \\
  \Vout \setminus \overwr C \subseteq \Vin
  \\
  \paren{\Vout \setminus \Vin} \cap \VR = \varnothing
  \\
  \qu{\Vin} \subseteq \Vout \cup \overwr C
  \\
  \qu{\paren{\Vin \setminus \Vout}} \cap \VR = \varnothing
  \\
  \Vmid \cap \VR \subseteq \Vin \cup \covered C
  \\
  \qu{\paren{\Vmid\cap\VR}} \subseteq \Vout \cup \covered C
  \\
  \forall i. \Vmid \cap \pb\paren{\fv{s_i}\cup\fv{s'_i}} \subseteq
  \Vin \cup \covered C \cup \cl{\pb\paren{\overwr{s_i}\cap\overwr{s'_i}}}
  \\
  \forall i. \qu\Vmid \cap \pb\paren{\fv{s_i}\cup\fv{s'_i}} \subseteq
  \Vout \cup \covered C
  \\
  \fv C \subseteq \Vmid
  \\
  \fv C \subseteq \Vin \cup \overwr C
  \\
  \fv C \cap \VR \subseteq \Vin
  \\
  \qu{\fv C} \subseteq \Vout
  \\
  \VR \cap \inner C = \varnothing
  \\
  \VR \cap \written C = \varnothing
  \\
  \forall i. \rhl{R \cap \Eq{\VV_\mathit{mid}}}
  {s_i}{s'_i}
  {R \cap \Eq{\VV_\mathit{mid}}}
}{
\rhl{R \cap \Eq{\Vin}}
  {C[s_1,\dots,s_n]}
  {C[s'_1,\dots,s'_n]}
  {R \cap \Eq{\Vout}}
}

This is shown in Isabelle/HOL, as \verb|adversary_rule| in theory
\verb|Adversary_Rule.thy|. See \autoref{sec:isabelle-proofs} for
remarks about our Isabelle/HOL development.

{\TODOQ{Proof sketch}}

We have implemented this rule in the \texttt{equal}%
\index{equal@\texttt{equal} (tactic)}%
\pagelabel{page:tactic:equal}
tactic in
\texttt{qrhl-tool}.\footnote{The tactic is called \texttt{equal} because
  in its most basic form, it reasons about two programs that are
  identical (e.g., the same adversary invocation in the left and right
  program). In more advanced situations, the two programs can differ
  but the idea is still that this tactic can be applied when the last
  statement on the left/right side are ``mostly equal''.}  The tactic
allows to apply this rule to the last statement on the left/right side
(or suffix consisting of several statements, if the user chooses). The
sets $\VV_\mathit{in},\VV_\mathit{mid},\VV_\mathit{out}$ can be used
specified but the tactic makes a best effort attempt to find the
minimum sets $\VV_\mathit{in},\VV_\mathit{mid},\VV_\mathit{out}$ that
make all the premises of the rule true. The tactic also automatically
rewrites the postcondition into the form $R \cap \Eq{\Vout}$ in a way
that makes $R$ as weak as possible. For more details about the
\texttt{equal} tactic, see the manual of~\cite{qrhl-tool}.

In many situations, the tactic makes it very easy to apply the
\rulerefx{Adversary} rule. A manual application would be very
inconvenient because there are many technical conditions that need to
be checked. Yet, in our case study \cite{pqfo-verify}, in many
situations no arguments to \texttt{equal} are needed at all, and when
they were needed, it was only to specify the quantum variables in
$\VV_\mathit{mid}$ to control which variables occur (and in what
order) in the subgoals
$\rhl{R \cap \Eq{\VV_\mathit{mid}}} {s_i}{s'_i} {R \cap
  \Eq{\VV_\mathit{mid}}}$ for the mismatches $s_i,s_i'$.

\TODOQ{Update reference to pqfo-paper}

\printbibliography

\renewcommand\symbolindexentry[4]{
  \noindent\hbox{\hbox to 2in{$#2$\hfill}\parbox[t]{3.5in}{#3}\hbox to 1cm{\hfill #4}}\\[2pt]}

\printsymbolindex

\printindex

\end{document}
